\DeclareMathOperator{\minor}{minor}
\DeclareMathOperator{\const}{const}
\newtheorem{theorem}{Theorem}
\newtheorem{lemma}{Lemma}
\theoremstyle{remark}
\newtheorem{remark}{Remark}
\author{S.I. Bel'kov, I.G. Korepanov, E.V. Martyushev}
\title[A simple TQFT for manifolds with triangulated boundary]{A simple topological quantum field theory for manifolds with triangulated boundary}
\date{}
\keywords{Topological quantum field theory, pentagon equation, state sum, renormalization, algebraic complex, torsion}
\begin{document}

\begin{abstract}
We construct a simple finite-dimensional topological quantum field theory for compact 3-manifolds with triangulated boundary.
\end{abstract}

\maketitle

{ \small
 \tableofcontents
 }

\section{Introduction}

\subsection{Atiyah's axioms for TQFT}

The concept of a topological quantum field theory (TQFT) has its physical and mathematical aspects. In theoretical physics, its role is mainly seen as a theory of quantum gravity, although such or similar theory may be relevant also for some other physical ``gauge'' fields. And mathematically, a TQFT deals with topological invariants of a tensor or similar nature attributed to manifolds with boundary. These invariants must satisfy some properties formalized as axioms in works of M.~Atiyah~\cite{atiyah,atiyah-book}.

The main idea in Atiyah's axioms is that, if manifolds are glued together over some components of their boundaries, a composition of the corresponding invariants, such as tensor convolution, is taken for the result of gluing. This comes naturally from physics and reflects, in a general form, properties of quantum scattering amplitudes.

Here we describe a simple finite-dimensional (involving no functional integrals) TQFT of such kind for compact 3-dimensional manifolds with boundary. Our theory deals with anticommuting (Grassmann) variables attributed to edges of a manifold triangulation. We note that this corresponds to a modification of Atiyah's axioms explicitly mentioned by himself\footnote{``the vector spaces\ \dots\ may be mod~2 graded with appropriate signs then inserted'' --- \cite[\S\,2]{atiyah}}.

\subsection{Pachner moves and manifold invariants}
\label{subsec:pachner_moves}

The topological invariants in our theory are calculated out of a given manifold triangulation. If the boundary of a manifold is empty, then, to ensure that some value is a topological invariant\footnote{which is in three dimensions the same as piecewise-linear invariant~\cite{PL_and_Top}}, it is enough to prove its invariance under \emph{Pachner moves}. Recall that there are four Pachner moves in three dimensions: $2\leftrightarrow 3$ and $1\leftrightarrow 4$, see, for instance,~\cite{pachner_moves}.

The most interesting is, however, the case of a manifold with boundary. A triangulation of such manifold induces then a triangulation of the boundary. Our invariants will be constructed for a \emph{given} boundary triangulation, i.e., they do not depend oh a manifold triangulation provided it induces the given fixed triangulation of the boundary. In this case, the transition between different triangulations of the interior is achieved by \emph{relative} Pachner moves --- moves not involving the boundary. This has been explained in detail in~\cite{dkm}; the specific sort of boundary dealt with in~\cite{dkm} (specially triangulated torus) plays practically no role for the reasoning, which is directly generalized to the case of a general boundary.

\subsection{Organization}

Below, in section~\ref{sec:pentagon} we present a simple solution to pentagon equation (an algebraic relation corresponding to Pachner move~$2\to 3$) built of anticommuting variables. This already provides a set of topological invariants in some simple cases. The general situation requires, however, a more profound approach, based on algebraic (chain) complexes. So we give first, in section~\ref{sec:complexes}, the direct description of these complexes with all formulas needed for calculations, and then, we explain in section~\ref{sec:macroscopic_complexes} the ideas behind these formulas.

The resulting invariants are defined in section~\ref{sec:inv}, and then we explain in section~\ref{sec:genfun} how they are united in a ``generating function'' of anticommuting variables.

As we stated already, our invariants are constructed for a given boundary triangulation. So, in section~\ref{sec:boundary_change} we provide formulas answering the natural question of how they are changed under a change of boundary triangulation. We also prove in this section a lemma showing in which exactly cases the simplest invariants of section~\ref{sec:pentagon} work and how they are related to our more general approach.

The next section~\ref{sec:gluing} is central for justifying the name ``TQFT'' for our theory: in it, we give the formula for composition of our generating functions under the gluing of manifolds over a component of their boundaries. As a simple application of this, we study invariants for connected sums of manifolds in section~\ref{sec:connected_sums}.

In section~\ref{sec:examples} we provide some example calculations. Finally, we discuss our results in section~\ref{sec:discussion}.

\section{Solution to pentagon equation with anticommuting variables}
\label{sec:pentagon}

\subsection{Grassmann algebras and Berezin integral}

Recall~\cite{B} that \emph{Grassmann algebra} over field~$\mathbb R$ or~$\mathbb C$ is an associative algebra with unity, having generators~$a_i$ and relations
$$
a_i a_j = -a_j a_i, \quad \textrm{including} \quad a_i^2 =0.
$$
Thus, any element of a Grassmann algebra is a polynomial of degree $\le 1$ in each~$a_i$.

The \emph{Berezin integral}~\cite{B} in a Grassmann algebra is defined by equalities
\begin{equation}
\int da_i =0, \quad \int a_i\, da_i =1, \quad \int gh\, da_i = g \int h\, da_i,
\label{integral_Berezina}
\end{equation}
if $g$ does not depend on~$a_i$ (that is, generator $a_i$ does not enter the expression for~$g$); multiple integral is understood as iterated one.

\subsection{Solution to pentagon equation}

Consider a tetrahedron with vertices $i_1,i_2,\allowbreak i_3,i_4$, and let also this order of vertices (taken up to even permutations) determine its orientation. We will call such oriented tetrahedron simply ``tetrahedron~$i_1i_2i_3i_4$''.

\emph{Pentagon equation} is the name used by us, in a slightly informal way, for any algebraic relation which can be said to correspond naturally to a Pachner move~$2\to 3$. If such quantities are put in correspondence to the simplices in its l.h.s.\ and r.h.s.\ that this relation holds true, we say that a solution to pentagon equation has been found.

We introduce a complex parameter~$\zeta_i$ for every vertex~$i$, called its ``coordinate''. These parameters are arbitrary, with the only condition that any two different vertices~$i\ne j$ have different coordinates~$\zeta_i\ne\zeta_j$. We will also use the notation
$$
\zeta_{ij} \stackrel{\rm def}{=} \zeta_i - \zeta_j.
$$
Then, we put in correspondence to any unoriented edge~$ij$ a Grassmann generator~$a_{ij}=a_{ji}$, and to an oriented tetrahedron~$i_1i_2i_3i_4$ --- its \emph{generating function}
\begin{multline}
\mathbf f_{i_1i_2i_3i_4} \stackrel{\rm def}{=} \zeta_{i_1i_2}\zeta_{i_3i_4}(a_{i_1i_2}+a_{i_3i_4}) - \zeta_{i_1i_3}\zeta_{i_2i_4}(a_{i_1i_3}+a_{i_2i_4})\\ + \zeta_{i_1i_4}\zeta_{i_2i_3}(a_{i_1i_4}+a_{i_2i_3})
\label{fbf}
\end{multline}
The reason for the name generating function will be seen in section~\ref{sec:genfun}. We could also write $\mathbf f_{i_1i_2i_3i_4} = \mathbf f_{i_1i_2i_3i_4}(a_{i_1i_2},a_{i_1i_3},a_{i_1i_4},a_{i_2i_3},a_{i_2i_4},a_{i_3i_4})$ to emphasize that $\mathbf f_{i_1i_2i_3i_4}$ depends on these Grassmann variables.

\begin{theorem}
\label{th:fffff}
The function $\mathbf f_{i_1i_2i_3i_4}$ defined by~\eqref{fbf} satisfies the following pentagon equation (dealing with two tetrahedra $1234$ and~$5123$ in its l.h.s.\ and three tetrahedra $1254$, $2354$ and~$3154$ in its r.h.s.):
\begin{equation}
\mathbf f_{1234} \mathbf f_{5123} = \frac{1}{\zeta_{45}^2} \int \mathbf f_{1254} \mathbf f_{2354} \mathbf f_{3154} \, da_{45} .
\label{fffff}
\end{equation}
\end{theorem}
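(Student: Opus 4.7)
Both sides of~\eqref{fffff} are Grassmann polynomials of degree two in the nine edge variables $a_{12},a_{13},a_{14},a_{15},a_{23},a_{24},a_{25},a_{34},a_{35}$; on the right this holds because a triple product of Grassmann-linear forms, reduced by a Berezin integration against $a_{45}$, has effective degree $3-1=2$. The strategy is simply to expand each side in monomials $a_e a_{e'}$ and verify that the corresponding coefficients coincide.

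First I would reduce the right-hand side. Each factor $\mathbf f_{ij54}$ is linear in $a_{45}$: the term $\zeta_{ij}\zeta_{54}(a_{ij}+a_{54})$ contributes the $a_{45}$-part with coefficient $\zeta_{ij}\zeta_{54}=-\zeta_{ij}\zeta_{45}$. Writing $\mathbf f_{ij54}=A_k+B_k a_{45}$ for $k=1,2,3$ and using $a_{45}^2=0$, one moves $a_{45}$ to the right through the odd-degree $A_k$'s (each push costs a sign) to obtain
$$\mathbf f_{1254}\mathbf f_{2354}\mathbf f_{3154}=A_1A_2A_3+\bigl(B_1A_2A_3-B_2A_1A_3+B_3A_1A_2\bigr)\,a_{45}.$$
Berezin integration extracts the bracket, and the overall factor $\zeta_{45}^{-2}$ absorbs the $\zeta_{45}$ sitting in each $B_k$, leaving an explicit quadratic Grassmann polynomial to be matched against $\mathbf f_{1234}\mathbf f_{5123}$.

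A key simplification is the $\mathbb Z_3$ symmetry generated by the even permutation $(1\,2\,3)$: on the right it cyclically permutes the three tetrahedra $\mathbf f_{1254}\to\mathbf f_{2354}\to\mathbf f_{3154}\to\mathbf f_{1254}$, and on the left, since both $(1,2,3,4)\to(2,3,1,4)$ and $(5,1,2,3)\to(5,2,3,1)$ are even rearrangements, it fixes $\mathbf f_{1234}$ and $\mathbf f_{5123}$. It therefore suffices to check one representative monomial per $\mathbb Z_3$-orbit. The monomials group naturally according to how many of their two edges touch the distinguished vertices $4$ and $5$; in each class the required coefficient identity reduces to the four-term Pl\"ucker-type relation
$$\zeta_{ij}\zeta_{kl}-\zeta_{ik}\zeta_{jl}+\zeta_{il}\zeta_{jk}=0\qquad(i,j,k,l\ \text{distinct}),$$
together with elementary manipulations of the differences $\zeta_{ij}=\zeta_i-\zeta_j$.

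The main obstacle is not conceptual but organizational: keeping the Grassmann signs straight while extracting the $a_{45}$-coefficient (in particular, the sign incurred by commuting $a_{45}$ past the single odd factor $A_2$ in the middle step), and grouping the roughly $36$ candidate degree-two monomials efficiently via the $\mathbb Z_3$-symmetry so that each surviving case collapses to a one-line application of the Pl\"ucker identity above.
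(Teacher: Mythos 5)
Your proposal is, in essence, the same brute-force verification that the paper simply delegates to a computer: expand both sides as quadratic Grassmann polynomials and match coefficients, with your $\mathbb Z_3$-symmetry and Pl\"ucker-relation bookkeeping serving only to organize the hand check, and your Berezin reduction of the right-hand side (including the signs from pushing $a_{45}$ past the odd factors $A_k$) is correct. One small inaccuracy to watch: each term of the extracted bracket $B_1A_2A_3-B_2A_1A_3+B_3A_1A_2$ carries only \emph{one} explicit factor $\zeta_{45}$ (from its single $B_k$), so the prefactor $\zeta_{45}^{-2}$ is not absorbed termwise --- the second power of $\zeta_{45}$ emerges only after combining terms, since each $A_k$ also contains $\zeta_{54}$ in exactly one of its coefficients --- which makes the per-monomial checks slightly less ``one-line'' than advertised but does not affect the validity of the coefficient-by-coefficient verification.
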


\begin{proof}
Formula~\eqref{fffff} can be proven, e.g., by a computer calculation.
\end{proof}

\begin{remark}
\label{rem:45}
The special role of edge~$45$ in~\eqref{fffff}, manifested in the factor $1/\zeta_{45}^2$ and integration in~$da_{45}$, corresponds obviously to the fact that~$45$ is the only inner edge among the ten edges of the r.h.s.\ tetrahedra.
\end{remark}

\subsection{A tentative state-sum invariant and the need for renormalization}
\label{subsec:renorm}

If there is a triangulated oriented manifold~$M$ with boundary, then one can construct the following function of anticommuting variables~$a_{ij}$ living on \emph{boundary} edges (and parameters~$\zeta_i$ in vertices):
\begin{equation}
\frac{1}{\prod\nolimits'\zeta_{ij}^2}\, \int \prod \mathbf f_{klmn} \prod\nolimits' da_{ij},
\label{s}
\end{equation}
where each of the two dashed products goes over all \emph{inner} edges~$ij$, while the remaining product --- over all oriented tetrahedra~$klmn$. As no preferred order of functions~$\mathbf f_{klmn}$ or differentials~$da_{ij}$ is fixed, the expression~\eqref{s} is determined up to an overall sign. It is a quite obvious consequence from theorem~\ref{th:fffff} and remark~\ref{rem:45} that \eqref{s} is at least invariant under all Pachner moves~$2\leftrightarrow 3$ not changing the boundary.

It turns out that~\eqref{s} is already, in some cases, a working multicomponent (that is, incorporating many coefficients at various monomials in anticommuting variables) invariant. We will call it in this paper the \emph{state sum} for manifold~$M$; from a physical viewpoint, the anticommuting variables mean that this is a state sum of \emph{fermionic} nature. There turn out to be, however, two difficulties with direct application of~\eqref{s}:
\begin{itemize}
\item if the triangulation has at least one inner (not boundary) vertex, \eqref{s} yields zero,
\item if the boundary of a connected manifold has more than one connected component, \eqref{s} also yields zero,
\end{itemize}
as we will show in lemma~\ref{lemma:t_i}.

It turns out that the \emph{renormalization} of state sum~\eqref{s}, leading to richer results, is achieved by introducing new variables, united in an algebraic (chain) complex.

\section{Algebraic complexes: explicit formulas for calculations}
\label{sec:complexes}

We consider a three-dimensional compact oriented manifold~$M$ with boundary~$\partial M$. Let it also be connected; otherwise, the following constructions can be done for each of its components separately. Our aim is to present (below in section~\ref{sec:inv}) a set of invariants, constructed for the given \emph{boundary} triangulation and depending on complex variables~$\zeta_i$ assigned to each boundary vertex~$i$; every individual invariant from the set corresponds to an ordered set~$\mathcal D$ of ``marked'' boundary edges. We also assume the following technical condition: the number of triangulation vertices in any connected component of~$\partial M$ is~$\ge 4$, unless the contrary is stated explicitly.

In this section, we present the formulas defining our algebraic complexes in the explicit form: essentially, as a sequence of five matrices~$f_1,\dots,f_5$. These formulas are well suited for computer calculations, although their form can hardly explain how they were found and for what reason our sequence~\eqref{complex_formal} of vector spaces and linear mappings is indeed an algebraic complex. This is explained in the next section~\ref{sec:macroscopic_complexes}.

Our invariants come out from algebraic (chain) complexes of  the following form\footnote{Some algebraic complexes of such kind have been already written out in~\cite[formulas~(29), (32), (49)]{kkm}. The main new feature of our complex~\eqref{complex} is that it works also for multicomponent boundary, which is due to introducing new quantities --- boundary component sways.}:
\begin{equation}
0 \to \mathbb C^3 \stackrel{f_1}{\to} \mathbb C^{N'_0+3m} \stackrel{f_2}{\to} \mathbb C^{N_3} \stackrel{f_3}{\to} \mathbb C^{N'_0+N_3} \stackrel{f_4}{\to} \mathbb C^{2N'_0+3m} \stackrel{f_5}{\to} \mathbb C^3 \to 0.
\label{complex_formal}
\end{equation}
Here $N'_0$ is the number of \emph{inner} vertices in the triangulation; $N_3$ is the number of all tetrahedra; $m$~is the number of connected components in~$\partial M$. We consider each vector space in~\eqref{complex_formal} as consisting of column vectors of the height equal to the exponent at~$\mathbb C$; all vector spaces have thus natural \emph{distinguished bases} consisting of vectors with one coordinate unity and all other --- zero (e.g., basis in~$\mathbb C^3$ consists of $\left(\begin{smallmatrix}1\\0\\0\end{smallmatrix}\right)$, $\left(\begin{smallmatrix}0\\1\\0\end{smallmatrix}\right)$ and~$\left(\begin{smallmatrix}0\\0\\1\end{smallmatrix}\right)$). We define linear mappings~$f_1,\dots,f_5$ --- which we identify with their matrices --- as follows.

\subsubsection*{Matrix $f_1$}

We denote a typical vector in the first nonzero space, $\mathbb C^3$, as~$\left( \begin{smallmatrix} da\\ db\\ dc \end{smallmatrix} \right)$; here and below the differential sign~$d$ is due to the differential nature of our vectors explained below in section~\ref{sec:macroscopic_complexes}. A typical vector in the next space, $\mathbb C^{N'_0+3m}$, is a column consisting of differentials~$dz_i$ living in each inner triangulation vertex~$i$, and also subcolumns~$\left( \begin{smallmatrix} ds_{\kappa}^{(a)}\\ ds_{\kappa}^{(b)}\\ ds_{\kappa}^{(c)} \end{smallmatrix} \right)$ living on each connected component~$\kappa$ of~$\partial M$ --- we call such subcolumn (infinitesimal) \emph{sway} of component~$k$, see explanation in section~\ref{sec:macroscopic_complexes}. The action of matrix $f_1$ gives, by definition:
\begin{equation}
dz_i = \begin{pmatrix}2\zeta_i & 1 & -\zeta_i^2 \end{pmatrix} \begin{pmatrix} da\\ db\\ dc \end{pmatrix} ;
\qquad \begin{pmatrix} ds_{\kappa}^{(a)}\\ ds_{\kappa}^{(b)}\\ ds_{\kappa}^{(c)} \end{pmatrix} = \begin{pmatrix} da\\ db\\ dc \end{pmatrix} .
\label{f1}
\end{equation}
In other words, $f_3$ consists of submatrices $\begin{pmatrix}2\zeta_i & 1 & -\zeta_i^2 \end{pmatrix}$ and~$\left( \begin{smallmatrix} 1&0&0 \\ 0&1&0 \\ 0&0&1 \end{smallmatrix} \right)$.

\subsubsection*{Matrix $f_2$}

A typical vector in the next (third nonzero from the left in~\eqref{complex_formal}) space, $\mathbb C^{N_3}$, is a column consisting of differentials~$dy_{ijkl}$ living in each (oriented) tetrahedron~$ijkl$. If all vertices~$i,j,k,l$ are inner, the action of matrix~$f_2$ gives, by definition:
\begin{equation}
dy_{ijkl}=\begin{pmatrix} -\frac{1}{\zeta_{ij}\zeta_{ik}\zeta_{il}} & -\frac{1}{\zeta_{ij}\zeta_{jk}\zeta_{jl}} &
-\frac{1}{\zeta_{ik}\zeta_{jk}\zeta_{kl}} & -\frac{1}{\zeta_{il}\zeta_{jl}\zeta_{kl}} \end{pmatrix} \begin{pmatrix} dz_i\\ dz_j\\ dz_k\\ dz_l
\end{pmatrix} .
\label{f2_1}
\end{equation}
If some of the vertices~$i,j,k,l$ is/are boundary, formula~\eqref{f2_1} still holds, with every~$dz_m$ for a boundary vertex~$m$ belonging to boundary component~$\kappa$ (recall that $dz_m$ is \emph{absent} from the vector columns in~$\mathbb C^{N'_0+3m}$; it is just some auxiliary quantity) defined as follows:
\begin{equation}
dz_m \stackrel{\rm def}{=} \begin{pmatrix}2\zeta_m & 1 & -\zeta_m^2 \end{pmatrix} \begin{pmatrix} ds_{\kappa}^{(a)}\\ ds_{\kappa}^{(b)}\\ ds_{\kappa}^{(c)} \end{pmatrix} .
\label{f2_2}
\end{equation}

\begin{remark}
\label{rem:several_simplexes}
There may well be \emph{several} tetrahedra in the triangulation having the same vertices~$i,j,k,l$. In this case, each of them has, of course, its own quantity~$dy$, so, in practical calculations, we will have to use more complicated notations for tetrahedra than just~$ijkl$. We think, however, that when we focus on just one tetrahedron, like in formula~\eqref{f2_1}, our notations are perfectly justified.

The same will apply below to our notations like ``$ij$'' for edges.
\end{remark}

\subsubsection*{Matrix $f_3$}

A typical vector in the fourth nonzero space in~\eqref{complex_formal}, $\mathbb C^{N'_0+N_3}$, is a column consisting of differentials~$d\varphi_{ij}=d\varphi_{ji}$ for the set of edges~$ij$ including all inner edges --- we denote their number as~$N'_1$ --- and also a set~$\mathcal D$ of ``marked'' boundary edges. The total number~$N'_0+N_3$ of such edges is determined by the condition of vanishing of the \emph{Euler characteristics} (the alternated sum of dimensions of vector spaces) of complex~\eqref{complex_formal}. This can work due to the following lemma.

\begin{lemma}
\label{lemma:boundary_edges}
Let $N_i$ denote the number of $i$-dimensional simplexes in a triangulation of manifold~$M$, and $N'_i$ --- the number of inner (not lying entirely in the boundary) $i$-dimensional simplexes. Then
\begin{equation}
N'_1 \le N'_0+N_3 \le N_1.
\label{enough_edges}
\end{equation}
Moreover, if $\partial M$ is nonempty, both inequalities~\eqref{enough_edges} become strict, while for the empty $\partial M$ they turn into equalities.
\end{lemma}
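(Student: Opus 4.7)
The plan is to reduce each of the differences $(N'_0 + N_3) - N'_1$ and $N_1 - (N'_0 + N_3)$ to a clean expression in boundary data alone, after which both parts of the lemma become immediate.

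Write $N_i^\partial := N_i - N'_i$ for $i = 0, 1$, and let $N_2^\partial$ denote the number of boundary triangles. The three ingredients I intend to combine are: (i) the tetrahedron--face incidence $4N_3 = 2(N_2 - N_2^\partial) + N_2^\partial$, equivalently $N_2 - 2N_3 = \tfrac{1}{2} N_2^\partial$; (ii) the edge--triangle incidence on the closed surface $\partial M$, namely $3N_2^\partial = 2N_1^\partial$; and (iii) the classical identity $\chi(M) = \tfrac{1}{2}\chi(\partial M)$ for a compact odd-dimensional manifold with boundary, expanded combinatorially as $N_0 - N_1 + N_2 - N_3 = \tfrac{1}{2}(N_0^\partial - N_1^\partial + N_2^\partial)$.

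Expanding $(N'_0 + N_3) - N'_1 = (N_0 - N_1 + N_3) + (N_1^\partial - N_0^\partial)$, substituting $N_0 - N_1 + N_3 = \chi(M) - (N_2 - 2N_3)$, and then applying (i) and (iii) in turn, all traces of the interior cancel and one obtains
$$ (N'_0 + N_3) - N'_1 \;=\; \tfrac{1}{2}(N_1^\partial - N_0^\partial). $$
The parallel computation for $N_1 - (N'_0 + N_3)$ yields $\tfrac{1}{2}(N_1^\partial + N_0^\partial)$. This immediately gives the right inequality in~\eqref{enough_edges}, and shows that both sides turn into equalities precisely when $\partial M = \emptyset$.

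The only subtle point is strictness of the left inequality for nonempty $\partial M$. Eliminating $N_2^\partial$ between (ii) and $\chi(\partial M) = N_0^\partial - N_1^\partial + N_2^\partial$ gives $N_1^\partial = 3N_0^\partial - 3\chi(\partial M)$, so
$$ (N'_0 + N_3) - N'_1 \;=\; N_0^\partial - \tfrac{3}{2}\chi(\partial M), $$
which splits as a sum over connected components of $\partial M$, with each component of genus $g_\kappa$ and $V_\kappa$ vertices contributing $V_\kappa - 3 + 3g_\kappa$. The standing assumption $V_\kappa \ge 4$ (which in fact any triangulation of a closed surface satisfies automatically) makes each such summand $\ge 1$, the extreme case being the tetrahedral $S^2$. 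Hence the left inequality is also strict whenever $\partial M \ne \emptyset$.
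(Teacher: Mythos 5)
Your proof is correct, and it lands on exactly the same two identities as the paper: writing $n_0=N_0^\partial$, $n_1=N_1^\partial$, both arguments reduce the lemma to $(N'_0+N_3)-N'_1=\tfrac12(n_1-n_0)$ and $N_1-(N'_0+N_3)=\tfrac12(n_1+n_0)$, and then handle strictness of the left inequality via $n_1-n_0=2n_0-3\chi_{\partial M}$ per boundary component together with the standing assumption of at least $4$ vertices per component. The only real difference is how the first identity is obtained: the paper doubles $M$ and uses that a closed triangulated $3$-manifold has $\tilde N_0-\tilde N_1+\tilde N_3=0$, whereas you import the classical relation $\chi(M)=\tfrac12\chi(\partial M)$ for compact odd-dimensional manifolds (itself usually proven by the same doubling, or by Poincar\'e--Lefschetz duality) and combine it with the tetrahedron--face and surface edge--triangle incidence counts; this is a repackaging rather than a new idea, and the paper's version is marginally more self-contained. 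One small caveat: your parenthetical claim that \emph{any} triangulation of a closed surface automatically has at least $4$ vertices per component is not true in the generalized (pseudo-simplicial) triangulations this paper allows --- e.g.\ two triangles glued along their whole boundary triangulate $S^2$ with $3$ vertices, giving contribution $0$ --- so the explicit hypothesis ``$\ge 4$ vertices per boundary component'' stated at the start of section~\ref{sec:complexes} is genuinely needed; since you do invoke it, this does not affect the validity of your argument.
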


\begin{proof}
Consider first some closed three-dimensional triangulated manifold~$\tilde M$ with $\tilde N_i$ the number of simplexes of dimension~$i$. As is known, its Euler characteristics $\tilde N_0-\tilde N_1+\tilde N_3=0$ (here the l.h.s.\ can be written in this form because $\tilde N_2=2\tilde N_3$). We apply this to $\tilde M$ being the doubled~$M$ (i.e., two oppositely oriented copies of~$M$ glued naturally over their whole boundaries):
$$
2N'_0+n_0-2N'_1-n_1+2N_3=0,
$$
where $n_0=N_0-N'_0$ and~$n_1=N_1-N'_1$ are the numbers of vertices and edges in the boundary. Hence, $N'_0+N_3-N'_1=\frac{1}{2}(n_1-n_0)$, and \eqref{enough_edges} reduces to
\begin{equation}
-n_0-n_1 \le 0 \le n_1-n_0.
\label{nsmall}
\end{equation}

The first inequality~\eqref{nsmall} is evident, as well as all lemma statement concerning it. To prove the second inequality~\eqref{nsmall}, we note that the Euler characteristics of~$\partial M$ (which is a closed triangulated two-dimensional manifold) can be written, without using the number of two-dimensional cells, as $\chi_{\partial M}=n_0-\frac{1}{3}n_1$, i.e., $n_1-n_0=2n_0-3\chi_{\partial M}$. It remains to recall that the contribution of each boundary component in $n_0$, as we agreed in the beginning of this section, is not less than~$4$, while in $\chi_{\partial M}$ --- not greater than~$2$.
\end{proof}

The action of matrix~$f_3$ gives, by definition:
\begin{equation}
d\varphi_{ij}=\zeta_{ij} \sum_{\mathrm{edges\ }kl} \zeta_{kl} \,dy_{ijkl},
\label{f3}
\end{equation}
where ``edges $kl$'' are those edges belonging to the link of $ij$ which are either inner or belong to the set~$\mathcal D$; the order of vertices $ijkl$ must correspond to the orientation of this tetrahedron induced by the orientation of~$M$.

\subsubsection*{Matrix $f_4$}

A typical vector in the fourth nonzero space in~\eqref{complex_formal}, $\mathbb C^{2N'_0+3m}$, is a column consisting of differentials~$d\alpha_i$ and~$d\beta_i$ for each inner vertex~$i$, and also subcolumns $\left( \begin{smallmatrix} dt_{\kappa}^{(a)}\\ dt_{\kappa}^{(b)}\\ dt_{\kappa}^{(c)} \end{smallmatrix} \right)$ for each boundary component~$\kappa$; we call these subcolumns \emph{conjugate sways}. The action of matrix~$f_4$ gives for $d\alpha_i$ and~$d\beta_i$, by definition:
\begin{equation}
\begin{pmatrix} d\alpha_i \\ d\beta_i \end{pmatrix} = \sum_{\textrm{edges }ij} \begin{pmatrix} 1 \\ 1/\zeta_{ij} \end{pmatrix} d\varphi_{ij} ,
\label{f4_1}
\end{equation}
where the sum is taken over all edges~$ij$ starting at~$i$.

We also \emph{define} the differentials~$d\alpha_i$ and~$d\beta_i$ for each \emph{boundary} vertex~$i$ --- just as auxiliary quantities entering the following formula~\eqref{f4_2} --- by the same formula~\eqref{f4_1}, where the sum is now taken over all \emph{inner} edges~$ij$ starting at~$i$. The action of matrix~$f_4$ gives for the conjugate sways, by definition:
\begin{equation}
\begin{pmatrix} dt_{\kappa}^{(a)}\\ dt_{\kappa}^{(b)}\\ dt_{\kappa}^{(c)} \end{pmatrix} = \sum_i \begin{pmatrix} -1&2\zeta_i \\ 0&1 \\ \zeta_i&-\zeta_i^2 \end{pmatrix} \begin{pmatrix} d\alpha_i \\ d\beta_i \end{pmatrix} ,
\label{f4_2}
\end{equation}
where the sum is taken over all vertices~$i$ belonging to boundary component~$\kappa$.

\subsubsection*{Matrix $f_5$}

We write a typical vector in the last nonzero space in~\eqref{complex_formal}, $\mathbb C^3$, as $ \left( \begin{smallmatrix} da^*\\db^*\\dc^* \end{smallmatrix} \right) $. The action of matrix~$f_5$ gives, by definition:
\begin{equation}
\begin{pmatrix}da^*\\db^*\\dc^*\end{pmatrix} = \sum_i \begin{pmatrix} -1&2\zeta_i \\ 0&1 \\ \zeta_i&-\zeta_i^2 \end{pmatrix} \begin{pmatrix} d\alpha_i \\ d\beta_i \end{pmatrix} + \sum_{\kappa} \begin{pmatrix} dt_{\kappa}^{(a)}\\ dt_{\kappa}^{(b)}\\ dt_{\kappa}^{(c)} \end{pmatrix} ,
\label{f5}
\end{equation}
where the first sum in the r.h.s.\ is taken over all inner vertices~$i$, while the second --- over all boundary components~$\kappa$.

\begin{theorem}
\label{th:complex_formal}
The sequence \eqref{complex_formal} is indeed an algebraic complex, i.e.:
\begin{equation}
f_2 \circ f_1 = 0, \quad f_3 \circ f_2 = 0, \quad f_4 \circ f_3 = 0, \quad f_5 \circ f_4 = 0.
\label{ff=0}
\end{equation}
\end{theorem}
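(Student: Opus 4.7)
My plan is to verify each of the four independent identities $f_{i+1}\circ f_i=0$ separately, in each case by evaluating the composition on the distinguished basis of the source, regrouping the result by simplices of a suitable dimension, and reducing to a handful of algebraic identities in the coordinates~$\zeta_v$.

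First, for $f_2\circ f_1$, I would substitute \eqref{f1} into \eqref{f2_1}; because \eqref{f2_2} gives every boundary vertex the same linear form $dz_v=2\zeta_v\,da+db-\zeta_v^2\,dc$ as \eqref{f1} does for an inner vertex, the calculation is uniform, and the resulting expression for $dy_{ijkl}$ is a Lagrange/divided-difference sum at the four nodes $\zeta_i,\zeta_j,\zeta_k,\zeta_l$ of the polynomial $P(\zeta)=2\zeta\,da+db-\zeta^2\,dc$, which vanishes since $\deg P\le 2<3$. At the other end, for $f_5\circ f_4$ I would group contributions by edge: every inner edge $ij$ picks up one term from each endpoint (through the first sum in \eqref{f5} if the endpoint is inner, or through its boundary-component sway $dt_\kappa$ otherwise), while edges of~$\mathcal D$ contribute nothing because the boundary instance of \eqref{f4_1} sums only over inner edges. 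This reduces the identity to the rank-$3$ matrix relation
\[
M_i\begin{pmatrix}1\\ 1/\zeta_{ij}\end{pmatrix}+M_j\begin{pmatrix}1\\ 1/\zeta_{ji}\end{pmatrix}=0,
\]
with $M_v$ the $3\times 2$ block from \eqref{f4_2}, to be checked row by row using $\zeta_{ji}=-\zeta_{ij}$.

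For $f_3\circ f_2$, I would fix an edge $ij$ in the domain of $f_3$ and compute, in $d\varphi_{ij}$, the coefficient of each basis element of~$\mathbb C^{N'_0+3m}$. The coefficient of $dz_v$ for $v\neq i,j$ collects contributions from the (at most two) tetrahedra of the link of~$ij$ containing~$v$, and the two contributions telescope pairwise using the orientation convention of \eqref{f3} together with $\zeta_{ab}=-\zeta_{ba}$; the coefficients of $dz_i$ and $dz_j$ vanish by the same divided-difference identity already used for $f_2\circ f_1$, now summed along the link. When $ij$ is inner, the link is a closed cycle and the telescoping closes on itself; when $ij\in\mathcal D$, the link is an arc, and the uncancelled endpoint contributions land on boundary vertices, so they repackage into coefficients of the sways $ds_\kappa^{(a)},ds_\kappa^{(b)},ds_\kappa^{(c)}$ through \eqref{f2_2}, whose vanishing is a short additional computation.

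The hardest step, and the main obstacle, is $f_4\circ f_3=0$. For a tetrahedron $ijkl$ all three of whose opposite-to-$i$ edges $jk,kl,lj$ appear in the sums of \eqref{f3}, I would compute its contribution to $(d\alpha_i,d\beta_i)$ as
\[
\bigl((\zeta_{ij},1)\,\zeta_{kl}+(\zeta_{ik},1)\,\zeta_{lj}+(\zeta_{il},1)\,\zeta_{jk}\bigr)^{T}dy_{ijkl}=0,
\]
which vanishes by the two elementary identities $\zeta_{kl}+\zeta_{lj}+\zeta_{jk}=0$ and $\zeta_{ij}\zeta_{kl}+\zeta_{ik}\zeta_{lj}+\zeta_{il}\zeta_{jk}=0$ (both checked by expanding). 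The genuine difficulty is with tetrahedra some of whose opposite edges are boundary edges outside~$\mathcal D$: the uncancelled residuals they produce at the boundary vertices $i\in\kappa$ must exactly cancel the $dt_\kappa$ piece assembled from the conjugate sways via \eqref{f4_2}. Carrying out this matching---once more using the two identities above, now aggregated over the incidence structure of~$\kappa$---is the only step of the proof that genuinely requires the conjugate-sway machinery introduced for multi-component boundaries.
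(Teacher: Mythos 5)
Your plan is correct in substance, but it takes a genuinely different route from the paper. The paper's printed proof of theorem~\ref{th:complex_formal} deliberately omits the direct calculation: its real argument is the one of section~\ref{sec:macroscopic_complexes}, where $f_1,f_2,f_3$ are the differentials of the nonlinear chain~\eqref{crmacro}, so $f_2\circ f_1=0$ and $f_3\circ f_2=0$ follow by differentiating theorem~\ref{th-l-glob} (remark~\ref{rem:left}), while the identifications $f_3=g_3^{\mathrm T}$, $f_4=g_2^{\mathrm T}$, $f_5=g_1^{\mathrm T}$ turn the right half into the transpose of the differential of the conjugate chain~\eqref{crmacroconj}, so $f_4\circ f_3=(g_3\circ g_2)^{\mathrm T}=0$ and $f_5\circ f_4=(g_2\circ g_1)^{\mathrm T}=0$ follow from theorem~\ref{th-G-global} (remark~\ref{rem:right}). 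You instead verify the four identities head-on, and your reductions are the right ones: the degree-$\le 2$ Lagrange/divided-difference identity, telescoping along links (closed for inner edges, sway-invariance of the boundary cross-ratio for edges of~$\mathcal D$), the identities $\zeta_{kl}+\zeta_{lj}+\zeta_{jk}=0$ and $\zeta_{ij}\zeta_{kl}+\zeta_{ik}\zeta_{lj}+\zeta_{il}\zeta_{jk}=0$, and the block cancellation $M_i(1,\,1/\zeta_{ij})^{\mathrm T}+M_j(1,\,1/\zeta_{ji})^{\mathrm T}=0$. Your route buys a self-contained check of the explicit matrices, independent of the geometry; the paper's route buys the explanation of why the complex exists (M\"obius invariance of cross-ratios, $\mathrm{SO}(3,\mathbb C)$-invariance of lengths, vanishing Cayley--Menger determinants) and the duality~\eqref{g3f3}, which is reused later, e.g.\ in theorem~\ref{th:minors_nonzero}.

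Two remarks on carrying your plan out. First, signs: the composition $f_2\circ f_1$ vanishes for the coefficients obtained by differentiating the cross-ratio~\eqref{cr}, which in node-centered form are $-1/\prod_{w\ne v}\zeta_{vw}$ for every vertex $v$ of the tetrahedron; if you substitute the sign pattern of~\eqref{f2_1} completely literally, the sum is not a divided difference and does not close, so you must keep careful track of $\zeta_{ab}=-\zeta_{ba}$ (and of the orientation convention entering~\eqref{f3}). Second, your ``hardest step'' is easier than you fear: for $f_4\circ f_3$ the full three-edge sum at \emph{every} vertex of a tetrahedron vanishes by your two identities, so the residual at a boundary vertex is just minus the sum over the boundary edges of that tetrahedron at that vertex; grouping these residuals edge by edge, the two endpoints of each boundary edge feed into $dt_\kappa$ through the blocks $M_u$, $M_v$ of~\eqref{f4_2} and cancel by exactly your rank-$3$ identity, tetrahedron by tetrahedron --- no aggregation over the incidence structure of~$\kappa$ is needed. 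The remaining sway computation for $\mathcal D$-edges in $f_3\circ f_2$ is, as you say, short: it is M\"obius invariance of the cross-ratio of the four boundary vertices $i,j,1,r$ in~\eqref{ij1r}.
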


\begin{proof}
The equalities \eqref{ff=0} can be proved using directly the definitions of $f_1,\dots,f_5$ given in this section.

We do not give here the details of these direct calculations, because a different proof of theorem~\ref{th:complex_formal} will follow from our further reasoning, see remarks \ref{rem:left} and~\ref{rem:right}.
\end{proof}

\section{Algebraic complexes: the mathematical origins}
\label{sec:macroscopic_complexes}

The presented direct proof of theorem~\ref{th:complex_formal} does not make clear the mathematical reasons ensuring that \eqref{complex_formal} is a complex. To understand these reasons is also desirable for proving theorem~\ref{th:minors_nonzero} below in section~\ref{sec:inv}. So, this section is devoted to explaining the mathematical origins of complex~\eqref{complex_formal}. We mainly follow sections 2 and~3 from~\cite{kkm}, modifying them in such way as to include the case of a multi-component boundary~$\partial M$.

\subsection{The left-hand half of the complex}
\label{subsec:left}

Recall that we are considering a three-dimensional closed oriented connected manifold~$M$ with boundary~$\partial M$. We attach a complex number~$\zeta_i$ to every vertex~$i$ of its given triangulation; $\zeta_i$ will be called, from now on, the unperturbed, or initial, coordinate\footnote{as opposed to ``perturbed'' coordinates~$z_i$ below} of vertex~$i$. Recall also that $N_i$ is the number of $i$-dimensional simplexes in the triangulation, and $m$ is the number of connected components in~$\partial M$.

We are going to define the following chain of spaces and (nonlinear) mappings:
\begin{multline}
0 \longrightarrow \mathrm{PSL}(2,\mathbb C) \stackrel{\textstyle F_1}{\longrightarrow}
\left( \begin{smallmatrix} \text{inner vertex}\\ \text{coordinates}\\ z \end{smallmatrix}\right) \oplus \left( \begin{smallmatrix} \text{boundary}\\ \text{component}\\ \text{sways }s \end{smallmatrix}\right) \\ \stackrel{\textstyle F_2}{\longrightarrow}
\left( \begin{smallmatrix} \text{triples}\\ x,\,1-1/x,\,1/(1-x) \\ \text{in tetrahedra}\end{smallmatrix}\right) \stackrel{\textstyle F_3}{\longrightarrow}
\left( \begin{smallmatrix} \text{total}\\ \text{angles }\omega\\ \text{around edges} \end{smallmatrix}\right).
\label{crmacro}
\end{multline}

The leftmost arrow sends, by definition, the zero into the unit of group $\mathrm{PSL}(2,\mathbb C)$.

Mapping~$F_1$ sends an element of group~$\mathrm{PSL}(2,\mathbb C)$ represented by matrix $\left( \begin{smallmatrix} \alpha&\beta\\
\gamma&\delta \end{smallmatrix}\right)$ into the direct sum of two column vectors. The first of them is of height~$N'_0$ and consists of complex numbers~$z_i$ called ``perturbed coordinates'' of all \emph{inner} vertices~$i$. By definition, $F_1$ builds from the mentioned matrix the numbers
\begin{equation}
z_i=\frac{\alpha\zeta_i+\beta}{\gamma\zeta_i+\delta}.
\label{f1_1_macro}
\end{equation}

The second column vector in the mentioned direct sum is of height~$m$, and each of its entries is just a copy of the same group~$\mathrm{PSL}(2,\mathbb C)$ which we put in correspondence to each boundary component and call its \emph{sway}. By definition, each of these $m$~components of~$F_1$ takes any element of~$\mathrm{PSL}(2,\mathbb C)$ into itself (thus resulting in $m$ identical sways of boundary components).

\begin{remark}
\label{rem:sways}
By ``sway'' we mean, speaking less formally, a motion of the whole boundary component as a rigid body, in contrast with inner vertices which are allowed to move independently, as will be seen in the coming definition of mapping~$F_2$. This applies as well to the sways~$t^*$ below in subsection~\ref{subsec:right}.
\end{remark}

The next mapping~$F_2$ sends the pair (column vector of $N'_0$ \emph{arbitrary} values~$z_i$, column vector of $m$ arbitrary elements of~$\mathrm{PSL}(2,\mathbb C)$) into the column vector of height~$N_3$, whose each entry corresponds to a tetrahedron in the triangulation and is described as follows. First, we introduce the perturbed coordinates of the \emph{boundary} vertices --- just as auxiliary quantities, \emph{not} entering directly our sequence~\eqref{crmacro}. By definition, they are given by the same formula~\eqref{f1_1_macro} as for inner vertices.

Let now there be a tetrahedron~$ijkl$, whose orientation (given by this order of its vertices) corresponds to the given orientation of~$M$. The entry of the mentioned vector, corresponding\footnote{Recall that, according to remark~\ref{rem:several_simplexes}, the situation where there are several tetrahedra having the same vertices $i,j,k,l$ is perfectly acceptable; we will just have to use more complicated notations to distinguish them; the same applies to edges denoted like~``$ij$''.} to tetrahedron~$ijkl$, consists of three complex values corresponding to its six \emph{unoriented} edges and related as follows:
\begin{itemize}
\item the same value corresponds to any of two opposite edges: if $x$ corresponds to edge~$ik$, it also corresponds to edge~$jl$;
\item if $x$ corresponds to edges $ik$ and~$jl$, then the first of the values
\begin{equation}
1-\frac{1}{x}\,,\quad \frac{1}{1-x}
\label{2more}
\end{equation}
corresponds to any of the edges $il$ and $jk$, while the second --- to the edges $ij$ and~$kl$.
\end{itemize}
By definition, the~$x$ obtained by applying $F_2$ to given~$z$'s is the cross-ratio
\begin{equation}
x=\frac{z_{ij}z_{kl}}{z_{il}z_{kj}},
\label{cr}
\end{equation}
where
\begin{equation}
z_{ij}=z_i-z_j
\label{zij}
\end{equation}
(and $z_i$ for inner and boundary vertices are on equal footing in~\eqref{cr}). One can check that expressions \eqref{2more} are in accordance with how the cross-ratio~\eqref{cr} transforms under permutations of vertices.

Finally, to describe mapping~$F_3$, we choose a set~$\mathcal D$ of ``marked'' boundary edges of such cardinality~$\# \mathcal D$ that
$$
N'_1+\# \mathcal D = N'_0+N_3
$$
in the same way as in section~\ref{sec:complexes}; recall that this can be done due to lemma~\ref{lemma:boundary_edges}. Mapping~$F_3$ sends a column vector of height~$N_3$ consisting of triples~$\bigl(x,\allowbreak\;1-1/x,\allowbreak\;1/(1-x)\bigr)$ into a column vector of complex numbers~$\omega_{ij}$ of height~$N'_1+\# \mathcal D$, where $ij$ denotes an edge joining vertices $i$ and~$j$. Consider the \emph{star} of~$ij$; it consists of all tetrahedra having $ij$ as an edge. By definition, $F_3$ yields
\begin{equation}
\omega_{ij}=\prod x,
\label{oij}
\end{equation}
where all values $x$ in the product correspond to all tetrahedra in the star of~$ij$ and to the edge~$ij$ in each such tetrahedron. We call $\omega_{ij}$ obtained according to formula~\eqref{oij} \emph{total angle} around edge~$ij$.

For inner edges, the total angle is of course the same as the ``deficit angle'' of paper~\cite{kkm}.

\begin{theorem}
\label{th-l-glob}
The composition of any two successive arrows in~(\ref{crmacro}) is a constant mapping.
\end{theorem}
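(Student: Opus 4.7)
There are three consecutive compositions to verify in~\eqref{crmacro}. The first, the composition of the leftmost arrow with $F_1$, is constant for trivial reasons: its source is the zero object, a one-point set. The substance of the theorem therefore lies in showing $F_2\circ F_1$ and $F_3\circ F_2$ are constant, and both reduce to the classical M\"obius-invariance of the cross-ratio.

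For $F_2\circ F_1$, fix $g=\left(\begin{smallmatrix}\alpha & \beta \\ \gamma & \delta\end{smallmatrix}\right)\in\mathrm{PSL}(2,\mathbb C)$. By definition, $F_1$ assigns to each inner vertex the perturbed coordinate $z_i=g(\zeta_i)$ via~\eqref{f1_1_macro} and declares the same matrix $g$ to be the sway of every boundary component. When $F_2$ builds the auxiliary perturbed coordinates of boundary vertices from their sway, it uses exactly the same M\"obius rule, so every $z$-coordinate entering the cross-ratio~\eqref{cr} is $g$ applied to the corresponding $\zeta$. Since the cross-ratio is invariant under a M\"obius transformation acting simultaneously on all four arguments, $F_2(F_1(g))$ returns the tuple of cross-ratios formed directly from the unperturbed $\zeta_i$'s, independent of~$g$.

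For $F_3\circ F_2$, I would first simplify the value $1/(1-x)$ attached to edge $ij$ in a tetrahedron $ijkl$ using the identity $z_{ij}z_{kl}+z_{ik}z_{lj}+z_{il}z_{jk}=0$; after cancellation it becomes $z_{il}z_{jk}/(z_{ik}z_{jl})$. Enumerating the tetrahedra of the star of $ij$ by consecutive pairs $(k_a,k_{a+1})$ of vertices in the link of~$ij$, formula~\eqref{oij} gives
\[
\omega_{ij} \;=\; \prod_a \frac{z_{ik_{a+1}}\,z_{jk_a}}{z_{ik_a}\,z_{jk_{a+1}}}.
\]
This product telescopes. If $ij$ is an inner edge, its link is a circle $k_1,\dots,k_n,k_{n+1}=k_1$, the numerator and denominator become identical products, and $\omega_{ij}=1$. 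If $ij$ is a marked boundary edge, its link is an arc $k_1,\dots,k_n$ whose endpoints $k_1,k_n$ are the third vertices of the two boundary triangles containing $ij$, so all four of $i,j,k_1,k_n$ lie in the same boundary component $\kappa$ as $ij$; telescoping yields $\omega_{ij}=z_{ik_n}z_{jk_1}/(z_{ik_1}z_{jk_n})$, a cross-ratio of four points of~$\kappa$. The sway of $\kappa$ acts on these four $z$-coordinates by a single M\"obius transformation, which preserves this cross-ratio, giving $\omega_{ij}=\zeta_{ik_n}\zeta_{jk_1}/(\zeta_{ik_1}\zeta_{jk_n})$, independent of all inputs to $F_2$.

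The main obstacle is combinatorial-topological rather than algebraic: the whole argument relies on the standard facts that the link of an inner edge in a triangulated $3$-manifold is a circle while the link of a boundary edge is an arc whose endpoints lie in the same boundary component as the edge itself, together with the coherence of the cyclic (respectively linear) ordering of the surrounding tetrahedra supplied by the orientation of~$M$. Once these facts are in place, the rest is a short cross-ratio manipulation plus the telescoping displayed above.
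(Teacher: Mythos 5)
Your proposal is correct and follows essentially the same route as the paper: M\"obius invariance of the cross-ratio gives $F_2\circ F_1=\const$, and the telescoping product of the values attached to edge $ij$ over the (cyclic, resp.\ linear) link gives $\omega_{ij}=1$ for inner edges and a cross-ratio of four boundary vertices of the same component for boundary edges, which the common sway cannot change. Your explicit simplification of $1/(1-x)$ to $z_{il}z_{jk}/(z_{ik}z_{jl})$ just spells out a step the paper leaves implicit.
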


\begin{proof}
To show that $F_2\circ F_1=\const$, it is enough to say that the cross-ratio of four complex numbers is invariant under the action of the same element of~$\mathrm{PSL}(2,\mathbb C)$ on all of them.

To show that $F_3\circ F_2=\const$, we denote the successive vertices in the \emph{link} of edge~$ij$ as $1,\dots,r$, so that the oriented tetrahedra around~$ij$ are $ij12$, $\ldots$, $ij(r-1)r$, $ijr1$ in the case if $ij$ is an inner edge or just $ij12$, $\ldots$, $ij(r-1)r$ in the case if $ij$ is a boundary edge. Then the product~(\ref{oij}) of values~(\ref{cr}) is
$$
\omega_{ij} = \frac{z_{i2}z_{j1}}{z_{j2}z_{i1}} \cdots \frac{z_{ir}z_{j(r-1)}}{z_{jr}z_{i(r-1)}} \frac{z_{i1}z_{jr}}{z_{j1}z_{ir}} = 1
$$
for the inner $ij$ or
\begin{equation}
\omega_{ij} = \frac{z_{i2}z_{j1}}{z_{j2}z_{i1}} \cdots \frac{z_{ir}z_{j(r-1)}}{z_{jr}z_{i(r-1)}} = \frac{z_{j1}z_{ir}}{z_{i1}z_{jr}} 
\label{ij1r}
\end{equation}
for the boundary $ij$. The ``inner'' case is obvious, while in the ``boundary'' case it remains to note that all vertices $i,j,1,r$ lie in the boundary, so neither changes of inner~$z_k$ nor action of~$\mathrm{PSL}(2,\mathbb C)$ due to boundary sways can affect the (rightmost) cross-ratio~\eqref{ij1r}.
\end{proof}

We sometimes call the chain~(\ref{crmacro}) a ``macroscopic'' complex, in contrast to its differential, or ``microscopic'' version which we are going to produce from it and which will coincide with the left-hand half of~\eqref{complex_formal} (including the arrow~$f_3$). Roughly speaking, it will consist of differentials of mappings $F_1$, $F_2$ and~$F_3$.

This makes no difficulty when taking the differential 
$$
f_1=dF_1\colon\; \mathfrak{psl}(2,\mathbb C)\to (dz) \oplus (ds) ,
$$
where $\mathfrak{psl}(2,\mathbb C)$ is the Lie algebra, $(dz)$ denotes the vector space of column vectors of differentials of quantities~$z_i$, (more formally, $(dz)$ is just a vector space over~$\mathbb C$ whose basis consists of all the vertices of triangulation) and $(ds)$ denotes the the vector space which is the direct sum of $m$ copies of~$\mathfrak{psl}(2,\mathbb C)$. To be exact, we choose the natural basis of three matrices 
\begin{equation}
\begin{pmatrix} 1&0 \\ 0&-1 \end{pmatrix}, \quad
\begin{pmatrix} 0&1 \\ 0&0 \end{pmatrix} \quad \text{and} \quad
\begin{pmatrix} 0&0 \\ 1&0 \end{pmatrix}
\label{basis_psl}
\end{equation}
in~$\mathfrak{psl}(2,\mathbb C)$, denote the coordinates with respect to it as~$da,db,dc$ in the algebra to the left of arrow~$f_1$ and $ds_k^{(a)},ds_k^{(b)},ds_k^{(c)}$ in the sways of $k$th boundary component, and then a simple differentiation gives the already written formula~\eqref{f1} for~$f_1$.

For the next mapping, we would like to produce just one symmetric differential out of three ``macroscopic'' quantities (\ref{cr}) and~(\ref{2more}), namely
\begin{equation}
dy_{ijkl}=\frac{d\ln x}{\zeta_{ik}\zeta_{lj}}=\frac{d\ln (1-\frac{1}{x})}{\zeta_{il}\zeta_{jk}}=\frac{d\ln \frac{1}{1-x}}{\zeta_{ij}\zeta_{kl}}.
\label{dy}
\end{equation}
Our ``microscopic'' mapping 
$$
f_2\colon\; (dz)\oplus (ds) \to (dy)
$$
is defined by differentiating formula~\eqref{cr}; here $(dy)$ is the space of column vectors whose coordinates are~$dy_{ijkl}$ for all tetrahedra~$ijkl$ in the triangulation (more formally --- the vector space over~$\mathbb C$ whose basis consists of all the tetrahedra). The formulas for~$f_2$ are the already written formulas \eqref{f2_1} and~\eqref{f2_2}.

Finally, we introduce variables $\varphi_i = \ln \omega_i$ in our definition of ``microscopic'' mapping
$$
f_3\colon\; (dy)\to (d\varphi),
$$
where $(d\varphi)$ is again the obvious vector space, whose basis vectors are inner edges and edges from set~$\mathcal D$. The differential of~$F_3$ gives, in terms of these variables, the already written formula~\eqref{f3}.

Hence, our resulting sequence of vector spaces and linear mappings is:
\begin{equation}
0\longrightarrow \mathfrak{psl}(2,\mathbb C) \stackrel{f_1}{\longrightarrow} (dz)\oplus (ds) \stackrel{f_2}{\longrightarrow} (dy)
\stackrel{f_3}{\longrightarrow} (d\varphi)
\label{crmicro}
\end{equation}

\begin{remark}
\label{rem:left}
We have thus obtained a different proof of one-half of theorem~\ref{th:complex_formal}, reflecting really the ideas behind it. Indeed, the equalities $f_3\circ f_2=0$ and $f_2\circ f_1=0$ follow immediately by differentiation from theorem~\ref{th-l-glob}.
\end{remark}

\subsection{The right-hand half of the complex}
\label{subsec:right}

We define also one more ``macroscopic'' sequence of spaces and (nonlinear) mappings:
\begin{multline}
0 \longrightarrow \mathrm{SO}(3,\mathbb C) \stackrel{\textstyle G_1}{\longrightarrow}
\left( \begin{smallmatrix} \text{isotropic}\\ \text{vectors}\\ \text{in inner vertices} \end{smallmatrix}\right) \oplus \left( \begin{smallmatrix} \text{boundary}\\ \text{component}\\ \text{sways }t^* \end{smallmatrix}\right) \\
\stackrel{\textstyle G_2}{\longrightarrow}
\left( \begin{smallmatrix} \text{squared}\\ \text{edge}\\ \text{lengths} \end{smallmatrix}\right) \stackrel{\textstyle G_3}{\longrightarrow}
\left( \begin{smallmatrix} \text{discrepancies}\\ \Omega \\ \text{in tetrahedra}\end{smallmatrix}\right) .
\label{crmacroconj}
\end{multline}

Here are the details. The first arrow just maps the zero into the unity of the group~$\mathrm{SO}(3,\mathbb C)$. Note that this group is isomorphic to $\mathrm{PSL}(2,\mathbb C)$ with which we were dealing in subsection~\ref{subsec:left}. 

To move further, we have to consider a complex Euclidean space of column vectors of height~$3$ with the scalar product given by the matrix
\begin{equation}
\begin{pmatrix} 0&0&-1\\0&2&0\\-1&0&0 \end{pmatrix}.
\label{matrixEuclid}
\end{equation}
We realize the group $\mathrm{SO}(3,\mathbb C)$ as the group of matrices representing linear transformations of this space preserving the scalar product~(\ref{matrixEuclid}).

This time, we associate \emph{two} complex parameters with each vertex~$i$ of our manifold triangulation: $\zeta_i$ which is the same as in subsection~\ref{subsec:left}, and a new parameter called~$\varkappa_i$. These parameterize the following ``initial'', or unperturbed, \emph{isotropic vectors}:
\begin{equation}
\vec e_i^{\textrm{ initial}} = \begin{pmatrix} \varkappa_i \zeta_i^2\\\varkappa_i \zeta_i\\ \varkappa_i \end{pmatrix}.
\label{isotropic-initial}
\end{equation}
The space called ``$\left( \begin{smallmatrix} \text{isotropic}\\ \text{vectors}\\ \text{in inner vertices} \end{smallmatrix}\right)$'' in~(\ref{crmacroconj}) consists of isotropic vectors~$\vec e_i$ in all inner vertices~$i$ of the form~(\ref{isotropic-initial}), but with all $\zeta_i$ and~$\varkappa_i$ replaced by arbitrary complex values $z_i$ and~$h_i$:
\begin{equation}
\vec e_i=\begin{pmatrix} h_i z_i^2\\h_i z_i\\h_i \end{pmatrix}
\label{ei}
\end{equation}
As for the space ``$\left( \begin{smallmatrix} \text{boundary}\\ \text{component}\\ \text{sways }t^* \end{smallmatrix}\right)$'', it consists of $m$ copies of the same group~$\mathrm{SO}(3,\mathbb C)$.

By definition, our mapping~$G_1$ builds the following vectors~\eqref{ei}, for all inner vertices~$i$, out of an element $T\in\mathrm{SO}(3,\mathbb C)$:
\begin{equation}
G_1\colon\quad T \mapsto \{\textrm{vectors }\vec e_i = T \vec e_i^{\textrm{ initial}} \textrm{ for all }i\},
\label{G1}
\end{equation}
and also gives $m$ identical boundary component sways\footnote{The star in our notation $t^*$ and other notations below reflects the ``conjugation'' which will be done soon with the microscopic version of complex~\eqref{crmacroconj}.}: $t_{\kappa}^*=T$, $\kappa=1,\dots,m$.

The next space called ``$\left( \begin{smallmatrix} \text{squared}\\ \text{edge}\\ \text{lengths} \end{smallmatrix}\right)$'' in~(\ref{crmacroconj}) consists of complex numbers living on all inner edges and boundary edges in the set~$\mathcal D$. We assume that our isotropic vectors come out of the origin of coordinates. The map $G_2$ produces then for edge~$ij$, by definition, the squared distance~$L_{ij}$ between the ends of $\vec e_i$ and~$\vec e_j$. The sways~$t^*$ play here their usual role: if $i$ (or/and $j$) belongs to boundary component~$\kappa$, the ``perturbed'' vector~\eqref{ei} is used for it also, calculated according to
$$
\vec e_i = t_{\kappa}^* \vec e_i^{\textrm{ initial}} .
$$

Note the following relation between $L_{ij}$ and the scalar product:
\begin{equation}
L_{ij}=-2\vec e_i \vec e_j.
\label{Lij}
\end{equation}

Finally, our space ``$\left( \begin{smallmatrix} \text{discrepancies}\\ \Omega \\ \text{in tetrahedra}\end{smallmatrix}\right)$'' consists of complex numbers~$\Omega_{ijkl}$ put in correspondence to all tetrahedra~$ijkl$. By definition, the $\Omega$'s produced by~$G_3$ from the given $L$'s are the following determinants:
\begin{equation}
\Omega_{ijkl}=\left| \begin{matrix}
0&L_{ij}&L_{ik}&L_{il}\\
L_{ji}&0&L_{jk}&L_{jl}\\
L_{ki}&L_{kj}&0&L_{kl}\\
L_{li}&L_{lj}&L_{lk}&0
\end{matrix} \right|,
\end{equation}
where of course $L_{ij}=L_{ji}$ and so on. Here $L_{ij}$ is regarded as an independent complex variable if the edge~$ij$ is either inner of in the set~$\mathcal D$; otherwise, $L_{ij}$ is a constant, namely the distance between the ends of corresponding unperturbed vectors.

\begin{theorem}
\label{th-G-global}
The composition of any two successive arrows in~(\ref{crmacroconj}) is a constant mapping.
\end{theorem}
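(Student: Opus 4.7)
My plan is to verify each of the three non-trivial compositions in~(\ref{crmacroconj}) in turn. The composition with the leftmost arrow is immediate: zero maps to the identity of $\mathrm{SO}(3,\mathbb C)$, which $G_1$ then sends to the single specific configuration where all inner vectors equal $\vec e_i^{\mathrm{initial}}$ and all sways~$t^*_\kappa$ are the identity; the result is independent of input.

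For $G_2\circ G_1$, I would take an arbitrary $T\in\mathrm{SO}(3,\mathbb C)$. By~(\ref{G1}), every inner vertex receives $\vec e_i=T\vec e_i^{\mathrm{initial}}$, and every sway equals $T$, so boundary vertices likewise carry $\vec e_i=T\vec e_i^{\mathrm{initial}}$ when $G_2$ uses them. Combining this with~(\ref{Lij}) and the defining property that $T$ preserves the scalar product~(\ref{matrixEuclid}) gives
$$L_{ij}=-2(T\vec e_i^{\mathrm{initial}})\cdot(T\vec e_j^{\mathrm{initial}})=-2\vec e_i^{\mathrm{initial}}\cdot \vec e_j^{\mathrm{initial}},$$
so each $L_{ij}$ produced by the composition is independent of $T$.

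The main step is $G_3\circ G_2$. My idea is to recognize $\Omega_{ijkl}$ as (up to the factor $(-2)^4$) the Gram determinant of the four vectors $\vec e_i,\vec e_j,\vec e_k,\vec e_l$ with respect to~(\ref{matrixEuclid}). Indeed, by~(\ref{ei}) these vectors are isotropic, so $\vec e_a\cdot \vec e_a=0$ matches the diagonal zeros in $\Omega_{ijkl}$, while~(\ref{Lij}) identifies the off-diagonal entries with $-2\vec e_a\cdot \vec e_b$. Since the four vectors all live in the three-dimensional complex Euclidean space, they are linearly dependent, the Gram matrix has rank at most $3$, and its $4\times 4$ determinant vanishes. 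Hence $\Omega_{ijkl}\equiv 0$, which is stronger than constancy: the composition is the zero map.

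The one point I expect to require care is the bookkeeping of edges $ab$ that are boundary but not in~$\mathcal D$: for such edges $L_{ab}$ is plugged into $\Omega_{ijkl}$ as a fixed constant (the unperturbed value), not as $-2\vec e_a\cdot \vec e_b$ computed from the current input. This is consistent with the Gram-matrix interpretation because any such edge has both endpoints in one and the same boundary component~$\kappa$, and $t^*_\kappa\in\mathrm{SO}(3,\mathbb C)$ preserves the scalar product, so
$$-2(t^*_\kappa \vec e_a^{\mathrm{initial}})\cdot(t^*_\kappa\vec e_b^{\mathrm{initial}})=-2\vec e_a^{\mathrm{initial}}\cdot \vec e_b^{\mathrm{initial}}$$
already equals the unperturbed constant. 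Thus, whatever the input, the matrix inside $\Omega_{ijkl}$ is the genuine Gram matrix of four vectors in~$\mathbb C^3$, and the Gram determinant argument applies uniformly, finishing the proof.
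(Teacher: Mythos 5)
Your proof is correct and takes essentially the same route as the paper's: $G_2\circ G_1=\const$ because $\mathrm{SO}(3,\mathbb C)$ preserves the scalar product (hence all the lengths~$L_{ij}$), and $G_3\circ G_2\equiv 0$ because, by isotropy and~\eqref{Lij}, the matrix inside $\Omega_{ijkl}$ is $-2$ times the Gram matrix of four vectors lying in a three-dimensional space, so its determinant vanishes. Your additional bookkeeping for boundary edges outside~$\mathcal D$ (where the fixed unperturbed value coincides with the swayed one, both endpoints lying in the same boundary component) merely makes explicit a point the paper leaves implicit.
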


\begin{proof}
The relation $G_2\circ G_1 =\const$ holds simply because distances are invariant under the action of $\mathrm{SO}(3,\mathbb C)$.

The relation $G_3\circ G_2 =\mathrm{const}\; (=0)$ holds because $\Omega$ vanishes when the $L$'s are produced from \emph{three-dimensional} vectors according to~(\ref{Lij}).
\end{proof}

Now we pass on to ``microscopic'' values similarly to subsection~\ref{subsec:left}: we produce linear mappings $g_1$, $g_2$ and~$g_3$ as differentials $dG_1$, $dG_2$ and~$dG_3$ multiplied by some simple factors.

We choose the basis of three following matrices in the Lie algebra~$\mathfrak{so}(3,\mathbb C)$:
\begin{equation}
A=\begin{pmatrix}2&0&0\\0&0&0\\0&0&-2\end{pmatrix},\quad B=\begin{pmatrix}0&2&0\\0&0&1\\0&0&0\end{pmatrix},\quad
C=\begin{pmatrix}0&0&0\\1&0&0\\0&2&0\end{pmatrix}.
\label{basis_so}
\end{equation}

Let $da^*,db^*,dc^*$ be infinitesimal numbers; we also denote
\begin{equation}
d\alpha_i^*=\frac{dh_i}{2\varkappa_i},\quad d\beta_i^*=dz_i.
\label{alpha*beta*}
\end{equation}
If we calculate the change of $h_i$ and~$z_i$ under the action of matrix $da^*A+db^*B+dc^*C$ on vector $\vec e_i$~(\ref{ei}) and then substitute the initial values $h_i=\varkappa_i$ and $z_i=\zeta_i$ into the resulting Jacobian matrix, we get, taking also \eqref{alpha*beta*} into account:
\begin{equation}
\begin{pmatrix}d\alpha_i^*\\d\beta_i^*\end{pmatrix}=\begin{pmatrix}-1&0&\zeta_i\\
2\zeta_i&1&-\zeta_i^2\end{pmatrix} \begin{pmatrix}da^*\\db^*\\dc^*\end{pmatrix}.
\label{g1_1}
\end{equation}
By definition, linear mapping~$g_1$ sends a vector column $\left( \begin{smallmatrix}da^*\\db^*\\dc^*\end{smallmatrix} \right)$ into the set of differentials~\eqref{g1_1} for all inner vertices~$i$ and to the columns
\begin{equation}
\begin{pmatrix}dt_{\kappa}^{(a)*} \\ dt_{\kappa}^{(b)*} \\ dt_{\kappa}^{(c)*} \end{pmatrix} = \begin{pmatrix}da^*\\db^*\\dc^*\end{pmatrix}
\label{g1_2}
\end{equation}
for each boundary component~$\kappa$.

Next, we introduce ``normalized'' squared edge lengths in the following way:
\begin{equation*}
\varphi_{ij}^*=\frac{L_{ij}}{4 \varkappa_i \varkappa_j (\zeta_i-\zeta_j)^2}.
\end{equation*}
Thus, when $\varphi_{ij}^*$ is obtained according to $G_2$, it is
\begin{equation}
\varphi_{ij}^*=\frac{1}{2} \frac{h_i h_j (z_i-z_j)^2}{\varkappa_i \varkappa_j (\zeta_i-\zeta_j)^2}.
\label{eq:phiij}
\end{equation}
This yields
\begin{equation}
\frac{\partial \varphi_{ij}^*}{\partial \alpha_i^*}=1,\quad \frac{\partial \varphi_{ij}^*}{\partial \beta_i^*}=\frac{1}{\zeta_i-\zeta_j}.
\label{g2_1}
\end{equation}
By definition, formula~(\ref{g2_1}) gives matrix elements for linear mapping~$g_2$, together with the following analogue of formula~\eqref{g1_1} which must be used for calculating the differentials $d\alpha_i^*$ and~$d\beta_i^*$ for \emph{boundary} vertices:
\begin{equation}
\begin{pmatrix}d\alpha_i^*\\d\beta_i^*\end{pmatrix}=\begin{pmatrix}-1&0&\zeta_i\\
2\zeta_i&1&-\zeta_i^2\end{pmatrix} \begin{pmatrix}dt_{\kappa}^{(a)*} \\ dt_{\kappa}^{(b)*} \\ dt_{\kappa}^{(c)*} \end{pmatrix}.
\label{g2_2}
\end{equation}

Finally, if $\Omega_{ijkl}$ is obtained according to~$G_3$ and we calculate the derivative $\partial \Omega_{ijkl}/\partial \varphi_{ij}^*$ at the
point where $L_{ij}=-2\vec e_i\vec e_j =2\varkappa_i\varkappa_i (\zeta_i-\zeta_j)^2$ and similarly for $L$'s with other indices, we get
\begin{equation*}
\frac{\partial \Omega_{ijkl}}{\partial \varphi_{ij}^*}=-128 (\zeta_i-\zeta_j)(\zeta_k-\zeta_l) \prod_{r<s} (\zeta_r-\zeta_s),
\end{equation*}
where in the product both $r$ and $s$ take values $i,j,k,l$, and ``$<$'' in ``$r<s$'' means just the alphabetic order. This suggests us to denote
\begin{equation*}
dy_{ijkl}^*=-\frac{d\Omega_{ijkl}}{128\prod_{r<s} (\zeta_r-\zeta_s)},
\end{equation*}
which yields
\begin{equation}
\frac{\partial y_{ijkl}^*}{\partial \varphi_{ij}^*} = \frac{1}{\zeta_{ij}\zeta_{kl}}.
\label{yphi}
\end{equation}
By definition, \eqref{yphi} gives matrix elements for linear mapping~$g_3$.

Hence, the resulting ``microscopic'' sequence is
\begin{equation}
0\longrightarrow \mathfrak{so}(3,\mathbb C) \stackrel{g_1}{\longrightarrow} (d\alpha^*)\oplus(d\beta^*)\oplus(dt^*) \stackrel{g_2}{\longrightarrow} (d\varphi^*) \stackrel{g_3}{\longrightarrow} (dy^*),
\label{crmicroconj-modified}
\end{equation}
with obvious notations for linear spaces.

\subsection{Gluing the halves together}
\label{subsec:gluing}

Comparing (\ref{yphi}) with (\ref{f3}), we see that $f_3$ and~$g_3$ are related by matrix transposing:
\begin{equation}
g_3=f_3^{\mathrm T}.
\label{g3f3}
\end{equation}
This remarkable observation is the key for joining together our complexes \eqref{crmicro} and~\eqref{crmicroconj-modified}. Moreover, comparing the formulas \eqref{f4_1} and~\eqref{f4_2} with \eqref{g2_1} and~\eqref{g2_2}, and also \eqref{f5} with \eqref{g1_1} and~\eqref{g1_2}, we find that $f_4$ and $f_5$ are nothing else than $g_2$ and $g_1$ transposed :
\begin{equation}
f_4=g_2^{\mathrm T},\quad f_5=g_1^{\mathrm T}.
\label{f4g2,f5g1}
\end{equation}

We can thus write our complex~\eqref{complex_formal} in a slightly less formal way:
\begin{multline}
0\longrightarrow \mathfrak{psl}(2,\mathbb C) \stackrel{f_1}{\longrightarrow} (dz) \oplus (ds) \stackrel{f_2}{\longrightarrow} (dy)
\stackrel{f_3}{\longrightarrow} (d\varphi)\\
\stackrel{f_4}{\longrightarrow} (d\alpha)\oplus(d\beta) \oplus (dt) \stackrel{f_5}{\longrightarrow} \mathfrak{so}(3,\mathbb C)^* \longrightarrow 0.
\label{complex2}
\end{multline}
Here, $(d\alpha)$, $(d\beta)$, $(dt)$ and~$\mathfrak{so}(3,\mathbb C)^*$ can be considered just as convenient notations for some spaces of column vectors which are in an obvious sense dual to our spaces $(d\alpha^*)$, $(d\beta^*)$ $(dt^*)$ and~$\mathfrak{so}(3,\mathbb C)$ respectively; instead of $\mathfrak{so}(3,\mathbb C)^*$, we could also write $\mathfrak{psl}(2,\mathbb C)^*$, because of the well-known isomorphism between these Lie algebras.

\begin{remark}
\label{rem:right}
We have thus finished the different proof of theorem~\ref{th:complex_formal}: the equalities $f_4\circ f_3=0$ and $f_5\circ f_4=0$ follow by differentiation from theorem~\ref{th-G-global}, using \eqref{g3f3} and the definitions~\eqref{f4g2,f5g1}.
\end{remark}

To finish this section, we think it reasonable to write our complex \eqref{complex_formal} and~\eqref{complex2} in a still more informal and informative way:
\begin{multline}
0 \to \mathfrak{psl}(2,\mathbb C)\stackrel{f_1}{\to}
\left( \begin{smallmatrix}
\textrm{inner vertex}\\ \textrm{coordinate}\\ \textrm{differentials } dz \\ \textrm{and boundary}\\ \textrm{component}\\ \textrm{sways }ds
\end{smallmatrix} \right) \stackrel{f_2}{\to}
\left( \begin{smallmatrix}
\textrm{differentials } dy \\ \textrm{in all tetrahedra}
\end{smallmatrix} \right) 
\\ \stackrel{f_3}{\to} \left( \begin{smallmatrix}
\textrm{differentials } d\varphi \\ \textrm{for all inner edges}\\ \textrm{and some} \\ \textrm{boundary edges}
\end{smallmatrix} \right) \stackrel{f_4}{\to}
\left( \begin{smallmatrix}
\textrm{inner vertex}\\ \textrm{``conjugate coordinate}\\ \textrm{differentials'' } d\alpha \textrm{ and } d\beta \\ \textrm{and boundary component}\\ \textrm{``conjugate sways'' }dt
\end{smallmatrix} \right) \stackrel{f_5}{\to} \mathfrak{so}(3,\mathbb C)^* \to 0 \, .
\label{complex}
\end{multline}

\section{Torsion and a set of invariants}
\label{sec:inv}

The vector spaces in our complex~\eqref{complex_formal} (which we write also in the form \eqref{complex2} or~\eqref{complex}) are spaces of column vectors, which means that they have chosen preferred bases; they are called thus \emph{based} vector spaces. Basis vectors correspond to either triangulation simplexes (vertices, edges, tetrahedra) or some naturally chosen generators of the Lie algebra (formulas \eqref{basis_psl} and~\eqref{basis_so}).

\begin{remark}
\label{rem:basis_order}
As stated in the beginning of section~\ref{sec:complexes}, we are constructing a set of invariants where every individual invariant corresponds to an \emph{ordered} set~$\mathcal D$ of ``marked'' boundary edges. Note though that we \emph{do not} specify the order of basis vectors corresponding to other triangulation simplexes, which will soon result in our invariants being defined up to an overall sign.
\end{remark}

We say that a \emph{$\tau$-chain} is chosen in a complex~$C=(0\to V_0 \stackrel{f_1}{\to} V_1 \stackrel{f_2}{\to} \dots )$
of based vector spaces~$V_i$ if a collection~$\alpha_i$ of basis vectors is chosen in each~$V_i$; the complement of this collection is denoted~$\overline\alpha_i$. To a $\tau$-chain, a collection of submatrices of~$f_i$ corresponds in the following way: the rows for the submatrix of~$f_i$ correspond to~$\alpha_i$, while the columns --- to~$\overline\alpha_{i-1}$. The $\tau$-chain is called \emph{nondegenerate} if all these submatrices are square and nondegenerate.

\begin{lemma}
\label{lemma:acyclic}
A chain complex over a field admits a nondegenerate $\tau$-chain if and only if it is acyclic, i.e., all its homologies are zero.
\qed
\end{lemma}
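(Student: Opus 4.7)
My plan is to prove the two implications separately; both reduce to a single rank count. Write $n_i=\dim V_i$ and $a_i=|\alpha_i|$, and adopt the convention that the zero arrows $0\to V_0$ and $V_N\to 0$ bounding the complex also contribute submatrices to the nondegeneracy condition, which forces $a_0=0$ and $a_N=n_N$. Without this convention the lemma would fail: for example, the non-acyclic complex $0\to\mathbb C\stackrel{0}{\to}\mathbb C\to 0$ would spuriously admit the ``$\tau$-chain'' $\alpha_0=\{*\}$, $\alpha_1=\emptyset$, whose only submatrix is the vacuously nondegenerate $0\times 0$ empty block.

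For the direction $(\Rightarrow)$, suppose a nondegenerate $\tau$-chain is given. Nondegeneracy of the submatrix of $f_i$ with rows $\alpha_i$ and columns $\overline{\alpha}_{i-1}$ simultaneously delivers the squareness identity $a_i+a_{i-1}=n_{i-1}$ and the rank lower bound $\mathrm{rank}\,f_i\ge a_i$. Adding the rank bounds at indices $i$ and $i-1$ yields $\mathrm{rank}\,f_i+\mathrm{rank}\,f_{i-1}\ge n_{i-1}$, while the chain-complex condition $\mathrm{im}\,f_{i-1}\subseteq\ker f_i$ combined with rank-nullity gives the reverse inequality. Equality throughout is exactly $\mathrm{im}\,f_{i-1}=\ker f_i$, so the complex is acyclic in all middle degrees; the end-of-complex conventions supply injectivity of $f_1$ and surjectivity of $f_N$.

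For $(\Leftarrow)$, assuming the complex is acyclic, I would build the $\alpha_i$ inductively from left to right while maintaining the invariant that $\mathrm{span}(\overline{\alpha}_{i-1})$ is a vector-space complement to $\ker f_i=\mathrm{im}\,f_{i-1}$ in $V_{i-1}$. Starting from $\alpha_0=\emptyset$, at the step $i-1\mapsto i$ the restriction $f_i|_{\mathrm{span}(\overline{\alpha}_{i-1})}$ is injective by the invariant and has image $\mathrm{im}\,f_i$ of dimension $n_{i-1}-a_{i-1}$; expressed in the basis of $V_i$, this restriction is a rectangular block of the matrix of $f_i$ with full column rank, so some $n_{i-1}-a_{i-1}$ rows of $V_i$'s basis form a nondegenerate square submatrix, and these rows define $\alpha_i$. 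The invariant is preserved at the next step because $V_i=\mathrm{im}\,f_i\oplus\mathrm{span}(\overline{\alpha}_i)$ by a dimension count. The right-hand boundary $|\alpha_N|=n_N$ is automatic since $f_N$ is surjective in an acyclic complex. I do not foresee a genuine obstacle: the whole argument is elementary linear algebra, and the main effort is simply keeping the dimension arithmetic straight through the induction.
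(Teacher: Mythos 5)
The paper does not actually prove this lemma --- it is quoted from Turaev's monograph~\cite{turaev} --- so there is no internal argument to compare with; judged on its own, your rank-counting proof is essentially correct and gives a self-contained elementary argument for the cited fact. Your preliminary remark on the boundary convention is well taken: if the squareness/nondegeneracy condition is imposed only for the genuine arrows $f_1,\dots,f_N$, the non-acyclic complex $0\to\mathbb C\stackrel{0}{\to}\mathbb C\to 0$ does admit the spurious chain you describe, and your convention $a_0=0$, $a_N=n_N$ is exactly the one the paper uses implicitly (in the proof of theorem~\ref{th:minors_nonzero}, $\minor f_1$ takes \emph{all} columns of $f_1$ and $\minor f_5$ \emph{all} rows of $f_5$). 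The ``$\Rightarrow$'' direction, combining $\mathrm{rank}\,f_i\ge a_i$, the squareness identity $a_i+a_{i-1}=n_{i-1}$, and the reverse bound from $\mathrm{im}\,f_{i-1}\subseteq\ker f_i$, is clean and complete.

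One step of the ``$\Leftarrow$'' direction is under-justified: the preservation of the invariant. The dimension count gives $\dim\mathrm{im}\,f_i+\dim\mathrm{span}(\overline\alpha_i)=n_i$, but by itself it cannot exclude a nontrivial intersection $\mathrm{im}\,f_i\cap\mathrm{span}(\overline\alpha_i)$; for an arbitrary choice of $a_i$ rows (with a degenerate block) the sum need not be direct --- take $f_i=\left(\begin{smallmatrix}1\\0\end{smallmatrix}\right)$ and $\alpha_i$ the second row. What saves you is precisely the nondegeneracy of the block you have just chosen: the coordinate projection of $V_i$ onto $\mathrm{span}(\alpha_i)$ along $\mathrm{span}(\overline\alpha_i)$ maps $\mathrm{im}\,f_i=f_i\bigl(\mathrm{span}(\overline\alpha_{i-1})\bigr)$ isomorphically onto $\mathrm{span}(\alpha_i)$, since its matrix is exactly that invertible block; hence $\mathrm{im}\,f_i\cap\mathrm{span}(\overline\alpha_i)=0$, and the dimension count then upgrades the sum to all of $V_i$. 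With this one-line insertion the induction closes, and the terminal condition $a_N=n_N$ indeed follows from surjectivity of $f_N$, as you say.
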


The proof of this lemma, as well as theorem~\ref{th:torsion} below, can be found e.g.\ in the monograph~\cite{turaev}.

For an acyclic complex~$C$, its (Reidemeister) \emph{torsion} is the following alternated product:
\begin{equation}
\tau(C) \stackrel{\rm def}{=} \prod_i (\minor f_i)^{i+1},
\label{torsion_def}
\end{equation}
where the minors are determinants of the submatrices in a nondegenerate $\tau$-chain. This makes sense due to the following classical theorem:

\begin{theorem}
\label{th:torsion}
Up to a sign, $\tau(C)$ does not depend on the choice of a nondegenerate $\tau$-chain. \qed
\end{theorem}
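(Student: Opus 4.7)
My approach is to reduce Theorem~\ref{th:torsion} to the classical invariance of the \emph{Milnor torsion} attached to a choice of bases for boundaries and their lifts, by showing that each nondegenerate $\tau$-chain canonically produces such a system of bases, and that the Milnor-style alternating product associated with it coincides with the expression~\eqref{torsion_def}.

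Step 1 (bases from a $\tau$-chain). Fix a nondegenerate $\tau$-chain $\{\alpha_i\}$ and set $b_i = f_i(\overline{\alpha}_{i-1}) \subset V_i$. Nondegeneracy of the submatrix of $f_{i+1}$ with rows $\alpha_{i+1}$ and columns $\overline{\alpha}_i$ means that $f_{i+1}$ is injective on $\mathrm{span}(\overline{\alpha}_i)$, i.e.\ $\overline{\alpha}_i \cap \ker f_{i+1} = 0$. Combining this with acyclicity ($\ker f_{i+1} = \mathrm{im}\, f_i$) and the dimension constraints forced by squareness of all the submatrices, a direct count shows that $b_i$ is a basis of $\mathrm{im}\, f_i$ and that $b_i \cup \overline{\alpha}_i$ is a basis of $V_i$.

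Step 2 (minor as a transition determinant). Express each element of $b_i \cup \overline{\alpha}_i$ in the preferred basis $\alpha_i \cup \overline{\alpha}_i$. The resulting transition matrix has block-triangular form: the $\overline{\alpha}_i$-columns contribute an identity block, while the $b_i$-columns coincide with the columns of $f_i$ indexed by $\overline{\alpha}_{i-1}$, whose $\alpha_i$-rows form precisely the submatrix appearing in~\eqref{torsion_def}. Its determinant is therefore, up to sign, $\minor f_i$. Substituting into~\eqref{torsion_def} identifies $\tau(C)$ with an alternating product of transition determinants $[\, b_i \cup \overline{\alpha}_i / e_i \,]^{\pm 1}$, which is the Milnor-torsion expression for the chosen system.

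Step 3 (invariance via Milnor's lemma). It is classical (and this is the content of the lemma cited from~\cite{turaev}) that for any acyclic based complex, the alternating product $\prod_i [\, b_i \cup \tilde b_i / e_i \,]^{(-1)^i}$, taken over any choice of bases $b_i$ of $\mathrm{im}\, f_i$ together with lifts $\tilde b_i$ satisfying $f_{i+1}(\tilde b_i) = b_{i+1}$, does not depend on the choices up to sign. The proof proceeds by checking that a change $b_i \mapsto A_i b_i$ forces, via $f_i$, the compensating change $\tilde b_{i-1} \mapsto A_i \tilde b_{i-1}$, so the factors $\det A_i$ enter two consecutive terms with opposite exponents and cancel. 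Applying this to the two systems $(b_i, \overline{\alpha}_i)$ coming from any two nondegenerate $\tau$-chains finishes the argument.

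The main obstacle is the sign bookkeeping: the block-triangular form in Step~2 requires reordering the preferred basis into ``$\alpha_i$ first, $\overline{\alpha}_i$ second'', reorderings within $\alpha_i$ and $\overline{\alpha}_i$ introduce further signs, and the Milnor lemma itself produces signs from the basis changes. None of these affect the conclusion since the torsion is asserted only up to sign, but organizing them cleanly is exactly the work done in~\cite{turaev}, which is why the authors are content to cite it.
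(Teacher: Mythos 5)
Your proposal is correct: Steps 1--3 are exactly the standard Milnor/Turaev argument --- a nondegenerate $\tau$-chain yields bases $b_i\cup\overline\alpha_i$ of the $V_i$ whose transition determinants are $\pm\minor f_i$, and the alternating product is then invariant under change of lifts by the usual cancellation of consecutive factors. The paper itself gives no argument, simply referring to~\cite{turaev}, and what you have written is precisely the classical proof contained there.
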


Thus, the torsion of our complex~\eqref{complex_formal} written for a certain set~$\mathcal D$, defined up to a sign, is
\begin{equation}
\tau_{\mathcal D} = \frac{\minor f_1 \, \minor f_3 \, \minor f_5}{\minor f_2 \, \minor f_4},
\label{torsion}
\end{equation}
if \eqref{complex_formal} has a nondegenerate $\tau$-chain. Actually, a typical situation is that it has such chain for some sets~$\mathcal D$ while does not for other~$\mathcal D$. The aim of the following theorem is to provide the most uniform approach to the complexes for all~$\mathcal D$, and to extend the definition of torsion to the case where a nondegenerate $\tau$-chain does not exist.

\begin{theorem}
\label{th:minors_nonzero}
A $\tau$-chain for complex~\eqref{complex_formal} can be chosen in such way that all minors, except maybe $\minor f_3$, will be nonzero. Moreover, these four minors can be chosen in such way that they do not depend on~$\mathcal D$.
\end{theorem}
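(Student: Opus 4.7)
The plan is to exploit the block structure of $f_1$ and $f_5$ to trivialize two of the four minors at $\pm 1$, and to confine the entire $\mathcal D$-dependence of the torsion inside $\minor f_3$ by forcing every $\mathcal D$-edge into $\alpha_3$. First, fix once and for all a distinguished boundary component~$\kappa_0$ (say $\kappa_0=1$). For $\alpha_1$ I would take the three sway basis vectors $ds_{\kappa_0}^{(a)},ds_{\kappa_0}^{(b)},ds_{\kappa_0}^{(c)}\in V_1$; by the second block of~\eqref{f1} the corresponding $3\times 3$ submatrix of $f_1$ is the identity, so $\minor f_1=\pm 1$. Dually, for $\overline\alpha_4$ I would take $dt_{\kappa_0}^{(a)},dt_{\kappa_0}^{(b)},dt_{\kappa_0}^{(c)}\in V_4$; by~\eqref{f5} the corresponding $3\times 3$ submatrix of $f_5$ is again the identity, yielding $\minor f_5=\pm 1$. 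Both values are manifestly $\mathcal D$-independent; and since the submatrices giving $\minor f_1$, $\minor f_2$ and $\minor f_5$ make no reference to the basis of~$V_3$, they cannot possibly depend on~$\mathcal D$ at all.

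For $\minor f_4$ I would additionally require $\mathcal D\subset\alpha_3$, so that $\overline\alpha_3$ is made up of inner-edge basis vectors alone. Two points deserve comment. First, the dimension inequality $|\overline\alpha_3|=2N'_0+3(m-1)\le N'_1$ must hold; combining the identity $N'_1+|\mathcal D|=N'_0+N_3$ (read off from the proof of lemma~\ref{lemma:boundary_edges}) with the ``$\ge 4$ vertices per boundary component'' assumption and the fact that every inner vertex is surrounded by at least four inner edges (its link is a triangulated 2-sphere), yields this bound. Second --- and this is the structural key --- every column of $f_4$ indexed by a $\mathcal D$-edge $ij$ is identically zero. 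Indeed both endpoints of such an $ij$ lie on one boundary component~$\kappa$, so $d\varphi_{ij}$ does not enter any inner-vertex line of~\eqref{f4_1}; and the boundary-vertex variant of~\eqref{f4_1} sums only over \emph{inner} edges, so $d\varphi_{ij}$ cannot feed into any $dt_\kappa$ via~\eqref{f4_2} either. Conceptually, under the identification $f_4=g_2^{\mathrm T}$ of subsection~\ref{subsec:gluing}, this is the observation that the normalized squared length $\varphi_{ij}^*$ of~\eqref{eq:phiij} is preserved by the rigid sway $t_\kappa^*$. Consequently the submatrix giving $\minor f_4$ is forced to draw its columns from the inner-edge block, and so its value is automatically $\mathcal D$-independent.

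The main obstacle, as I see it, is to produce simultaneously \emph{nondegenerate} choices of $\alpha_2$ and of the inner-edge part of $\overline\alpha_3$. I would attack this through the macroscopic picture of section~\ref{sec:macroscopic_complexes}. For $\minor f_2$: the relevant $(N'_0+3(m-1))\times(N'_0+3(m-1))$ submatrix is, up to relabeling, the Jacobian of~$F_2$ at the unperturbed configuration restricted to a complement of the image of~$dF_1$; by theorem~\ref{th-l-glob} and the separation of cross-ratios modulo $\mathrm{PSL}(2,\mathbb C)$ it has maximal rank for generic~$\zeta_i$. Concretely one would select $\alpha_2$ greedily: one tetrahedron per inner vertex~$i$ (contributing the nonzero coefficient $-1/(\zeta_{ij}\zeta_{ik}\zeta_{il})$ from~\eqref{f2_1}) and three tetrahedra per boundary component $\kappa\ne\kappa_0$ exercising the three sway directions through~\eqref{f2_2}; the resulting determinant is then a nonzero rational function of the $\zeta_i$. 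The analogous argument for $\minor f_4$ uses $f_4=g_2^{\mathrm T}$ and theorem~\ref{th-G-global}: the restriction of~$g_2$ to inner-edge rows has full rank modulo $dG_1$ because edge lengths parameterize configurations uniquely modulo $\mathrm{SO}(3,\mathbb C)$, and a greedy choice of inner edges (one ``radial'' edge per inner vertex plus a local spanning set at each $\kappa\ne\kappa_0$) realizes this rank. The remaining combinatorial bookkeeping is to arrange the two greedy selections so that the resulting finite system of polynomial inequalities in~$\zeta_i$ (and~$\varkappa_i$ for the right half) is simultaneously satisfiable, which is automatic on a Zariski-open subset of parameter space.
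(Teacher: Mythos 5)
Your skeleton coincides with the paper's: $\minor f_1$ and $\minor f_5$ from the sways and conjugate sways of one distinguished boundary component (both equal to $1$), columns of $\minor f_4$ drawn from inner edges only, and $\mathcal D$-independence of the four minors read off from these choices. Your observation that every column of $f_4$ indexed by a $\mathcal D$-edge vanishes identically is correct (it is the infinitesimal statement that a rigid sway preserves lengths within one boundary component) and is a nice sharpening of the paper's weaker claim that inner-edge columns suffice. The genuine gap is in the only nontrivial step: the \emph{nondegeneracy} of $\minor f_2$ and $\minor f_4$. Theorem~\ref{th-l-glob} only says that compositions of successive arrows are constant; it gives no rank statement, and ``separation of cross-ratios modulo $\mathrm{PSL}(2,\mathbb C)$'' is exactly the assertion that has to be proved, not invoked. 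The paper proves it by an infinitesimal rigidity propagation valid for \emph{every} admissible choice of coordinates (pairwise distinct $\zeta_i$): once the first boundary component is frozen, $dy_{ijkl}=0$ in a tetrahedron three of whose vertices are already frozen freezes the fourth, because the coefficients in~\eqref{f2_1} never vanish; connectivity of $M$ then kills all $dz_i$ and all remaining sways, so the truncated $f_2$ is injective, and a minimal subset of rows preserving injectivity is automatically a square nondegenerate submatrix. The analogous argument for $g_2=f_4^{\mathrm T}$ (inner-edge lengths plus one frozen component freeze all inner vertices and all other rigid components) both gives $\minor f_4\neq 0$ and yields, as a byproduct, the inequality $N'_1\ge 2N'_0+3(m-1)$ that you tried to get by counting --- note your count ``four inner edges per inner vertex'' only gives $N'_1\ge 2N'_0$ and does not produce the extra $3(m-1)$ when $m\ge 2$.

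Your substitute for this step --- a greedy assignment of one tetrahedron per inner vertex and three per extra boundary component, with the determinant declared ``a nonzero rational function'', followed by a retreat to a Zariski-open set of parameters --- does not close the gap. The chosen tetrahedra need not even be distinct, and the existence of a transversal of nonzero entries does not prevent the determinant from vanishing identically (think of a matrix of ones), so no nonvanishing is actually established; moreover the theorem concerns the complex built from the \emph{given} $\zeta_i$, arbitrary subject only to pairwise distinctness, so a statement ``for generic $\zeta_i$ (and $\varkappa_i$)'' is weaker than what is claimed, and the paper's propagation argument shows the genericity assumption is unnecessary. Finally, your construction presupposes a boundary component $\kappa_0$ exists; the closed case ($\partial M=\varnothing$, used e.g.\ for $S^3$) needs a separate choice, which the paper supplies by taking for $\minor f_1$ the coordinates of the three vertices of a $2$-face and, on the conjugate side, $d\alpha_i^*$, $d\beta_i^*$, $d\alpha_j^*$ for some edge~$ij$.
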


\begin{proof}
We will use the notations of formula~\eqref{complex2}. Consider first the case where $\partial M$ is nonempty. 

For $\minor f_1$, we choose the three basis vectors in space~$(ds)$ corresponding to the sways of one --- call it ``first'' --- boundary component, which gives at once $\minor f_1=1$. 

Then, the subspace of $(dz)\oplus (ds)$ corresponding to the sways of other boundary components and all inner coordinate differentials remains for the columns of~$\minor f_2$, and we note that the restriction of~$f_2$ on this subspace is \emph{injective}: as the first boundary component is fixed, and $dy_{ijkl}=0$ in every tetrahedron means that if three of its vertex coordinates are fixed, the fourth one is fixed as well, it follows that the preimage of zero, for the remaining part of~$f_2$, is only zero.

This remaining part of~$f_2$ is a rectangular matrix ($f_2$ minus three its columns), and as it gives an injective linear mapping, we can choose a minimal subset of its rows such that that the submatrix with only these rows left is still injective. It is quite easy to see that such submatrix must be square and nondegenerate, so we choose it as the submatrix corresponding to $\minor f_2$.

Going now to the right end of the complex, we will argue in terms of the conjugate matrices $g_1=f_5^{\mathrm T}$ and $g_2=f_4^{\mathrm T}$. For $\minor g_1$, we choose again the three basis vectors in space~$(dt^*)$ corresponding to the sways of the first boundary component. Then, not only the remaining part of~$g_2$ --- without the three columns --- gives an injective linear mapping, but also we can leave in it only the rows corresponding to \emph{inner} edges: fixing the lengths of just inner edges, together with the immobility of the first boundary component, is obviously enough for the immobility of all inner vertices and all other (rigid!) boundary components. So we can choose here again, like we did for $\minor g_2$, a minimal subset of rows, but this time with the additional requirement that they are inner --- and thus we can choose $\minor g_2$, or equivalently $\minor f_4$ not depending on the chosen set~$\mathcal D$ of boundary edges.

Note that we have chosen the other three minors, not dealing with edges at all, in an obviously independent from~$\mathcal D$ way.

It remains to note that if $\partial M$ is empty, then the previous reasoning is still valid if we choose, for instance, for $\minor f_1$ the three basis vectors in space~$(dz)$ corresponding to the coordinates of three vertices of some two-dimensional face in the triangulation, and for $\minor g_1$ --- the three basis vectors in space~$(d\alpha^*) \oplus (d\beta^*)$ corresponding to, say $d\alpha_i^*$, $d\beta_i^*$ and $d\alpha_i^*$ for some edge~$ij$.
\end{proof}

Due to theorem~\ref{th:minors_nonzero}, we can --- and will --- assume that, for a given triangulated manifold~$M$, the minors of $f_1$, $f_2$, $f_4$ and~$f_5$ are always calculated in one standard way. This fixes also the basis vectors corresponding to the columns of $\minor f_3$, namely those not used for the rows of $\minor f_2$, as well as the basis vectors corresponding to the rows of $\minor f_3$, namely those not used for the columns of $\minor f_4$. The thus obtained $\minor f_3$ is the only one to depend on~$\mathcal D$, and it can turn into zero, which is equivalent (as one can easily see) to complex~\eqref{complex2} being not acyclic. Even in this case, we define the torsion by formula~\eqref{torsion}.

\begin{theorem}
\label{th:invD}
The quantity
\begin{equation}
I_{\mathcal D} = \frac{\tau_{\mathcal D}}{2 \prod\nolimits'\zeta_{ij}^2} ,
\label{invD}
\end{equation}
where the dashed product goes over all inner edges\footnote{Note that our definition~\eqref{invD} slightly differs from~\cite[formula~(50)]{kkm}, where also $\zeta_{ij}^2$ corresponding to boundary edges outside~$\mathcal D$ were included in the product. Our present definition is more convenient for uniting all~$I_{\mathcal D}$ in a ``generating function'', see section~\ref{sec:genfun}.}, is an invariant of manifold~$M$ with the fixed boundary triangulation and given set~$\mathcal D$ of marked boundary edges. 
\end{theorem}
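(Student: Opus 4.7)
The plan is to reduce the claim to invariance under \emph{relative} Pachner moves, as discussed in subsection~\ref{subsec:pachner_moves}: since the boundary triangulation and the set $\mathcal D$ are fixed, it suffices to show that $I_{\mathcal D}$ is unchanged under the two types of relative moves in the interior of $M$, namely $2\leftrightarrow 3$ and $1\leftrightarrow 4$. Both moves are analyzed by localizing the computation of~$\tau_{\mathcal D}$ via the multiplicativity of Reidemeister torsion applied to the obvious short exact sequence of based complexes that isolates the simplices affected by the move.

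For the $2\to 3$ move, let $ij$ denote the single new inner edge. The move modifies only the submatrices of $f_2$ and $f_3$ corresponding to the two old and three new tetrahedra together with the new edge $ij$. By Theorem~\ref{th:torsion} and the multiplicativity just mentioned, the ratio of the new and old torsions is a local factor determined by the coefficients in~\eqref{f2_1} and~\eqref{f3} for the affected simplices. A direct computation --- morally equivalent to differentiating the pentagon identity~\eqref{fffff} of Theorem~\ref{th:fffff} and reading off coefficients of the Grassmann monomials --- shows that this local factor equals $\zeta_{ij}^2$. Since $\prod'\zeta_{ij}^2$ also gains exactly the factor $\zeta_{ij}^2$, the invariant $I_{\mathcal D}$ is unchanged.

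For the $1\to 4$ move a new inner vertex $p$ appears together with four inner edges $p\,i_1,\dots,p\,i_4$, and one tetrahedron $i_1 i_2 i_3 i_4$ is replaced by four tetrahedra meeting at $p$. Now all of $f_1,\dots,f_5$ acquire new rows or columns. The relevant localized subcomplex is again finite and built only from the new simplices. The symmetry $f_4=g_2^{\mathrm T}$, $f_5=g_1^{\mathrm T}$ established in subsection~\ref{subsec:gluing} nearly halves the labor: the contributions from the right half of~\eqref{complex_formal} are transposes of those from the left. Evaluating the resulting determinants using the explicit formulas~\eqref{f1}--\eqref{f5} and factoring out a Vandermonde in the coordinates~$\zeta_{i_k}$ yields that the local torsion factor is precisely $\prod_{k=1}^{4}\zeta_{p\,i_k}^2$, matching the change of $\prod'\zeta_{ij}^2$ in the denominator of~\eqref{invD}.

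The main obstacle will be the $1\to 4$ case, for two reasons. First, the raw state sum~\eqref{s} vanishes as soon as an inner vertex is present, so $1\to 4$ invariance is genuinely new input that cannot be read off from Theorem~\ref{th:fffff} alone and must be established at the level of the renormalized quantity~\eqref{torsion}. Second, care is required to ensure that the $\tau$-chain used in the localized subcomplex is compatible with the global standard choice fixed by Theorem~\ref{th:minors_nonzero}, so that no spurious sign or $\mathcal D$-dependence is introduced when gluing the local computation back into the global one; the freedom granted by Theorem~\ref{th:torsion} provides exactly enough flexibility to arrange this. The factor of $2$ in~\eqref{invD} plays no role in the invariance argument and is present only to normalize the generating function of section~\ref{sec:genfun}.
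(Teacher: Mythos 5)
Your proposal follows essentially the same route as the paper: reduce the statement to invariance under relative Pachner moves $2\leftrightarrow 3$ and $1\leftrightarrow 4$ in the interior (the paper cites~\cite{dkm} for the fact that such moves suffice), and then check that the change of the torsion~$\tau_{\mathcal D}$ under each move is exactly compensated by the new factors~$\zeta_{ij}^2$ for the newly created inner edges (the paper delegates this verification to the proof of~\cite[Theorem~7]{kkm}, whereas you sketch the local determinant computation yourself). The local factors you assert ($\zeta_{ij}^2$ for the new inner edge in $2\to 3$, and $\prod_k \zeta_{p\,i_k}^2$ for $1\to 4$) are the correct ones, so the argument is sound, modulo carrying out the computations you indicate.
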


\begin{proof}
As we already mentioned in subsection~\ref{subsec:pachner_moves}, the transition between different triangulations of the interior of~$M$, given a fixed triangulation of~$\partial M$, is achieved by a sequence of \emph{relative} Pachner moves --- moves not changing the boundary triangulation. The proof of this for one specific sort of boundary (specially triangulated torus) has been presented in~\cite[Theorem~1]{dkm}, and it is an easy exercise to make obvious changes so that it will work in the general case.

On the other hand, the proof that \eqref{invD} does not change under relative Pachner moves just repeats the proof of~\cite[Theorem~7]{kkm}.
\end{proof}

\begin{remark}
\label{rem:ordering_D}
The invariant~\eqref{invD} is determined up to a sign depending on the ordering of vertices, edges and tetrahedra used when calculating the minors in~\eqref{torsion}. One can see, however, that if, for a given~$M$ and its boundary triangulation,
\begin{itemize}
	\item a fixed ordering of boundary edges is given, and every set~$\mathcal D$ inherits, by definition, this ordering, and
	\item in the ordering of all edges, boundary edges by definition precede inner edges,
\end{itemize}
then the collection of invariants~\eqref{invD}, for \emph{all}~$\mathcal D$, is determined up to \emph{one overall} sign.
\end{remark}

\begin{remark}
\label{rem:1/2}
We introduced the factor~$1/2$ in~\eqref{invD}\footnote{which was not done in paper~\cite{kkm}} so as to make the invariant of sphere~$S^3$ (closed manifold, so $\mathcal D = \varnothing$) equal to~$1$. This invariant can be calculated directly from formula~~\eqref{invD} using, e.g., the simplest triangulation of two tetrahedra.
\end{remark}

\section{Generating functions of Grassmann variables}
\label{sec:genfun}

\subsection{Generating functions for a rectangular matrix}
\label{subsec:genfun_matrices}

Here we develop a version\footnote{This is a simplified construction as compared to paper~\cite{tqft2} where we were dealing with \emph{sums} of matrices (extended if necessary by additional rows and/or columns of zeros), while in the present paper, we are dealing just with their \emph{concatenations}.} of our construction of a generating function of anticommuting variables put in correspondence to a matrix~$A$. In this subsection, $A$~is an arbitrary matrix whose entries are complex-valued expressions, with the only condition that the number of rows is not smaller than the number of columns.

With each row~$k$ of~$A$, we associate a Grassmann generator~$a_k$, while with the whole matrix~$A$ --- the \emph{generating function} defined as
\begin{equation}
\mathbf f_A = \sum_{\mathcal C} \det A|_{\mathcal C} \prod_{k\in \mathcal C} a_k,
\label{genfun}
\end{equation}
where $\mathcal C$ runs over all subsets of the set of rows of the cardinality equal to the number of columns; $A|_{\mathcal C}$ is the square submatrix of~$A$ containing all rows in~$\mathcal C$; the order of~$a_k$ in the product is the same as the order of rows in~$A|_{\mathcal C}$ (e.g., the most natural --- increasing --- order of~$k$'s in both).

\begin{lemma}
\label{lemma:multiplication}
Let $C$ be the concatenation of matrices $A$ and~$B$ having the equal number of rows: $C=\begin{pmatrix} A & B \end{pmatrix}$. Then
$$
\mathbf f_C = \mathbf f_A \mathbf f_B .
$$
\end{lemma}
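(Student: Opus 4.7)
The plan is to expand $\mathbf{f}_A \mathbf{f}_B$ directly from definition~\eqref{genfun} and match its coefficients to those of $\mathbf{f}_C$ via the Laplace expansion of a determinant. Let $p$ denote the number of columns of $A$ and $q$ that of $B$, so that $C$ has $p+q$ columns, while the row set is common to $A$, $B$, and $C$. Writing out the product,
\[
\mathbf{f}_A \mathbf{f}_B = \sum_{\mathcal{C}_A,\, \mathcal{C}_B} \det A|_{\mathcal{C}_A}\, \det B|_{\mathcal{C}_B} \Bigl(\prod_{k \in \mathcal{C}_A} a_k\Bigr) \Bigl(\prod_{l \in \mathcal{C}_B} a_l\Bigr),
\]
where $|\mathcal{C}_A| = p$, $|\mathcal{C}_B| = q$, and each inner product is taken in increasing order of indices.

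Next I would invoke the Grassmann relation $a_k^2 = 0$: only disjoint pairs $(\mathcal{C}_A, \mathcal{C}_B)$ contribute. For such a disjoint pair, setting $\mathcal{C} = \mathcal{C}_A \sqcup \mathcal{C}_B$, reordering the variables gives
\[
\Bigl(\prod_{k \in \mathcal{C}_A} a_k\Bigr) \Bigl(\prod_{l \in \mathcal{C}_B} a_l\Bigr) = \varepsilon(\mathcal{C}_A, \mathcal{C}_B) \prod_{k \in \mathcal{C}} a_k,
\]
with the right-hand product in increasing order, where the shuffle sign is $\varepsilon(\mathcal{C}_A, \mathcal{C}_B) = (-1)^N$ and $N$ counts the number of pairs $(k,l)$ with $k \in \mathcal{C}_B$, $l \in \mathcal{C}_A$ and $k < l$.

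The final step is to recognize the Laplace expansion of $\det C|_{\mathcal{C}}$ along its first $p$ columns:
\[
\det C|_{\mathcal{C}} = \sum_{\mathcal{C} = \mathcal{C}_A \sqcup \mathcal{C}_B,\ |\mathcal{C}_A|=p} \varepsilon(\mathcal{C}_A, \mathcal{C}_B)\, \det A|_{\mathcal{C}_A}\, \det B|_{\mathcal{C}_B},
\]
whose sign is exactly the same shuffle sign. Regrouping the double sum in $\mathbf{f}_A \mathbf{f}_B$ according to $\mathcal{C}$ and comparing the coefficients of each $\prod_{k \in \mathcal{C}} a_k$ (for $|\mathcal{C}|=p+q$) with those of $\mathbf{f}_C$ then yields $\mathbf{f}_A \mathbf{f}_B = \mathbf{f}_C$.

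The only genuinely delicate point is the sign bookkeeping: one must verify that the sign obtained by reordering the Grassmann factors coincides with the sign appearing in the Laplace expansion. Both equal $(-1)^N$ for the same $N$, but making this identification explicit --- keeping track of the fixed increasing order of indices in $\mathcal{C}_A$, $\mathcal{C}_B$ and $\mathcal{C}$ --- is the main (and essentially only) obstacle in the argument.
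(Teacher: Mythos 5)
Your proof is correct and follows essentially the same route as the paper, whose own proof simply cites the Laplace-type expansion $\minor C = \sum \pm \minor A\,\minor B$ for full-column minors of the concatenation. You merely make explicit the sign bookkeeping (shuffle sign versus Laplace sign, both $(-1)^N$ for the same inversion count $N$) that the paper leaves to the reader.
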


\begin{proof}
The lemma easily follows from the expansion of the form
\begin{equation}
\minor C = \sum \pm \minor A \, \minor B,
\label{pm}
\end{equation}
known from linear algebra, for every minor of~$C$ having the full number of columns.
\end{proof}

Let there be now a subset~$\mathcal I$ of ``marked'' rows of~$A$. We call the rows in~$\mathcal I$ \emph{inner}, while the rest of rows --- \emph{outer}, and we define the \emph{generating function of matrix~$A$ with the set~$\mathcal I$ of inner edges} as
\begin{equation}
{}_{\mathcal I}\mathbf f_A = \sum_{\mathcal C \supset \mathcal I} \det\nolimits' A|_{\mathcal C} \prod_{k\in \mathcal C \setminus \mathcal I} a_k.
\label{genfunI}
\end{equation}
Here $\det\nolimits'$ means that, unlike in~\eqref{genfun}, we are changing the order of $A$'s rows in the following way: all inner rows are brought to the bottom of the matrix; the order of rows within the set~$\mathcal I$ and its complement is conserved; the order of~$a_k$'s in the product (where $k$ belongs to the mentioned complement) is the same as the order of rows~$k$.

\begin{lemma}
\label{lemma:integration}
The generating function of matrix~$A$ with the set~$\mathcal I$ of inner edges is the following Berezin integral of the usual generating function:
\begin{equation}
{}_{\mathcal I}\mathbf f_A = \int \mathbf f_A \prod_{l\in\mathcal I}^{\leftarrow} da_l,
\label{f_inner}
\end{equation}
the arrow above the product means that the differentials are written in the reverse (with respect to the order of rows in~$A$) order.
\end{lemma}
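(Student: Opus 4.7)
The plan is to evaluate $\int \mathbf f_A \prod_{l\in\mathcal I}^{\leftarrow} da_l$ by inserting the defining expansion \eqref{genfun} and matching each surviving term with the corresponding term in \eqref{genfunI}. The main subtlety is a sign bookkeeping argument showing that the sign from permuting Grassmann generators cancels exactly with the sign from permuting rows of $A|_{\mathcal C}$.

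First, I would note that the Berezin integral $\int \cdot \, da_l$ kills any monomial not containing $a_l$. Therefore, when we apply $\int \cdot \, \prod_{l \in \mathcal I}^{\leftarrow} da_l$ to $\mathbf f_A = \sum_{\mathcal C} \det A|_{\mathcal C} \prod_{k \in \mathcal C} a_k$, only subsets $\mathcal C$ with $\mathcal C \supset \mathcal I$ contribute, which matches the index of summation in \eqref{genfunI}.

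Next, for each such $\mathcal C$, I would rearrange the product $\prod_{k \in \mathcal C} a_k$, originally written in the natural (increasing) order of $\mathcal C$, into the form $\prod_{k \in \mathcal C \setminus \mathcal I}^{\nearrow} a_k \cdot \prod_{l \in \mathcal I}^{\nearrow} a_l$, that is, outer generators first, inner generators last, each ordered increasingly. This rearrangement contributes a sign $\mathrm{sgn}(\sigma)$, where $\sigma$ is the permutation that groups the indices of $\mathcal I$ to the right. Applying the same permutation $\sigma$ to the rows of $A|_{\mathcal C}$ transforms $\det A|_{\mathcal C}$ into $\mathrm{sgn}(\sigma)\,\det' A|_{\mathcal C}$, precisely the determinant appearing in \eqref{genfunI}. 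The two signs cancel, giving
$$
\det A|_{\mathcal C} \prod_{k \in \mathcal C} a_k \;=\; \det\nolimits' A|_{\mathcal C} \cdot \prod_{k \in \mathcal C \setminus \mathcal I}^{\nearrow} a_k \cdot \prod_{l \in \mathcal I}^{\nearrow} a_l.
$$

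Finally, I would evaluate the iterated integral on the right-hand factor: the innermost integral $\int (\cdots) a_{l_r}\, da_{l_r}$ removes $a_{l_r}$ (using $\int a_l\,da_l = 1$ and the factorization axiom \eqref{integral_Berezina}), and successive integrations strip off $a_{l_{r-1}}, \dots, a_{l_1}$ in turn, leaving $\det' A|_{\mathcal C} \prod_{k \in \mathcal C \setminus \mathcal I} a_k$. The choice to reverse the order of the differentials (indicated by $\leftarrow$) is precisely what makes this stripping work without any residual sign. Summing over $\mathcal C \supset \mathcal I$ reproduces \eqref{genfunI}, completing the proof. The only step requiring genuine care is the sign cancellation in the second paragraph; all other steps are routine consequences of the Berezin axioms.
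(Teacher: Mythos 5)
Your proposal is correct and follows essentially the same route as the paper's proof: only subsets $\mathcal C \supset \mathcal I$ survive the Berezin integration, the sign from moving the inner generators to the right cancels against the sign from moving the corresponding rows of $A|_{\mathcal C}$ to the bottom (yielding $\det'$), and the reversed order of the differentials then strips off the inner generators without residual signs. No gaps; the sign bookkeeping you single out is exactly the point the paper's footnote addresses.
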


\begin{proof}
First, we note that only those terms in~$\mathbf f_A$ survive the integration in the r.h.s.\ of~\eqref{f_inner} which contain all the~$a_k$ for~$k\in \mathcal I$. We take the function~$\mathbf f_A$ as defined in~\eqref{genfun}, leave only the mentioned terms in it, and note that none of them is changed if we bring both the rows~$k$ in~$A$ for all~$k\in \mathcal I$ to the bottom of the matrix and the corresponding generators~$a_k$ to the right in the product\footnote{because any elementary permutation of rows brings a minus sign which cancels out with the minus brought by the corresponding permutation of~$a_k$'s}, neither changing the order within~$\mathcal I$ nor within its complement. Then, the integration in~\eqref{f_inner} just takes away the~$a_k$ for~$k\in \mathcal I$, as required.
\end{proof}

\subsection{Generating function for invariants of a manifold with triangulated boundary}
\label{subsec:genfun_invariants}

To produce a generating function whose coefficients are the invariants~\eqref{invD}, we take the following matrix:
\begin{equation}
A=\frac{1}{2\prod\nolimits' \zeta_{ij}^2} \, \frac{\minor f_1\, \minor f_5}{\minor f_2\, \minor f_4} \, \tilde f_3 \, ,
\label{A}
\end{equation}
where $\tilde f_3$ is the submatrix of the Jacobian matrix $(\partial \varphi_{ij} / \partial y_a)$ containing the columns and rows corresponding to tetrahedra~$a$ and edges~$ij$ not used in $\minor f_2$ and $\minor f_4$ respectively. In particular, $\tilde f_3$ contains the rows corresponding to \emph{all boundary edges}.

Looking at the dimensions in formula~\eqref{complex_formal}, one can deduce that $\tilde f_3$ has $( N_1-2N'_0-3m+3 )$ rows and $(N_3-N'_0-3m+3)$ columns. Hence, the fact that $A$ has not less rows than columns follows from lemma~\ref{lemma:boundary_edges}.

As the rows of~\eqref{A} correspond to triangulation edges, so do the Grassmann variables on which $\mathbf f_A$ depends.  

We want a function depending only on \emph{boundary} Grassmann variables, so we pass on to function ${}_{\mathcal I}\mathbf f_A$ where $\mathcal I$ is the set of those inner edges that correspond to the rows of~$\tilde f_3$; we call it the \emph{generating function for invariants of manifold~$M$ with triangulated boundary} and denote as
\begin{equation}
\mathbf I_M \stackrel{\rm def}{=} {}_{\mathcal I}\mathbf f_A = \int \mathbf f_A \prod^{\leftarrow}_{\textrm{edges in }\mathcal I} da_{ij}\, .
\label{I_M}
\end{equation}

According to remark~\ref{rem:ordering_D}, our generating functions are determined up to a sign.

\begin{remark}
\label{rem:tetrahedron}
One can see now that the expression~\eqref{fbf} is nothing but $2\,\mathbf I_M$ for $M$ being a single tetrahedron considered as a manifold with boundary. Moreover, it will become clear soon (remark~\ref{rem:pentagon_proved}) that the l.h.s.\ and r.h.s.\ of~\eqref{fffff} are the~$2\,\mathbf I_M$ for $M$ being the l.h.s.\ and r.h.s.\ respectively of Pachner move~$2\to 3$.

In this paper, we reserve the name ``tetrahedron function'' for the expression~\eqref{fbf} --- the \emph{doubled} generating function of invariants for a single tetrahedron.
\end{remark}

\section{Changing the boundary triangulation, and a lemma about the state sum}
\label{sec:boundary_change}

If we change the boundary triangulation of manifold~$M$, the new function~$\mathbf I_M$ can be expressed in terms of the old one. Any boundary triangulation change can be achieved using a sequence of \emph{two-dimensional} Pacher moves. Namely, there are moves $1\to 3$, $2\to 2$ and $3\to 1$, which correspond to gluing a new tetrahedron to the boundary by one, two or three of its faces respectively\footnote{and the faces on the boundary to which the tetrahedron is glued must form a star of a 2-, 1- or 0-simplex respectively}.

\begin{lemma}
\label{lemma:13}
A move $1\to 3$ corresponds to multiplying $\mathbf I_M$ by the tetrahedron function~\eqref{fbf}.
\end{lemma}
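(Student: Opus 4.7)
The plan is to track how $\mathbf{I}_M$ changes when a new tetrahedron $ijkl$ is glued to $M$ along a boundary face $ijk$, introducing one new boundary vertex $l$ and three new boundary edges $il, jl, kl$, while creating no new inner vertex and no new inner edge (the old face $ijk$ merely becomes interior).

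I first verify that the four normalization minors $\minor f_1, \minor f_2, \minor f_4, \minor f_5$, chosen as in the proof of theorem~\ref{th:minors_nonzero}, may be taken identically for $M$ and the enlarged manifold $M'$: the new tetrahedron row of $f_2$ can be placed in $\overline{\alpha}_2$, and since $\minor f_4$ was arranged to use only inner edges (none of which have changed), it is literally the same submatrix for both manifolds. The inner-edge product $\prod' \zeta_{ij}^2$ is also unchanged, so the scalar prefactor in definition~\eqref{A} of $A$ is the same for $M$ and $M'$.

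Next I analyze the block structure of $\tilde f_3^{M'}$. Relative to $\tilde f_3^M$, it has exactly one extra column (for the new tetrahedron) and three extra rows (for the new boundary edges $il, jl, kl$, which are automatically among the rows of $\tilde f_3^{M'}$). Since $il, jl, kl$ lie on no other tetrahedron, these three new rows vanish on every old column, giving
\[
\tilde f_3^{M'} = \begin{pmatrix} \tilde f_3^M & \mathbf v_{\text{old}} \\ 0 & \mathbf v_{\text{new}} \end{pmatrix} .
\]
A direct reading of the entries of the new column $\mathbf v$ from~\eqref{f3}, with the sign flips coming from vertex reorderings inside each $dy_{\bullet\bullet\bullet\bullet}$ carefully accounted for, yields its six nonzero coefficients in the form $\pm\zeta_{rs}\zeta_{tu}$ with precisely the sign pattern of~\eqref{fbf}; hence $\mathbf f_{\mathbf v} = \mathbf f_{ijkl}$.

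To conclude, I pad $\tilde f_3^M$ with three zero rows (which changes no minor, since any minor containing a zero row vanishes) so that $\tilde f_3^{M'}$ is displayed as the concatenation $(\hat f_3^M \mid \mathbf v)$; by lemma~\ref{lemma:multiplication},
\[
\mathbf f_{\tilde f_3^{M'}} = \mathbf f_{\tilde f_3^M} \, \mathbf f_{ijkl} .
\]
Taking the unchanged normalization into account, this lifts to $\mathbf f_{A_{M'}} = \mathbf f_{A_M} \mathbf f_{ijkl}$ up to the global sign allowed by remark~\ref{rem:ordering_D}. Finally, the inner-edge set $\mathcal I$ over which one Berezin-integrates in~\eqref{I_M} is unchanged, and $\mathbf f_{ijkl}$ depends only on boundary Grassmann generators, so it pulls out of the integral; lemma~\ref{lemma:integration} then yields the desired identity $\mathbf I_{M'} = \mathbf f_{ijkl} \cdot \mathbf I_M$. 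I expect the main obstacle to be the careful bookkeeping of signs from orientations and from the ordering of new boundary edges relative to the old ones.
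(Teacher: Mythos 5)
Your proposal is correct and follows essentially the same route as the paper's proof: the change of the relevant Jacobian matrix under gluing the new tetrahedron $a$ is the zero-padded concatenation with the $(6\times1)$ column $(\partial\varphi_{ij}/\partial y_a)$, whose generating function is exactly the tetrahedron function~\eqref{fbf}, so lemma~\ref{lemma:multiplication} gives the factorization, and since the factor involves only boundary Grassmann variables it passes through the Berezin integral~\eqref{I_M}. Your extra care in checking that $\minor f_1,\minor f_2,\minor f_4,\minor f_5$, the inner-edge product and the integration set~$\mathcal I$ are unchanged just makes explicit what the paper summarizes as ``only $f_3^{\rm full}$ changes in formula~\eqref{A}''.
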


\begin{proof}
Neither new inner vertices nor new inner edges appear in this case. So, first, only $f_3^{\rm full}$ changes in formula~\eqref{A}. Second, the change of~$f_3^{\rm full}$ can be described as adding to it the $(6\times 1)$-matrix $A_a = (\partial \varphi_{ij} / \partial y_a)$ written for the new tetrahedron~$a$, with both matrices first extended by zeros in rows and columns corresponding to ``missing'' edges and tetrahedra (the new $f_3^{\rm full}$ will have, of course, three new rows and one column with respect to the old one). Calculating explicitly matrix~$A_a$ and using lemma~\ref{lemma:multiplication}, we see that $\mathbf f_A$ is multiplied by the tetrahedron function. As $\mathbf I_M$, both before and after the move, is the integral~\eqref{I_M} of corresponding $\mathbf f_A$, and the tetrahedron function plays the role of constant with respect to the integration, one comes to the statement of the lemma.
\end{proof}

\begin{lemma}
\label{lemma:22}
A move $2\to 2$ corresponds to multiplying $\mathbf I_M$ by the tetrahedron function~\eqref{fbf} and then integration in the Grassmann variable living on the edge which becomes inner.
\end{lemma}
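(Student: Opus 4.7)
The plan is to follow the proof of Lemma~\ref{lemma:13}, while additionally tracking the effects of an edge that passes from the boundary to the interior. Gluing a new tetrahedron~$T$ to~$M$ along two of its faces (which share the boundary edge~$e$) produces exactly: one new tetrahedron, one new boundary edge~$e'$ (namely the edge of~$T$ opposite to~$e$), and the change of status of~$e$ from boundary to inner; no new inner vertex appears. My first step would be to check that the minors $\minor f_1,\minor f_2,\minor f_4,\minor f_5$ can be chosen identical to those for the old~$M$: $\minor f_1$ and $\minor f_5$ involve only sways; $\minor f_2$ can avoid the new row coming from~$T$; and $\minor f_4$ can use the same set of inner edges as before (the newly inner edge~$e$ is simply left out of that selection). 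So the only data in~\eqref{A} that change are $\tilde f_3$ itself and the normalization factor $\prod\nolimits'\zeta_{ij}^2$, which acquires a $\zeta_e^2$.

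Structurally, $\tilde f_3^{\mathrm{new}}$ is $\tilde f_3^{\mathrm{old}}$ with one zero row appended for~$e'$ and one new column~$c_T$ appended for~$T$. Reading off this column from~\eqref{f3} (with the orientation convention for $dy_{ijkl}$), its entries are supported on the six edges of~$T$ and have the form $\pm\zeta_{pq}\zeta_{rs}$ for pairs of opposite edges $pq,rs$, with equal coefficients on opposite edges. A direct check then identifies the generating function $\mathbf f_{c_T}$ with the tetrahedron function~$\mathbf f_T$ of~\eqref{fbf}, which is already implicit in Remark~\ref{rem:tetrahedron}.

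Applying Lemma~\ref{lemma:multiplication} to the concatenation $\tilde f_3^{\mathrm{new}}=(\tilde f_3^{\mathrm{old,ext}}\mid c_T)$ then factorizes $\mathbf f_{A^{\mathrm{new}}}$ as $\mathbf f_{A^{\mathrm{old}}}\cdot\mathbf f_T$, up to the overall scalar $1/\zeta_e^2$ coming from the updated normalization. Finally I would invoke Lemma~\ref{lemma:integration}: since $\mathcal I^{\mathrm{new}}=\mathcal I^{\mathrm{old}}\cup\{e\}$, one extra Berezin differential $da_e$ enters the product on the outside, and because $\mathbf f_T$ does not involve any of the old inner Grassmann variables, the $\mathcal I^{\mathrm{old}}$-integration can be carried out first, replacing $\mathbf f_{A^{\mathrm{old}}}$ by~$\mathbf I_M$. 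This yields the stated relation $\mathbf I_{M'}=\int\mathbf I_M\,\mathbf f_T\,da_e$, with the factor $1/\zeta_e^2$ naturally attached to the integration $\int\cdot\,da_e$ just as in the pentagon equation~\eqref{fffff}.

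The hard part will be the sign and ordering bookkeeping: the $\det'$ reordering in Lemma~\ref{lemma:integration}, the orientation-dependent signs from the cyclic orderings in~\eqref{f3}, and the precise placement of~$da_e$ within the ordered product of integration differentials all need to be coordinated so that the overall sign ambiguity in~$\mathbf I_M$ (Remark~\ref{rem:ordering_D}) is respected, and so that the scalar $1/\zeta_e^2$ lands in the correct place.
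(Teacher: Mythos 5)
Your proposal is correct and follows essentially the same route as the paper: only the $f_3$-part of~\eqref{A} changes, the appended column for the new tetrahedron reproduces the tetrahedron function via lemma~\ref{lemma:multiplication}, and the edge that becomes inner is handled by lemma~\ref{lemma:integration} (with the extra $\zeta^2$ in the normalization and the sign ambiguity of remark~\ref{rem:ordering_D}, exactly as you note). Your write-up merely spells out details the paper leaves implicit, such as the unchanged minors of $f_1,f_2,f_4,f_5$ and the zero row for the new boundary edge.
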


\begin{proof}
Again, as in the proof of lemma~\ref{lemma:13}, only $f_3^{\rm full}$ changes in formula~\eqref{A}, and this change can be described as adding to it the $(6\times 1)$-matrix $A_a$ (although, this time, the new $f_3^{\rm full}$ will have just \emph{one} new row and one column with respect to the old one). As one boundary edge becomes inner under the move, the multiplication made according to lemma~\ref{lemma:multiplication} must be followed by integration according to lemma~\ref{lemma:integration}.
\end{proof}

\begin{remark}
\label{rem:pentagon_proved}
With lemmas \ref{lemma:13} and~\ref{lemma:22} proved, one can construct the generating functions for the clusters of tetrahedra in l.h.s.\ and r.h.s.\ of Pachner move~$2\to 3$, starting from one tetrahedron and adding more of them. Equation~\eqref{fffff} follows now from theorem~\ref{th:invD}. Note, however, that we have proved~\eqref{fffff} in this way only up to a sign.
\end{remark}

The remaining Pachner move on boundary is $3\to 1$.

\begin{lemma}
\label{lemma:31}
Let a Pachner move~$3\to 1$ on boundary be done by gluing a tetrahedron~$jkli$ to the boundary in such way that vertex~$i$ becomes inner. Then the new~$\mathbf I_{M}$ is obtained from the old one by any of the following ways:
\begin{equation}
\mathbf I_{M}^{\rm new} = \frac{1}{\zeta_{ij}\zeta_{kl}} \int \mathbf I_{M}^{\rm old} \, da_{ij} = \frac{1}{\zeta_{ik}\zeta_{lj}} \int \mathbf I_{M}^{\rm old} \, da_{ik} = \frac{1}{\zeta_{il}\zeta_{jk}} \int \mathbf I_{M}^{\rm old} \, da_{il} \, .
\label{31}
\end{equation}
\end{lemma}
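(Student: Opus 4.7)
The plan is to mimic the proofs of Lemmas~\ref{lemma:13} and~\ref{lemma:22} but with more extensive bookkeeping, since a $3\to 1$ boundary move simultaneously creates a new inner vertex $i$ and three new inner edges $ij,ik,il$, so $f_2$, $f_4$, and the denominator $\prod'\zeta^2$ in~\eqref{A} all change alongside $f_3^{\rm full}$. First I would triangulate $M^{\rm new}$ as $M^{\rm old}$ glued with the tetrahedron $T=jkli$, then extend the $\tau$-chain for $M^{\rm old}$ by placing the new $(T,z_i)$-cell into $\minor f_2^{\rm new}$ and the two new rows $(d\alpha_i,d\beta_i)$ of $f_4$ into $\minor f_4^{\rm new}$ paired with the two columns $d\varphi_{ik},d\varphi_{il}$, while leaving $d\varphi_{ij}$ as a new row of $\tilde f_3^{\rm new}$.

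Formulas~\eqref{f2_1} and~\eqref{f4_1} then give explicit new block determinants: $\minor f_2^{\rm new}/\minor f_2^{\rm old}=\pm(\zeta_{ij}\zeta_{ik}\zeta_{il})^{-1}$, and the $2\times2$ block in $\minor f_4^{\rm new}$ built from $\left(\begin{smallmatrix}1&1\\1/\zeta_{ik}&1/\zeta_{il}\end{smallmatrix}\right)$ gives $\minor f_4^{\rm new}/\minor f_4^{\rm old}=\pm\zeta_{kl}/(\zeta_{ik}\zeta_{il})$. Combining with the new factor $\zeta_{ij}^2\zeta_{ik}^2\zeta_{il}^2$ in $\prod'\zeta^2$, a direct cancellation yields a scalar prefactor change of exactly $1/(\zeta_{ij}\zeta_{kl})$, while $\tilde f_3^{\rm new}$ differs from $\tilde f_3^{\rm old}$ by losing the rows for $ik,il$ (absorbed into $\minor f_4^{\rm new}$) and gaining no column for $T$ (absorbed into $\minor f_2^{\rm new}$). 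Lemma~\ref{lemma:integration} then identifies the resulting change in $\mathbf I_M$ with one Berezin integration against $da_{ij}$, weighted by $1/(\zeta_{ij}\zeta_{kl})$, which is the first equality of~\eqref{31}. The remaining two equalities follow by the symmetric alternative choices of which two columns among $\{d\varphi_{ij},d\varphi_{ik},d\varphi_{il}\}$ to absorb into $\minor f_4^{\rm new}$; since different valid $\tau$-chains yield the same torsion (Theorem~\ref{th:torsion}), the three formulas must agree.

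The main obstacle is reconciling the deletion of two rows from $\tilde f_3$ with only one Grassmann integration. The resolution lies in the fact that the three rows of $f_3^{\rm full}$ corresponding to $d\varphi_{ij},d\varphi_{ik},d\varphi_{il}$ are linked, at every tetrahedron meeting vertex $i$, by the two linear relations coming from $f_4\circ f_3=0$ (one via $d\alpha_i$, one via $d\beta_i$). These relations are ultimately expressions of the classical identity $\zeta_{ij}\zeta_{kl}+\zeta_{ik}\zeta_{lj}+\zeta_{il}\zeta_{jk}=0$, which forces the potentially problematic monomials of $\mathbf I_{M^{\rm old}}$ carrying more than one of $a_{ij},a_{ik},a_{il}$ to cancel out of the integrated expression, so that the three superficially distinct single-variable Berezin integrations in~\eqref{31} all coincide with $\mathbf I_{M^{\rm new}}$.
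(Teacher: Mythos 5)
Your bookkeeping route is not the paper's (the paper disposes of this lemma in three lines by observing that $3\to 1$ is the two-sided inverse of $1\to 3$ and invoking lemma~\ref{lemma:13}), and while your scalar accounting is consistent --- the net factor $\zeta_{ij}\zeta_{ik}^2\zeta_{il}^2/\zeta_{kl}$ from $\minor f_2$, $\minor f_4$ and the new $\zeta^2$-factors is exactly what is needed --- two of your steps are not justified as stated. First, the block factorizations $\minor f_2^{\rm new}/\minor f_2^{\rm old}=\pm(\zeta_{ij}\zeta_{ik}\zeta_{il})^{-1}$ and $\minor f_4^{\rm new}/\minor f_4^{\rm old}=\pm\zeta_{kl}/(\zeta_{ik}\zeta_{il})$ are not automatic: the new row of $f_2$ (tetrahedron $jkli$) and the new columns $d\varphi_{ik},d\varphi_{il}$ of $f_4$ also have entries in the sway and conjugate-sway rows/columns of the boundary component containing $i,j,k,l$, and the existing entries of those sway columns change when $i$ leaves the boundary; the triangular structure you use appears only if that component is chosen as the one whose sways sit in $\minor f_1$ and $\minor f_5$ (as in the proof of theorem~\ref{th:minors_nonzero}). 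This is repairable, but it must be said.

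The genuine gap is in your last paragraph. Equation~\eqref{31} is an identity of Grassmann-algebra elements whose left-hand side contains none of $a_{ij},a_{ik},a_{il}$, so the lemma implicitly asserts that $\mathbf I_M^{\rm old}$ has no monomial containing two of these three generators --- otherwise $\int\mathbf I_M^{\rm old}\,da_{ij}$ would still involve $a_{ik}$ or $a_{il}$ and could not equal $\zeta_{ij}\zeta_{kl}\,\mathbf I_M^{\rm new}$. Your proposed justification via $f_4\circ f_3=0$ ``through $d\alpha_i$ and $d\beta_i$'' does not establish this: in the \emph{old} complex $i$ is a boundary vertex, so $f_4^{\rm old}$ has no rows $d\alpha_i,d\beta_i$ at all (only conjugate-sway rows mixing contributions of all boundary vertices of the component), while in the \emph{new} complex the corresponding relation also involves the old inner edges at $i$ and the column of the new tetrahedron, so no cancellation of the problematic monomials of $\mathbf I_M^{\rm old}$ is actually derived; the Pl\"ucker identity $\zeta_{ij}\zeta_{kl}+\zeta_{ik}\zeta_{lj}+\zeta_{il}\zeta_{jk}=0$ is relevant but is not by itself a proof. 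The paper's argument gets this for free: applying lemma~\ref{lemma:13} to the inverse move gives $\mathbf I_M^{\rm old}=\mathbf f_{ijkl}\,\mathbf I_M^{\rm new}$ with $\mathbf I_M^{\rm new}$ free of $a_{ij},a_{ik},a_{il}$, so $\mathbf I_M^{\rm old}$ is of degree one in this trio and all three equalities of~\eqref{31} follow by reading off the coefficients $\zeta_{ij}\zeta_{kl}$, $\zeta_{ik}\zeta_{lj}$, $\zeta_{il}\zeta_{jk}$ from~\eqref{fbf}. To salvage your route you would need an independent proof of this degree-one property --- and the simplest such proof is precisely that inversion argument.
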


\begin{proof}
A move~$3\to 1$ is the (two-sided) inverse of~$1\to 3$, and in our case $1\to 3$ means gluing a tetrahedron~$ijkl$ (oppositely oriented to~$jkli$) by its face~$jkl$. So, it follows from lemma~\ref{lemma:13} that the coefficient at~$a_{ij}$ in~$\mathbf I_{M}$ before the move~$3\to 1$ must be $\zeta_{ij}\zeta_{kl}$ times the whole~$\mathbf I_{M}$ after the move~$3\to 1$, and the integration in~$da_{ij}$ in~\eqref{31} singles out exactly this coefficient. Other equalities in~\eqref{31} appear if we take edge $ik$ or~$il$ instead of~$ij$.
\end{proof}

To finish this section, we use the technique developed here in proving the following lemma.

\begin{lemma}
\label{lemma:t_i}
The state sum~\eqref{s} of a triangulated closed oriented connected manifold~$M$ is the doubled generating function~$\mathbf I_M$ if the triangulation has no inner vertices and $\partial M$~has exactly one connected component; otherwise, it vanishes.
\end{lemma}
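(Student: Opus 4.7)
The proof dichotomizes along the hypotheses: I must show (I) $\text{state sum} = 2\mathbf{I}_M$ under ``no inner vertices and $\partial M$ connected,'' and (II) vanishing otherwise.

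For (I) I induct on the number of tetrahedra $N_3$. The base $N_3=1$ is immediate: \eqref{s} reduces to $\mathbf{f}_{i_1 i_2 i_3 i_4}$ (no inner edges, no integration), which is $2\mathbf{I}_M$ by remark~\ref{rem:tetrahedron}. For the inductive step I will identify a tetrahedron $T$ attached to $M\setminus T$ along exactly one or two faces --- such a peelable $T$ must exist under the hypotheses, since three-face attachment would create an inner vertex upon removal and four-face attachment would either disconnect $M$ or isolate a component; should direct peeling be blocked, preliminary relative $2\leftrightarrow 3$ Pachner moves (which preserve both $\mathbf{I}_M$ and the state sum) can expose one. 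Removing $T$ realizes a reverse $1\to 3$ or $2\to 2$ boundary Pachner move, and lemmas~\ref{lemma:13} and~\ref{lemma:22} describe how $\mathbf{I}_M$ transforms under such a move; a line-by-line inspection of~\eqref{s} --- noting that the prefactor enlarges to include the new inner edge's $1/\zeta^2$ in the $2\to 2$ case --- shows that the state sum obeys the same rule, closing the induction.

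For (II) I use a factorization argument. The Grassmann variables $a_{ij}$ on edges incident to an inner vertex $i$ (the ``spokes'') appear in $\prod_T\mathbf{f}_T$ only through the subproduct over the star of~$i$; hence the Berezin integrations over these spokes decouple from the rest of~\eqref{s}, reducing vanishing to the claim
\begin{equation*}
G_i \;=\; \int\prod_{T\in\mathrm{star}(i)}\mathbf{f}_T\,\prod_{j\in\mathrm{link}(i)}da_{ij} \;=\; 0.
\end{equation*}
In the simplest link --- the boundary of a tetrahedron (four spokes, four star tetrahedra) --- $G_i$ becomes the determinant of a $4\times 4$ antisymmetric matrix whose Pfaffian equals the Pl\"ucker relation $\zeta_{j_1 j_2}\zeta_{j_3 j_4}-\zeta_{j_1 j_3}\zeta_{j_2 j_4}+\zeta_{j_1 j_4}\zeta_{j_2 j_3}=0$, forcing $G_i=0$. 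For arbitrary links I aim to exhibit a uniform Pl\"ucker-induced linear dependence in the family $\{\mathbf{f}_T\}_{T\in\mathrm{star}(i)}$ that propagates this vanishing independently of link combinatorics. The ``multiple boundary components'' case is handled by an entirely parallel factorization, the vanishing tracing to the $\mathrm{PSL}(2,\mathbb{C})$-worth of ``boundary sway'' redundancy per extra component that complex~\eqref{complex2} renormalizes away but that \eqref{s} cannot absorb.

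The main obstacle is the general-link case of (II). The tetrahedral-link base is clean, but 3D $2\leftrightarrow 3$ Pachner moves within the star preserve the link's vertex count, so one cannot reduce a high-complexity link to the tetrahedral model by state-sum-preserving operations alone; a direct uniform algebraic argument --- exhibiting the Pl\"ucker-type null direction persisting for every link combinatorics --- is where the technical core of the proof lies.
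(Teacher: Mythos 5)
Your part (II) is where the genuine gap lies, and you say so yourself: the Pl\"ucker-type vanishing is only verified for the tetrahedral link, the ``uniform null direction'' for arbitrary link combinatorics is left open, and the multi-component case is only a gesture (there is no inner vertex to factor around there, so the ``entirely parallel factorization'' does not exist as stated). The paper closes exactly this hole by a uniform linear-algebra argument instead of star-by-star identities. By lemma~\ref{lemma:multiplication} the product $\prod\mathbf f_{klmn}$ is the generating function~\eqref{genfun} of the full Jacobian matrix $f_3^{\rm full}=(\partial\varphi_{ij}/\partial y_a)$ (all edges, all tetrahedra), because each tetrahedron function~\eqref{fbf} is the generating function of the corresponding $6\times 1$ column, and $f_3^{\rm full}$ is the concatenation of these columns; lemma~\ref{lemma:integration} then identifies the state sum~\eqref{s}, up to the prefactor, with a sum of $N_3\times N_3$ minors of $f_3^{\rm full}$. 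The null direction you are seeking is simply the image of~$f_2$: the restriction of $f_2$ to the inner-vertex coordinates and the sways of all but one boundary component is injective (proof of theorem~\ref{th:minors_nonzero}), so this image is nonzero as soon as $N'_0>0$ or $m>1$; and $f_3^{\rm full}\circ f_2=0$ --- the differentiated total-angle identity of theorem~\ref{th-l-glob}, which holds for boundary edges as well --- then forces $\mathrm{rank}\,f_3^{\rm full}<N_3$, so every maximal minor, hence the generating function and the state sum, vanishes identically. This handles every link and every number of boundary components in one stroke.

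Part (I) also has a gap. Your induction needs a ``peelable'' tetrahedron attached along one or two faces, but triangulations of a ball need not be shellable, and the fallback ``preliminary relative $2\leftrightarrow 3$ moves can expose one'' is an unproven claim, not a routine fact; the parenthetical reasons you give (three-face attachment creating an inner vertex, four-face attachment disconnecting $M$) do not establish existence of such a tetrahedron for a general triangulation without inner vertices. The paper needs no induction at all: with $N'_0=0$ and $m=1$ the matrices $f_2$ and $f_4$ are of zero size, the standard minors of $f_1,f_2,f_4,f_5$ from theorem~\ref{th:minors_nonzero} all equal~$1$, and $\tilde f_3=f_3^{\rm full}$, so the identification above together with \eqref{A} and \eqref{I_M} gives the state sum $=2\,\mathbf I_M$ directly, with no ordering of the tetrahedra required.
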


\begin{proof}
It is an easy exercise to show, using the same kind of reasoning as in lemmas \ref{lemma:13} and~\ref{lemma:22}, that
\begin{equation*}
(\text{the generating function for matrix }f_3^{\rm full}) = \int \prod \mathbf f_{klmn} \prod\nolimits' da_{ij},
\end{equation*}
where $f_3^{\rm full}=(\partial \varphi_{ij} / \partial y_a) $ is the Jacobian matrix involving \emph{all} tetrahedra~$a$ and \emph{all} edges~$ij$; the first product goes over all tetrahedra~$klmn$, and the dashed product --- over all inner edges~$ij$.

If now the triangulation has no inner vertices and $\partial M$~has exactly one connected component, the minors of $f_1$, $f_2$, $f_4$ and~$f_5$, chosen as in the proof of theorem~\ref{th:minors_nonzero}, are all equal to unity; in the case of $f_2$ and~$f_4$ --- because they are of zero size. This also implies $\tilde f_3=f_3^{\rm full}$ for the function~$\tilde f_3$ defined in subsection~\ref{subsec:genfun_invariants}. Substituting this all into~\eqref{A} and using the definition~\eqref{I_M} of~$\mathbf I_M$ proves the lemma for this case.

If the triangulation does have inner vertices or there are more than one boundary components, a nontrivial $\minor f_2$ appears, which implies that the rank of~$f_3^{\rm full}$ is less than~$N_3$ --- the number of all tetrahedra, and the generating function for matrix~$f_3^{\rm full}$ is the identical zero.
\end{proof}

\section{Gluing manifolds over a boundary component}
\label{sec:gluing}

Our theory deserves the name TQFT if it provides a means to express the generating function of invariants of the result of gluing two manifolds in terms of the generating function of two those manifolds. In this section, we consider this problem for manifolds $M_1$ and~$M_2$ glued over one component of their boundaries; the result of gluing is denoted~$M$; the mentioned boundary component --- closed connected triangulated surface --- is denoted~$\Gamma$; if it is desirable to emphasize that it belongs, specifically, to $M_1$ or~$M_2$, we also denote it (or its copies) as $\Gamma_1 \subset M_1$ and~$\Gamma_2 \subset M_2$.

\subsection{Maximal tree of triangles in $\Gamma$, virtual tetrahedra and virtual edges}
\label{subsec:tree}

We will adopt the following condition on the triangulation of~$\Gamma$: there exists such ordering $i_1,\dots,i_n$ of all vertices in~$\Gamma$ that:
\begin{itemize}
	\item $i_1i_2i_3$ is one of the triangles in the triangulation of~$\Gamma$, we call this triangle~$\Delta_1$;
	\item there exist also such triangles $\Delta_2,\dots,\Delta_{n-2}$ in the triangulation of~$\Gamma$ that, for every $m=4,\dots,n$, triangle~$\Delta_{m-2}$ has $i_m$ as one of its vertices, and also $\Delta_{m-2}$ has a common edge with one of the ``previous'' triangles $\Delta_1,\dots, \allowbreak \Delta_{m-3}$.
\end{itemize}
This technical condition is just for making our work in this section easier; there exist of course plenty of triangulations of any closed orientable two-dimensional~$\Gamma$ satisfying this condition, and we will use some of them in section~\ref{sec:examples}.

We define a \emph{maximal tree of triangles} in~$\Gamma$ as the collection of such triangles $\Delta_1,\dots,\Delta_{n-2}$. We are also going to construct a sequence of \emph{virtual tetrahedra} $t_1,\dots,t_{n-3}$ in the following way. By definition, $t_1$ has $\Delta_1$ and~$\Delta_2$ as two of its faces; two other faces are new --- not present in~$\Gamma$; as $t_1$ has six edges, while $\Delta_1$ and~$\Delta_2$ together --- only five, one edge in~$t_1$ is also new.

Then we proceed by induction: for any $m=2,\dots,n-3$, two of the faces of tetrahedron~$t_m$ are, by definition, $\Delta_{m+1}$ and that triangle~$\Delta'_m$ in the common boundary of the already constructed tetrahedra but not belonging to~$\Gamma$:
$$
\Delta'_m \subset \partial (t_1 \cup \dots \cup t_{m-1}) \setminus \Gamma ,
$$
which has a common edge with~$\Delta_{m+1}$; two other faces, and one edge, are new. Exactly one such triangle~$\Delta'_m$ exists, of course, at any step~$m$; note also that exactly half of (the two-dimensional faces in) the boundary of $t_1 \cup \dots \cup t_{m-1}$ belongs to~$\Gamma$.

After the last step $m=n-3$, we obtain a cluster of tetrahedra having $\Delta_1 \cup \dots \cup \Delta_{n-2}$ as half of its boundary.

According to our construction, while adding every new virtual tetrahedron, we added also a new edge. We have thus obtained a collection of $n-2$ such edges, and we call them \emph{virtual edges}.

Our idea is to express the algebraic complex~\eqref{complex} for~$M$ in terms of algebraic complexes for $M_1$, $M_2$ and~$\Gamma$. We expect all these complexes to be of the same nature as~\eqref{complex}; but $\Gamma$ is just a surface, containing no tetrahedra. So what we do is inflating $\Gamma$ with two (oppositely oriented copies of) clusters of ``virtual tetrahedra'' described above: we take two copies $\Gamma_1$ and~$\Gamma_2$ of~$\Gamma$, glue one cluster to~$\Gamma_1$ and the other to~$\Gamma_2$, then glue the other halves of boundaries of these clusters together, and also identify the triangles in $\Gamma_1$ and~$\Gamma_2$ not belonging to our maximal tree. We call the result ``inflated~$\Gamma$'' and denote as~$\hat{\Gamma}$.

Note that we have thus identified the two copies of each virtual edge, so their number remains $n-2$.

The manifold obtained by gluing $M_1$ and~$M_2$ to the two sides of~$\hat{\Gamma}$ is of course again the same~$M$, but with two additional clusters of tetrahedra in its triangulation. We call this triangulated manifold ``inflated~$M$'' and denote as~$\hat M$.

\subsection{Enlarged complex: description}
\label{subsec:ec_description}

We consider the following algebraic complex, which is the complex~\eqref{complex} for~$\hat M$ with additional direct summands in some terms:
\begin{multline}
0 \to 3 \times \mathfrak{psl}(2,\mathbb C)\stackrel{f_1}{\to}
\left( \begin{smallmatrix}
dz \textrm{ for inner}\\ \textrm{vertices of } \hat M,\\ \textrm{boundary}\\ \textrm{component}\\ \textrm{sways } ds \textrm{ for } \hat M \\ \textrm{and}\\ 2\times (\textrm{sways } ds \textrm{ of } \Gamma)
\end{smallmatrix} \right) \stackrel{f_2}{\to}
\left( \begin{smallmatrix}
\textrm{differentials}\\ dy \\ \textrm{in all tetrahedra}
\end{smallmatrix} \right) \\
\stackrel{f_3}{\to} \left( \begin{smallmatrix}
\textrm{differentials}\\ d\varphi \\ \textrm{for all inner}\\ \textrm{and some boundary}\\ \textrm{edges of } \hat M
\end{smallmatrix} \right) \stackrel{f_4}{\to}
\left( \begin{smallmatrix}
d\alpha \textrm{ and } d\beta \textrm{ for inner}\\ \textrm{vertices of } \hat M,\\ \textrm{boundary component}\\ \textrm{conjugate sways } dt \textrm{ for } \hat M \\ \textrm{and}\\ 2\times (\textrm{conjugate sways } dt \textrm{ of } \Gamma)
\end{smallmatrix} \right) \stackrel{f_5}{\to} 3 \times \mathfrak{so}(3,\mathbb C)^* \to 0 \, .
\label{complex_enlarged}
\end{multline}

Because of the additional direct summands in~\eqref{complex_enlarged}, we must give new definitions for the mappings $f_1,\dots,f_5$.

To begin, it is convenient and relevant to assign subscripts to the three copies of~$\mathfrak{psl}(2,\mathbb C)$ (coming after the left zero), denoting them as $\mathfrak{psl}(2,\mathbb C)_{\hat M}$, $\mathfrak{psl}(2,\mathbb C)_{M_1}$ and~$\mathfrak{psl}(2,\mathbb C)_{M_2}$. Similarly, we denote $ds_{\Gamma_1}$ and~$ds_{\Gamma_2}$ two copies of sways of surface~$\Gamma$ in the second nonzero term from the left in~\eqref{complex_enlarged}. We also denote $dt_{\Gamma_1}$ and~$dt_{\Gamma_2}$ two copies of surface~$\Gamma$ conjugate sways in the second nonzero term from the right, and $\mathfrak{so}(3,\mathbb C)_{\hat M}^*$, $\mathfrak{so}(3,\mathbb C)_{M_1}^*$ and~$\mathfrak{so}(3,\mathbb C)_{M_2}^*$ --- the three copies of~$\mathfrak{so}(3,\mathbb C)^*$ in the term before the right zero.

By definition, $f_1$ acts as follows:
\begin{itemize}
\item $\mathfrak{psl}(2,\mathbb C)_{\hat M}$ acts naturally --- according to~\eqref{f1} --- on~$dz_i$ for all inner vertices~$i$ of~$\hat M$ (including vertices in~$\Gamma$), and on the sways~$ds_{\kappa}$ of boundary components~$\kappa$ of~$\hat M$ (where $\Gamma$ does not enter);
\item $\mathfrak{psl}(2,\mathbb C)_{M_1}$ acts naturally on $dz_i$ for all inner vertices~$i$ of~$M_1$ (but not $M_2$ and not~$\Gamma$), sways of boundary components of~$M_1$ \emph{without}~$\Gamma$, and the first copy~$ds_{\Gamma_1}$ of sways of~$\Gamma$;
\item $\mathfrak{psl}(2,\mathbb C)_{M_2}$ acts naturally on $dz_i$ for all inner vertices~$i$ of~$M_2$, sways of boundary components of~$M_2$ without~$\Gamma$, and the second copy~$ds_{\Gamma_2}$ of sways of~$\Gamma$.
\end{itemize}

Mapping~$f_2$ acts, by definition, as follows:
\begin{itemize}
\item $dz_i$, for all inner vertices~$i$ in~$\hat M$ (including those in~$\Gamma$), act naturally on~$dy_a$ in the adjoining tetrahedra~$a$, that is, according to formula~\eqref{f2_1};
\item the same applies to the sways~$ds_{\kappa}$ of boundary components~$\kappa$ of~$\hat M$, which act according to \eqref{f2_1} and~\eqref{f2_2};
\item sways~$ds_{\Gamma_1}$ act \emph{only} on~$dy_a$ in tetrahedra~$a$ belonging to~$M_1$ (but neither tetrahedra in~$M_2$ nor virtual tetrahedra);
\item sways~$ds_{\Gamma_2}$ act only on~$dy_a$ in tetrahedra~$a$ belonging to~$M_2$.
\end{itemize}

Mapping~$f_3$ just acts in the same way as in~\eqref{complex}, i.e., according to~\eqref{f3}.

Mapping~$f_4$ acts, by definition, as follows:
\begin{itemize}
	\item all differentials~$d\varphi_{ij}$ in the space before arrow~$f_4$ act according to~\eqref{f4_1} on $d\alpha_i$ and~$d\beta_i$ for all inner vertices~$i$ of~$\hat M$ and according to \eqref{f4_1} and~\eqref{f4_2} --- on conjugate sways~$dt_{\kappa}$ of boundary components~$\kappa$ of~$\hat M$;
	\item the differentials~$d\varphi_{ij}$ for edges~$ij$ belonging to~$M_1$ act also, according to \eqref{f4_1} and~\eqref{f4_2}, on conjugate sways~$dt_{\kappa}$ of boundary components~$\kappa$ of~$M_1$;
	\item similarly, the differentials~$d\varphi_{ij}$ for edges~$ij$ belonging to~$M_2$ act also on conjugate sways~$dt_{\kappa}$ of boundary components~$\kappa$ of~$M_2$.
\end{itemize}

Finally, mapping~$f_5$ acts in the following way, symmetric to~$f_1$:
\begin{itemize}
\item contributions to $\mathfrak{so}(3,\mathbb C)_{\hat M}^*$, namely in the sums according to~\eqref{f5}, are made by $d\alpha_i$ and~$d\beta_i$ for all inner vertices~$i$ of~$\hat M$, and conjugate sways~$dt_{\kappa}$ of boundary components~$\kappa$ of~$\hat M$;
\item contributions to $\mathfrak{so}(3,\mathbb C)_{M_1}^*$ are made by $d\alpha_i$ and~$d\beta_i$ for all inner vertices~$i$ of~$M_1$ (but not $M_2$ and not~$\Gamma$), conjugate sways of boundary components of~$M_1$ without~$\Gamma$, and the first copy~$dt_{\Gamma_1}$ of conjugate sways of~$\Gamma$;
\item contributions to $\mathfrak{so}(3,\mathbb C)_{M_2}^*$ are made by $d\alpha_i$ and~$d\beta_i$ for all inner vertices~$i$ of~$M_2$, conjugate sways of boundary components of~$M_2$ without~$\Gamma$, and the second copy~$dt_{\Gamma_2}$ of conjugate sways of~$\Gamma$.
\end{itemize}

\subsection{Enlarged complex in terms of $M$}
\label{subsec:ec_M}

We want to compare the torsion of complex~\eqref{complex_enlarged} with the torsion of the usual complex~\eqref{complex} written for~$\hat M$. To do so, we calculate the torsion of~\eqref{complex_enlarged} choosing $\minor f_1$ in the following special way: we take the $\minor f_1$ which we would use for complex~\eqref{complex}, assume that $\mathfrak{psl}(2,\mathbb C)$ in~\eqref{complex} will correspond to~$\mathfrak{psl}(2,\mathbb C)_{\hat M}$ in~\eqref{complex_enlarged}, and extend this $\minor f_1$ by the rows corresponding to $ds_{\Gamma_1}$ and~$ds_{\Gamma_2}$ and, of course, by the columns corresponding to $\mathfrak{psl}(2,\mathbb C)_{M_1}$ and~$\mathfrak{psl}(2,\mathbb C)_{M_2}$. The appearing ``large'' $\minor f_1$, if written in the most natural way, has a triangular block structure with two of three diagonal blocks being $3\times 3$ identity matrices; it is thus evident that it is simly equal to the original ``small'' $\minor f_1$.

We also choose the ``large'' $\minor f_5$ in a perfectly symmetric way (here, of course, rows are interchanged with columns) and come to the conclusion that it is also equal to the ``small'' $\minor f_5$.

\begin{lemma}
\label{lemma:ec_M}
The torsion of complex~\eqref{complex_enlarged} is equal to the torsion of complex~\eqref{complex} written for~$\hat M$.
\end{lemma}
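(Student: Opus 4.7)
The plan is to pick up exactly where the discussion immediately preceding the lemma leaves off: with the ``large'' $\minor f_1$ and ``large'' $\minor f_5$ already shown, via a block lower triangular structure with two $3\times 3$ identity blocks on the diagonal, to equal the corresponding minors of the ordinary complex~\eqref{complex} for $\hat M$, what remains is to verify that $\minor f_2$, $\minor f_3$ and $\minor f_4$ can also be chosen equal to their ordinary counterparts. The remainder of the argument is essentially bookkeeping of basis vectors along the $\tau$-chain.

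First I would record the consequences of the special choices of $\alpha_1$ and $\overline{\alpha}_4$ already made. Since $\alpha_1$ contains all six basis vectors of $(ds)_{\Gamma_1}\oplus(ds)_{\Gamma_2}$ and, symmetrically, $\overline{\alpha}_4$ contains all six basis vectors of $(dt)_{\Gamma_1}\oplus(dt)_{\Gamma_2}$, the complementary sets $\overline{\alpha}_1$ and $\alpha_4$ contain no $\Gamma$-sways at all. Because the intermediate spaces $V_2=(dy)$ and $V_3=(d\varphi)$ of~\eqref{complex_enlarged} are literally identical to those of~\eqref{complex} written for $\hat M$, I would take $\alpha_2$ and $\alpha_3$ to be the same subsets as in a nondegenerate $\tau$-chain for the ordinary complex for $\hat M$ provided by theorem~\ref{th:minors_nonzero}.

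It then suffices to check that, with these choices, the three middle submatrices of the enlarged complex are \emph{literally} the same matrices as in the ordinary complex. For $\minor f_2$, the columns $\overline{\alpha}_1$ match the ordinary ones and the enlarged $f_2$ is given on them by formulas \eqref{f2_1}--\eqref{f2_2} verbatim; the additional rules of subsection~\ref{subsec:ec_description} concerning $ds_{\Gamma_1},ds_{\Gamma_2}$ only affect columns that have been excluded from this minor. For $\minor f_3$, the defining rule~\eqref{f3} is unchanged in~\eqref{complex_enlarged}. For $\minor f_4$, the rows $\alpha_4$ match the ordinary ones and are governed by the unchanged formulas \eqref{f4_1}--\eqref{f4_2}; the additional rules concerning $dt_{\Gamma_1},dt_{\Gamma_2}$ only affect rows excluded from the minor. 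Substituting the resulting five equalities into~\eqref{torsion} gives the lemma.

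I do not anticipate a serious obstacle. The only subtle point is that we must invoke the extended definition of torsion adopted in section~\ref{sec:inv}, so that no separate nondegeneracy assumption on complex~\eqref{complex_enlarged} is needed: once the five minors of the enlarged and the ordinary complexes coincide as numbers, their alternating products coincide by \eqref{torsion}, even in the possibly degenerate case $\minor f_3=0$.
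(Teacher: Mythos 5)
Your proposal is correct and follows essentially the same route as the paper: the paper's own proof, after the preceding discussion of the ``large'' $\minor f_1$ and $\minor f_5$, simply says that it remains to choose the very same minors of $f_2$, $f_3$ and $f_4$ for~\eqref{complex_enlarged} as for~\eqref{complex}. Your write-up merely makes explicit the bookkeeping (that the excluded rows and columns are exactly the $\Gamma$-sways and their conjugates, and that the retained submatrices are governed by the unchanged formulas \eqref{f2_1}--\eqref{f4_2}), together with the extended-torsion convention of section~\ref{sec:inv}, which is consistent with the paper.
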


\begin{proof}
It remains to choose the very same minors of $f_2$, $f_3$ and~$f_4$ for~\eqref{complex_enlarged} as have been chosen for~\eqref{complex}.
\end{proof}

\subsection{Enlarged complex in terms of $M_1$ and $M_2$}
\label{subsec:ec_M1M2}

Here we start from given minors (used in formula~\eqref{torsion} for torsion) chosen for complexes~\eqref{complex} written for $M_1$ and~$M_2$. Recall that, according to theorem~\ref{th:minors_nonzero}, all minors except $\minor f_3$ can be chosen once and for all, not depending on the choice of marked boundary edges.

From now on, we supply minors belonging to $M_1$ and~$M_2$ with superscripts, writing them as $\minor f_i^{(1)}$ or~$\minor f_i^{(2)}$ respectively, $i=1,\dots,5$. We are going to build minors for complex~\eqref{complex_enlarged} --- for which we reserve the notation $\minor f_i$ --- extending the direct sums of these minors\footnote{To be exact, the direct sum of corresponding \emph{submatrices} is, of course, taken. It is considered as a submatrix of the corresponding~$f_i$ belonging to~$\hat M$.} belonging to $M_1$ and~$M_2$ by new rows and columns. These ``enlarged'' minors may coincide or not with those in subsection~\ref{subsec:ec_M}.

So, we include in $\minor f_1$ the rows\footnote{in addition, of course, to the rows in $\minor f_1^{(1)}$ and~$\minor f_2^{(2)}$} corresponding to $dz_{i_1}$, $dz_{i_2}$ and~$dz_{i_3}$, where vertices $i_1$, $i_2$ and~$i_3$ have been defined in subsection~\ref{subsec:tree}. This gives
\begin{equation}
\minor f_1 = \minor f_1^{(1)} \, \minor f_1^{(2)} \, \frac{dz_{i_1} \wedge dz_{i_2} \wedge dz_{i_3}}{da \wedge db \wedge dc},
\label{m1}
\end{equation}
where $da$, $db$ and~$dc$ belong to $\mathfrak{psl}(2,\mathbb C)_{\hat M}$ and correspond to the three columns which must also be included in~$\minor f_1$.

Then, we include in $\minor f_2$ the rows corresponding to~$dy_a$ in all tetrahedra~$a$ belonging to one of the clusters by which we inflate~$\Gamma$ as described in subsection~\ref{subsec:tree}. We must also include there the columns corresponding to the rest of vertices in~$\Gamma$, so this gives:
\begin{equation}
\minor f_2 = \minor f_2^{(1)} \, \minor f_2^{(2)} \, \frac{\bigwedge_{\textrm{cluster }1} dy_a}{dz_{i_4} \wedge \dots \wedge dz_{i_n}},
\label{m2}
\end{equation}
where $\bigwedge_{\textrm{cluster }1}$ means, of course, the exterior product over one cluster --- we will call this cluster ``first''.

Now we switch to the other end of complex~\eqref{complex_enlarged} and consider~$\minor f_5$. We include in it the \emph{columns} corresponding to (say) $d\alpha_{i_1}$, $d\beta_{i_1}$ and~$d\alpha_{i_2}$. This gives
\begin{equation}
\minor f_5 = \minor f_5^{(1)} \, \minor f_5^{(2)} \, \frac{da^* \wedge db^* \wedge dc^*}{d\alpha_{i_1} \wedge d\beta_{i_1} \wedge d\alpha_{i_2}},
\label{m5}
\end{equation}
where $da^*$, $db^*$ and~$dc^*$ belong to~$\mathfrak{so}(3,\mathbb C)^*_{\hat M}$.

Then, we include in $\minor f_4$ the columns corresponding to~$d\varphi_{ij}$ for all edges~$ij$ in the maximal tree in~$\Gamma_1$. This gives
\begin{equation}
\minor f_4 = \minor f_4^{(1)} \, \minor f_4^{(2)} \, \frac{d\beta_{i_2} \wedge d\alpha_{i_3} \wedge d\beta_{i_3} \wedge \dots \wedge d\alpha_{i_n} \wedge d\beta_{i_n}}{ \bigwedge_{\textrm{tree }1} d\varphi_{ij}},
\label{m4}
\end{equation}
where $\bigwedge_{\textrm{tree }1}$ corresponds to the mentioned maximal tree in~$\Gamma_1$.

We look now at what remains for $\minor f_3$. Its columns, besides those in $\minor f_3^{(1)}$ and~$\minor f_3^{(2)}$, must correspond to~$dy_a$ for the tetrahedra~$a$ in the second cluster of inflated~$\Gamma$. The number of these tetrahedra is the same as the number of virtual edges and, moreover, these tetrahedra are the only remaining tetrahedra\footnote{as the first cluster of virtual tetrahedra is already involved in $\minor f_2$} containing the virtual edges. This leads to a triangular structure of the remaining part of $\minor f_3$ and to the formula
\begin{equation}
\minor f_3 = \minor f_3^{(1,2)} \, \frac{ \bigwedge_{\textrm{virtual}} d\varphi_{ij}}{ \bigwedge_{\textrm{cluster }2} dy_a},
\label{m3}
\end{equation}
where $\minor f_3^{(1,2)}$, in contrast with formulas \eqref{m1}--\eqref{m4}, is not just a product of two minors belonging to $M_1$ and~$M_2$ separately. It is rather the determinant of the submatrix of~$f_3$ whose columns correspond to all tetrahedra in $M_1$ and~$M_2$ except those involved in $\minor f_2^{(1)}$ and~$\minor f_2^{(2)}$, and whose rows correspond to some inner edges of $M_1$ and~$M_2$ (those not involved in $\minor f_4^{(1)}$ and~$\minor f_4^{(2)}$) and all boundary edges of $M_1$ and~$M_2$ except those in the maximal tree of triangles in~$\Gamma_1$ (as they work already in the rightmost factor in~\eqref{m4}). We denote this submatrix~$B$. It is thus the concatenation of its two parts: $B=\begin{pmatrix} B_1 & B_2 \end{pmatrix}$, belonging to $M_1$ and~$M_2$ respectively.

\subsection{The final formula for generating functions}
\label{subsec:gluing_gf}

We introduce now some more notations. The set of edges in~$\Gamma$ not belonging to the maximal tree of triangles is denoted~$\mathcal F$. As we, according to subsection~\ref{subsec:tree}, identify the triangles in $\Gamma_1$ and~$\Gamma_2$ not belonging to the maximal trees, $\mathcal F$~is not duplicated when gluing together $\Gamma_1$, $\Gamma_2$ and the virtual tetrahedra between them. And the set of inner edges in the maximal tree of triangles, considered as two-dimensional triangulated manifold with boundary, is denoted~$\mathcal G$. To be exact, there are two copies of this set, lying one in~$\Gamma_1$ and the other in~$\Gamma_2$, so we denote them $\mathcal G_1$ and~$\mathcal G_2$ respectively.

\begin{lemma}
\label{lemma:ap}
The alternated product of the rightmost factors in the five formulas \eqref{m1}--\eqref{m3} (with the factors corresponding to minors with odd subscripts taken in the power~$+1$, and with even subscripts --- in the power~$-1$) is equal to
\begin{equation}
\frac{dz_{i_1} \wedge dz_{i_2} \wedge dz_{i_3}}{da \wedge db \wedge dc} \cdots \frac{da^* \wedge db^* \wedge dc^*}{d\alpha_{i_1} \wedge d\beta_{i_1} \wedge d\alpha_{i_2}} = \pm 2 \prod_{{\textrm{tree }1}} \zeta_{ij}^2 ,
\label{ap}
\end{equation}
where the product is taken over all edges in one --- first, for instance --- maximal tree of triangles.
\end{lemma}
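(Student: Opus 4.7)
The plan is to interpret each of the five ratios in \eqref{m1}--\eqref{m5} as a Jacobian determinant of an explicit submatrix, and then to compute the alternated product $P$ by induction on $n$, adding one triangle of the tree (together with the corresponding virtual tetrahedron in each cluster) at a time.

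First I would evaluate the two outer factors explicitly. By~\eqref{f1}, the factor from~\eqref{m1} is the Vandermonde-type determinant
\[
\det\begin{pmatrix}2\zeta_{i_1}&1&-\zeta_{i_1}^2\\2\zeta_{i_2}&1&-\zeta_{i_2}^2\\2\zeta_{i_3}&1&-\zeta_{i_3}^2\end{pmatrix}=\pm 2\,\zeta_{i_1i_2}\zeta_{i_1i_3}\zeta_{i_2i_3},
\]
which already produces the $\pm 2$ together with the first powers of the three edges of $\Delta_1$. Using the relation $f_5=g_1^{\mathrm T}$ and~\eqref{g1_1}, the factor from~\eqref{m5} reduces to a $3\times 3$ determinant that evaluates to $\pm\zeta_{i_1i_2}$. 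The remaining task is to show that the three middle factors from \eqref{m2}--\eqref{m4} contribute exactly the complementary $\zeta$-product needed for the full product to equal $\pm 2\prod_{T}\zeta_{ij}^{2}$, where $T$ denotes the set of all tree-1 edges.

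Next I would set up an induction on $n$. The base case $n=4$ (the tree consists of $\Delta_1$ and one further triangle $\Delta_2$ sharing an edge with it; one virtual tetrahedron per cluster, one virtual edge) is a direct calculation of the middle three factors via \eqref{f2_1}, \eqref{f3}, and~\eqref{f4_1}. For the inductive step $n-1\to n$, adding the new triangle $\Delta_{n-2}$ introduces one new vertex $i_n$, two new tree edges $i_ni_p$ and $i_ni_q$ (where $i_pi_q$ is the common edge with a previously constructed triangle), one new virtual tetrahedron $t_{n-3}$ in each cluster, and one new virtual edge $e_{n-3}$. By the inductive construction in subsection~\ref{subsec:tree}, the freshly added rows/columns in the submatrices underlying \eqref{m2}, \eqref{m3}, \eqref{m4} have a block-triangular structure: in $\minor f_2$ the new tetrahedron couples through~\eqref{f2_1} to the newly added column $dz_{i_n}$; in $\minor f_3$, on the cluster-2 side, it couples through~\eqref{f3} only to $d\varphi_{e_{n-3}}$; and analogous triangularity holds in $\minor f_4$ for the pair $d\alpha_{i_n},d\beta_{i_n}$ via~\eqref{f4_1}. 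Cofactor expansion along these new rows/columns reduces $P$ at stage $n$ to $P$ at stage $n-1$ times an explicit product of $\zeta_{ij}$'s.

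The main obstacle is the detailed tally of these extra $\zeta$'s across the five factors: one has to check that the $\zeta_{ij}$'s appearing in the denominators of~\eqref{f2_1} cancel cleanly against those in the numerator of~\eqref{f3} (in particular the contribution of the new virtual edge $e_{n-3}$ cancels between $\minor f_2$ and $\minor f_3$), leaving exactly $\zeta_{i_ni_p}^{2}\zeta_{i_ni_q}^{2}$, the squares of the two freshly added \emph{tree} edges. Once this tally is in hand, the induction closes; the global sign is controlled by fixing a consistent ordering of the basis vectors entering all five factors~\eqref{m1}--\eqref{m5}, which is compatible with the lemma being stated only up to sign.
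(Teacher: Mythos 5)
Your evaluations of the two outer factors are correct (the factor in \eqref{m1} is $\pm 2\,\zeta_{i_1i_2}\zeta_{i_1i_3}\zeta_{i_2i_3}$, the one in \eqref{m5} is $\pm\zeta_{i_1i_2}$), but these are the easy part, and the proof as written has a genuine gap: the entire content of the lemma is precisely the tally of $\zeta$-factors produced by the three middle factors \eqref{m2}--\eqref{m4}, and this is exactly the step you defer (``the main obstacle'', ``once this tally is in hand, the induction closes''). Neither the base case $n=4$ nor the inductive step is actually computed, so the proposal reduces the lemma to an unperformed determinant calculation whose outcome is only guessed. Moreover, at least one of the anticipated cancellations is suspect in direction: by \eqref{f2_1} the $\zeta$ of a freshly added virtual edge sits in the \emph{denominator} of the relevant entry of the factor in \eqref{m2}, while by \eqref{f3} it sits in the \emph{numerator} of the factor in \eqref{m3}; since \eqref{m2} enters the alternated product with exponent $-1$, these two contributions multiply rather than cancel, so the bookkeeping of exactly which $\zeta$'s survive is more delicate than the sketch suggests and cannot be waved through.

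The paper's proof takes a completely different and much shorter route that avoids all of this: the rows and columns singled out in \eqref{m1}--\eqref{m3} form exactly a $\tau$-chain for the complex \eqref{tG} written for the closed manifold $S\cong S^3$ obtained by gluing the two clusters of virtual tetrahedra along the identified copies of the maximal tree of triangles; hence the alternated product is, up to sign, the torsion of that complex (theorem~\ref{th:torsion}), and its value is then read off at once from formula \eqref{invD} together with the normalization $I_{S^3}=1$ of remark~\ref{rem:1/2} --- no cofactor expansions, no induction, no sign tracking beyond the overall $\pm$. If you wish to keep a computational approach you must carry your induction through completely, including the virtual-edge factors and the sign; but the more economical fix is to recognize, as the paper does, that the product you are computing is itself a torsion of an already-understood acyclic complex, so that the machinery of sections~\ref{sec:complexes}--\ref{sec:inv} gives the answer in one step.
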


\begin{proof}
It is convenient to represent the product in the l.h.s.\ of~\eqref{ap} as the torsion of the following acyclic complex corresponding to the part of~$\hat{\Gamma}$ consisting of the two clusters of tetrahedra, \emph{with each edge in~$\mathcal G_1$ identified with the corresponding edge in~$\mathcal G_2$}. We denote by~$S$ the manifold obtained by gluing together the two copies of the maximal tree of triangles, which is of course homeomorphic to~$S^3$, and by $f_1^S,\dots, f_5^S$ --- the mappings $f_1,\dots, f_5$ acting in the standard way in the complex written for~$S$:
\begin{equation}
0 \to \mathfrak{psl}(2,\mathbb C) \stackrel{f_1^S}{\to} (dz) \stackrel{f_2^S}{\to} (dy) \stackrel{f_3^S}{\to} (d\varphi) \stackrel{f_4^S}{\to} (d\alpha) \oplus (d\beta) \stackrel{f_5^S}{\to} \mathfrak{so}(3,\mathbb C)^* \to 0 \, .
\label{tG}
\end{equation}
It is quite easy to see that the l.h.s.\ of~\eqref{ap} is nothing but the torsion of~\eqref{tG}, after which \eqref{ap} follows from formula~\eqref{invD} and remark~\ref{rem:1/2}.
\end{proof}

\begin{theorem}
\label{th:gluing_gf}
The generating function of invariants for manifold~$M$ --- the result of gluing $M_1$ and~$M_2$ over boundary component~$\Gamma$ --- can be expressed as follows:
\begin{equation}
\mathbf I_M =  \frac{4}{\prod_{\mathcal F \cup \mathcal G} \zeta_{ij}^2} \int \mathbf I_{M_1} \mathbf I_{M_2} \prod_{\mathcal F \cup \mathcal G_2} da_{ij} \, .
\label{gluing_gf}
\end{equation}
It is assumed in the Berezin integral in~\eqref{gluing_gf} that the anticommuting variables living on $\mathcal G_1$ and~$\mathcal G_2$ are \emph{different}, while the rest of anticommuting variables are \emph{identified}.
\end{theorem}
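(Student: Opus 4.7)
The plan is to compute $\mathbf I_{\hat M}$ via the enlarged complex of subsection~\ref{subsec:ec_description} and then invoke $\mathbf I_{\hat M} = \mathbf I_M$. First I would reduce to $\hat M$: since $\hat M$ is obtained from $M$ by triangulating a collar of $\Gamma$ with the two clusters of virtual tetrahedra, $\hat M$ is PL-homeomorphic to $M$ with identical boundary triangulations, so Theorem~\ref{th:invD} gives $\mathbf I_{\hat M} = \mathbf I_M$ up to overall sign.

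Next, by Lemma~\ref{lemma:ec_M} the torsion $\tau_{\mathcal D}(\hat M)$ equals the torsion of the enlarged complex \eqref{complex_enlarged}. Substituting the factorizations \eqref{m1}--\eqref{m5} into \eqref{torsion} and applying Lemma~\ref{lemma:ap} to the alternated product of the rightmost factors yields
\begin{equation*}
\tau_{\mathcal D}(\hat M) = \pm 2 \left(\prod_{\text{tree } 1} \zeta_{ij}^2\right) \prod_{k=1,2} \frac{\minor f_1^{(k)} \minor f_5^{(k)}}{\minor f_2^{(k)} \minor f_4^{(k)}} \cdot \minor f_3^{(1,2)} .
\end{equation*}
Multiplying both sides by the Grassmann monomial associated to $\mathcal D$ and summing over all $\mathcal D$, the left side becomes $2\prod\nolimits_M' \zeta_{ij}^2 \cdot \mathbf I_{\hat M}$, while on the right the sum of $\minor f_3^{(1,2)}$-contributions becomes the generating function $\mathbf f_{(B_1\ B_2)}$ of the concatenation $B = (B_1\ B_2)$ from subsection~\ref{subsec:ec_M1M2}.

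The core step is to identify $\mathbf f_{(B_1\ B_2)} = \mathbf f_{B_1} \mathbf f_{B_2}$ (by Lemma~\ref{lemma:multiplication}) with a product involving $\mathbf I_{M_1}$ and $\mathbf I_{M_2}$. Each $B_i$ coincides with the matrix $\tilde f_3^{(i)}$ of the definition \eqref{A}, except that in $B_1$ the rows indexed by the tree-of-$\Gamma_1$ edges are removed (none are removed in $B_2$). Lemma~\ref{lemma:integration} then converts the row manipulations into Berezin integrations: the rows common to $B_1$ and $B_2$, which sit on $\mathcal F$, correspond to identifying the Grassmann variables on $\mathcal F$ across $\mathbf I_{M_1}$ and $\mathbf I_{M_2}$; the rows present only in $B_2$, sitting on $\mathcal G_2$, give the integration $\int \cdots \prod_{\mathcal G_2} da_{ij}$, and together with the $\mathcal F$-integration this assembles $\prod_{\mathcal F \cup \mathcal G_2} da_{ij}$ of the theorem. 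The asymmetric treatment of $\mathcal G_1$ vs $\mathcal G_2$ (distinct Grassmann variables) reflects the fact that these come from disjoint row blocks of $B$.

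Finally, $\zeta$-factor bookkeeping closes the identity: the inner edges of $M$ decompose as (inner of $M_1$) $\sqcup$ (inner of $M_2$) $\sqcup (\mathcal F \cup \mathcal G)$, so $\prod\nolimits_M' \zeta_{ij}^2$ in the definition \eqref{invD} of $\mathbf I_M$ exceeds the combined inner-edge denominators of $\mathbf I_{M_1}$ and $\mathbf I_{M_2}$ exactly by $\prod_{\mathcal F \cup \mathcal G} \zeta_{ij}^2$, yielding the denominator in \eqref{gluing_gf}; the tree-$1$ $\zeta$-factor from Lemma~\ref{lemma:ap} cancels against the corresponding contribution of tree edges in the $\zeta$-decomposition, and the numerical $4$ combines the three $\tfrac{1}{2}$'s from the definitions of $\mathbf I_M, \mathbf I_{M_1}, \mathbf I_{M_2}$ with the factor $2$ from Lemma~\ref{lemma:ap}. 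The hard part will be the core step above: meticulously verifying, row by row, that the removal of tree-$\Gamma_1$ rows from $B_1$ together with the block structure of $(B_1\ B_2)$ and Lemma~\ref{lemma:integration} reproduces exactly the Berezin-integration prescription of \eqref{gluing_gf}, with no stray Grassmann factors and with consistently handled overall signs.
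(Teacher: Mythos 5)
Your route is essentially the paper's own: reduce to $\hat M$, compute the torsion via the enlarged complex (Lemma~\ref{lemma:ec_M}), insert the factorizations \eqref{m1}--\eqref{m3} and Lemma~\ref{lemma:ap}, and treat $\minor f_3^{(1,2)}$ as a determinant of the concatenation $B=\begin{pmatrix}B_1 & B_2\end{pmatrix}$ handled by Lemmas \ref{lemma:multiplication} and~\ref{lemma:integration}, which is exactly how the paper gathers the factors. One caveat, confined to the bookkeeping you yourself defer as ``the hard part'': the counting must be done for $\hat M$, not $M$ --- there the edges of~$\mathcal G$ occur in two copies and the tree edges outside~$\mathcal G$ (the boundary edges of the tree disc, which your decomposition ``inner of $M_1$ $\sqcup$ inner of $M_2$ $\sqcup(\mathcal F\cup\mathcal G)$'' omits) as well as the virtual edges are also inner, so the left-hand side is $2\prod'_{\hat M}\zeta_{ij}^2\,\mathbf I_{\hat M}$ rather than $2\prod'_{M}\zeta_{ij}^2\,\mathbf I_{\hat M}$; these extra factors are precisely what the Lemma~\ref{lemma:ap} contribution cancels, and the paper's proof is no more explicit on this point than you are.
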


\begin{proof}
The coefficients of~$\mathbf I_M=\mathbf I_{\hat M}$ at various monomials corresponding to various choices of set~$\mathcal D$ of marked edges in~$\partial M$ are invariants calculated according to~\eqref{invD}, with the torsion~$\tau_{\mathcal D}$ calculated according to~\eqref{torsion}. So, the proof of the theorem consists in gathering together:
\begin{itemize}
	\item the factors for minors according to \eqref{m1}--\eqref{m3},
	\item the factors of the type~$\zeta_{ij}^{\pm 2}$ according to which inner edges in~$\hat M$ are new with respect to those in $M_1$ and~$M_2$, and to formula~\eqref{ap},
	\item and the degrees of number~$2$ appearing in the definition~\eqref{invD} of the invariant and in~\eqref{ap}.
\end{itemize}
Except for $\minor f_3$, we obtain thus just numerical factors not depending on~$\mathcal D$. The only special situation appears for $\minor f_3$: as explained after formula~\eqref{m3}, it is the determinant of the concatenation of two matrices, so an expansion of the form~\eqref{pm} holds for it. As also some new edges are declared inner, the result, in terms of generating functions, is obtained according to lemmas \ref{lemma:multiplication} and~\ref{lemma:integration}, which leads exactly to~\eqref{gluing_gf}.
\end{proof}

\begin{remark}
\label{rem:asymmetry}
The asymmetry of formula~\eqref{gluing_gf} with respect to $M_1$ and~$M_2$ shows that \eqref{gluing_gf} can be written also in other forms. Recall that even for gluing one tetrahedron to the boundary in the way corresponding to a Pachner move $3\to 1$, we could write formula~\eqref{31} in three different ways.
\end{remark}

\begin{remark}
\label{rem:zero}
The general case of gluing several manifolds by some of their boundary components is reducible to a chain of the following two operations:
\begin{itemize}
	\item gluing two connected manifolds over one boundary component and
	\item gluing two identical but oppositely oriented boundary components of one connected manifold.
\end{itemize}
In this section, we have considered the first operation. As for the second one, the most straightforward approach to it gives identical zero for the generating function of the result of gluing, in the same way as in the ``Euclidean'' case, see~\cite[section~4]{tqft2}. The problem of defining the generating function for such cases in a less trivial way appears to be related with the problem of the invariant for $\Sigma \times S^1$, where $\Sigma$ is a closed surface, and $S^1$ --- a circle, for one approach to it see~\cite[Lemma~3]{tqft2}.
\end{remark}

\section{Boundary components of genus zero and connected sums of manifolds}
\label{sec:connected_sums}

We are going to investigate how our generating functions behave when a connected sum is taken. To make a connected sum of two manifolds, one has first to remove the interior of a ball within each of them, and then glue together the spheres --- boundaries of these balls. As we have studied in section~\ref{sec:gluing} what happens under the gluing, it remains to study what happens when we remove the interior of a ball. It is natural to represent this ball as one of the triagulation tetrahedra.

\begin{lemma}
\label{lemma:t_away}
The generating function of invariants for manifold~$M$ without the interior of one (inner) tetrahedron~$a=ijkl$ --- we call the thus obtained manifold~$M'$ --- is
\begin{equation}
\mathbf I_{M'} = \mathbf I_M \, \mathbf I_a ,
\label{t_away}
\end{equation}
where $\mathbf I_a$ is the generating function\footnote{Recall that $\mathbf I_a$ is, according to remark~\ref{rem:tetrahedron}, \emph{one-half} of the tetrahedron function~\eqref{fbf}.} for tetrahedron~$a$ considered as a manifold with boundary.
\end{lemma}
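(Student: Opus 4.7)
The plan is to compare the matrix $A$ of \eqref{A} for $M$ and for $M'$ and then to apply the matrix-to-generating-function lemmas \ref{lemma:multiplication} and \ref{lemma:integration}. The combinatorial change $M\rightsquigarrow M'$ is very localised: $N_3$ drops by one, $m$ grows by one (the new boundary component $\partial a\cong S^2$), and the four vertices and six edges of $a$ pass from inner to boundary. By the counting in \ref{lemma:boundary_edges} one has $|\mathcal D'|=|\mathcal D|+1$, so I would fix $\mathcal D'=\mathcal D\sqcup\{e\}$ for some edge $e\subset\partial a$; the set $\mathcal I$ of edges over which the Berezin integral of \eqref{I_M} is taken is literally the same for $M$ and $M'$.

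First I would choose the minors $\minor f_1,\ldots,\minor f_5$ for both complexes following the prescription in the proof of \ref{th:minors_nonzero}, using a boundary component of $M$ (still present in $M'$) for $\minor f_1$ and $\minor f_5$; this makes these two minors literally identical for $M$ and $M'$. For $\minor f_2^{M'}$ and $\minor f_4^{M'}$ I would extend the corresponding choices for $M$ by exactly the three new sways and three new conjugate sways of $\partial a$. The block-triangular structure of \eqref{f2_2} and \eqref{f4_2} then expresses these minors as the $M$-versions times a completely explicit Vandermonde-type factor depending only on $\zeta_i,\zeta_j,\zeta_k,\zeta_l$.

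The central step is the analysis of $\tilde f_3$. Going from $M$ to $M'$ removes exactly the column of $\tilde f_3^M$ labelled by $a$, and by formula \eqref{f3} the entries of this column are $\zeta_{pq}\zeta_{p'q'}$ on the six rows indexed by the edges $pq\subset\partial a$ (with $p'q'$ the opposite edge), and zero elsewhere --- precisely the coefficients of the tetrahedron function \eqref{fbf}. Hence $\tilde f_3^M$ is a concatenation $\bigl(\tilde f_3^{M'}\mid C_a\bigr)$ whose one-column block $C_a$ has generating function $\mathbf f_{ijkl}=2\mathbf I_a$ by remark \ref{rem:tetrahedron}. Lemma \ref{lemma:multiplication} turns this into the product factorisation $\mathbf f_{\tilde f_3^M}=\mathbf f_{\tilde f_3^{M'}}\cdot 2\mathbf I_a$, and lemma \ref{lemma:integration} lets the Berezin integral over $\mathcal I$ pass through. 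Combining with the numerical prefactors in \eqref{A} and with the ratio $\prod'_M\zeta_{ij}^2/\prod'_{M'}\zeta_{ij}^2=\prod_{pq\subset\partial a}\zeta_{pq}^2$ then produces the desired identity $\mathbf I_{M'}=\mathbf I_M\,\mathbf I_a$.

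The main obstacle is the numerical bookkeeping of the second and third steps: one must verify that the Vandermonde-like factor introduced when sways of $\partial a$ replace the vertex differentials $dz_i,dz_j,dz_k,dz_l$, the change in the dashed product over inner edges, and the factor $\tfrac12$ of \eqref{invD} conspire exactly to cancel the $2$ coming from $\mathbf f_{ijkl}=2\mathbf I_a$, and that the overall sign matches the convention of remark \ref{rem:ordering_D}. A secondary technical point is that one needs to be able to choose $\minor f_4^M$ so as not to use edges of $\partial a$, which (away from degenerate configurations) follows from the observation that the positions of $i,j,k,l$ can be rigidified via inner edges of $M$ lying outside $\partial a$.
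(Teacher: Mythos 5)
Your central structural claim about $\tilde f_3$ does not survive a dimension count, and this is where the real content of the lemma lies. Passing from $M$ to $M'$ does not merely delete the column labelled by~$a$: the four vertices of~$a$ cease to be inner (so four $dz$-columns of $f_2$ disappear and only three new sway columns appear, changing $\minor f_2$ by more than a triangular block), the six edges of~$a$ turn from inner into boundary edges (changing which edges are available for $\minor f_4$ and which lie in~$\mathcal I$), and $m$ grows by one. Using the paper's formula for the size of $\tilde f_3$ (it has $N_1-2N'_0-3m+3$ rows and $N_3-N'_0-3m+3$ columns), one finds that $\tilde f_3^{M'}$ has exactly the \emph{same} number of columns as $\tilde f_3^{M}$ and \emph{five more} rows; so the claimed concatenation $\tilde f_3^M=\bigl(\tilde f_3^{M'}\mid C_a\bigr)$ is impossible, and the factorisation $\mathbf f_{\tilde f_3^M}=\mathbf f_{\tilde f_3^{M'}}\cdot 2\mathbf I_a$ (which, note, would in any case relate $\mathbf I_M$ to $\mathbf I_{M'}$ in the direction opposite to the statement being proved) does not follow from Lemma~\ref{lemma:multiplication}. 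Your assertion that $\mathcal I$ is ``literally the same'' for $M$ and $M'$ also conflicts with your own requirement that $\minor f_4^M$ avoid the edges of~$\partial a$: if those six edges are inner in $M$ and not used in $\minor f_4^M$, they belong to $\mathcal I_M$ and are integrated away in $\mathbf I_M$, whereas in $M'$ they are boundary edges carrying surviving Grassmann variables. Finally, the ``numerical bookkeeping'' you defer is precisely the hard part, and it is being deferred on top of an incorrect matrix identity, so the argument as it stands does not close.

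For comparison, the paper's proof avoids all matrix-level bookkeeping. It first shows structurally that $\mathbf I_{M'}=\mathbf F\,\mathbf I_a$ for some function $\mathbf F$ of the remaining boundary variables: viewing $\partial a$ as obtained from two triangles $ijk$ by a two-dimensional Pachner move $1\to 3$, Lemma~\ref{lemma:13} gives that $\mathbf I_{M'}$ has degree one in the variables $a_{il},a_{jl},a_{kl}$ with coefficients in the ratios $\zeta_{ij}\zeta_{kl}:\zeta_{ik}\zeta_{lj}:\zeta_{il}\zeta_{jk}$, and the symmetry in $i,j,k,l$ then forces proportionality to the whole tetrahedron function~\eqref{fbf} with no term free of the edges of~$a$. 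The cofactor $\mathbf F$ is then identified as $\mathbf I_M$ by gluing $a$ back and invoking the gluing formula~\eqref{gluing_gf} of Theorem~\ref{th:gluing_gf}. If you want to keep a direct computational route, you would have to track simultaneously the changes in $\minor f_1,\dots,\minor f_5$, in the dashed product over inner edges, and in $\mathcal I$, in the way the paper does for the gluing theorem itself (subsections \ref{subsec:ec_M}--\ref{subsec:gluing_gf}); the one-column-removal picture is not an adequate substitute.
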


\begin{proof}
First, we prove that the generating function for~$M'$ is of degree one in the anticommuting variables at the edges of~$a$. Stepping away for a moment from the agreement in the beginning of section~\ref{sec:complexes} that the number of vertices in each boundary component should be~$\ge 4$, we can regard the surface of tetrahedron~$a$ as obtained from just two triangles~$ijk$ (with identified edges of the same names) by a two-dimensional Pachner move $1\to 3$. It follows then from lemma~\ref{lemma:13} that $\mathbf I_{M'}$ has degree one in the totality of Grassmann generators $a_{il}$, $a_{jl}$ and~$a_{kl}$ and, moreover, the coefficients at these three generators differ only in nonvanishing numerical factors --- namely, $\zeta_{ij}\zeta_{kl}$, $\zeta_{ik}\zeta_{lj}$ and~$\zeta_{il}\zeta_{jk}$ respectively.

As all the vertices $i,j,k,l$ are here on the equal footing, it follows easily that $\mathbf I_{M'}$ has in fact degree one in the totality of all Grassmann generators for the six edges of~$a$, that the coefficients at these generators are proportional to those in the tetrahedron function~\eqref{fbf}, and there cannot be any term in~$\mathbf I_{M'}$ containing no Grassmann generators corresponding to edges of~$a$. This means that
$$
\mathbf I_{M'} = \mathbf F \, \mathbf I_a
$$
for some function~$\mathbf F$ of Grassmann generators living on other (than the surface of~$a$) components of~$\partial M'$.

To find~$\mathbf F$, we glue back tetrahedron~$a$ to~$M'$ and use formula~\eqref{gluing_gf}, which immediately gives $\mathbf F=\mathbf I_M$.
\end{proof}

\begin{theorem}
\label{th:cs}
The generating function of invariants of a connected sum $M = M_1 \# M_2$ of manifolds is the product of generating functions for $M_1$ and~$M_2$. 
\end{theorem}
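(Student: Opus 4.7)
The plan is a direct combination of Lemma~\ref{lemma:t_away} with the gluing formula of Theorem~\ref{th:gluing_gf}, together with the normalization $\mathbf I_{S^3}=1$ stated in Remark~\ref{rem:1/2}. First I would realize $M = M_1 \# M_2$ as the result of two operations: (i) removing the interior of one inner tetrahedron $a_i$ from each $M_i$ to obtain $M'_i$, and (ii) gluing $M'_1$ and $M'_2$ along their newly created boundary spheres $\Gamma = \partial a_1 = \partial a_2$, which can be chosen to carry the same four-vertex triangulation and matching vertex coordinates $\zeta$. By Lemma~\ref{lemma:t_away} the generating functions factor as $\mathbf I_{M'_i} = \mathbf I_{M_i}\,\mathbf I_{a_i}$; here $\mathbf I_{M_i}$ carries only the Grassmann variables on the original $\partial M_i$ (the edges of $a_i$ being inner in $M_i$ and hence already integrated out in the definition~\eqref{I_M}), while $\mathbf I_{a_i}$ depends precisely on the six Grassmann generators on the edges of $a_i$.

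Next I would substitute these factorizations into the gluing formula~\eqref{gluing_gf}. Because every Grassmann variable involved in the Berezin integral, as well as every $\zeta_{ij}$ in the outer factor, belongs to $\Gamma$, and because $\mathbf I_{M_1}$ and $\mathbf I_{M_2}$ depend on none of them, these two factors pull out of the integral, leaving
\begin{equation*}
\mathbf I_{M_1 \# M_2} \;=\; \mathbf I_{M_1}\,\mathbf I_{M_2} \cdot \frac{4}{\prod_{\mathcal F \cup \mathcal G} \zeta_{ij}^2} \int \mathbf I_{a_1}\,\mathbf I_{a_2} \prod_{\mathcal F \cup \mathcal G_2} da_{ij}.
\end{equation*}

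To finish, I would recognize the trailing expression as another instance of Theorem~\ref{th:gluing_gf}, applied now to the gluing of $a_1$ and $a_2$ (with opposite orientations) along their boundary spheres. This gluing produces exactly the two-tetrahedron triangulation of $S^3$ referenced in Remark~\ref{rem:1/2}, so the trailing factor equals $\mathbf I_{S^3} = 1$, and the theorem follows.

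The main obstacle is purely bookkeeping: ensuring that the data feeding Theorem~\ref{th:gluing_gf} is coherent between the two applications --- the same maximal tree of triangles in $\Gamma$, the same virtual tetrahedra, the same sets $\mathcal F$, $\mathcal G_1$, $\mathcal G_2$, and the same pairing of $\zeta$-values at the four sphere vertices --- so that the factor multiplying $\mathbf I_{M_1}\mathbf I_{M_2}$ is literally the one that evaluates $\mathbf I_{S^3}$ on the two-tetrahedron triangulation. Once this alignment is fixed, no further computation is required.
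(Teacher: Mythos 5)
Your proposal is correct and follows essentially the same route as the paper: remove an inner tetrahedron from each $M_i$, apply Lemma~\ref{lemma:t_away}, and glue via Theorem~\ref{th:gluing_gf}. Your extra step of identifying the leftover integral factor with $\mathbf I_{S^3}=1$ is exactly the bookkeeping the paper leaves implicit (and, since the invariant is triangulation-independent, the presence of the virtual tetrahedra in the glued triangulation does not affect this normalization).
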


\begin{proof}
We take one inner tetrahedron in the triangulation of~$M_1$ and one inner tetrahedron in the triangulation of~$M_2$, remove their interiors and glue together their boundaries. Then we use lemma~\ref{lemma:t_away} and formula~\eqref{gluing_gf}.
\end{proof}

\section{Examples of calculations}
\label{sec:examples}

\subsection{Sphere $S^3$}

According to what we have already said in remark~\ref{rem:1/2},
$$
\mathbf I_{S^3} = 1.
$$

\subsection{Solid torus}
\label{subsec:st}

We consider a solid torus with the boundary triangulation whose development is shown in figure~\ref{fig:st}.
\begin{figure}[ht]
\centering
\includegraphics[scale=0.25]{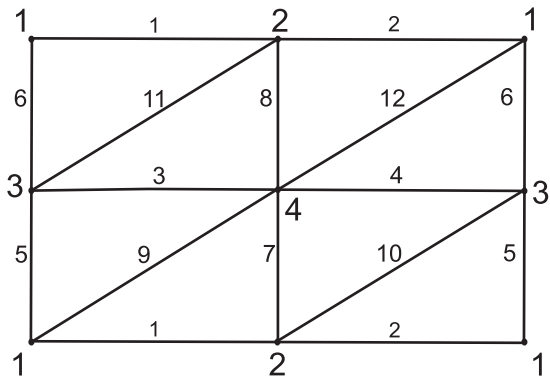}
\caption{Development of the triangulation of a solid torus}
\label{fig:st}
\end{figure}
In it, bigger numbers correspond to vertices, while smaller numbers denote edges and serve as subscripts at the corresponding anticommuting variables. The meridian of the torus goes along edges $5$ and~$6$ (or $7$ and~$8$).

The generating function can be calculated, e.g., using the triangulation of the solid torus of six tetrahedra described in~\cite[Subsection~6.1]{tqft} and using formula~\eqref{s} and lemma~\ref{lemma:t_i}. The answer can be written as
\begin{multline}
\mathbf I_{\textrm{solid torus}} = \frac{1}{2} \, \zeta_{13}^2 \zeta_{24}^2 (a_5-a_6) (a_7-a_8) \bigl( \zeta_{12}\zeta_{34}(a_1+a_3) - \zeta_{13}\zeta_{24}(a_5+a_7) \\ + \zeta_{14}\zeta_{23}(a_9+a_{11}) \bigr) \bigl( \zeta_{12}\zeta_{34}(a_2+a_4) - \zeta_{13}\zeta_{24}(a_5+a_7) + \zeta_{14}\zeta_{23}(a_{10}+a_{12}) \bigr) .
\label{st}
\end{multline}

The function~\eqref{st} is, for instance, efficient enough as to detect the meridians of the torus: if we substitute in~\eqref{st} either $a_6=a_5$ or $a_8=a_7$, it turns into zero, but this by no means happens if we put, say, $a_{12}=a_9$ or $a_2=a_1$. This is due to factors $(a_5-a_6)$ and~$(a_7-a_8)$ in~\eqref{st}, and the following lemma shows that they are not accidental.

\begin{lemma}
\label{lemma:thin}
If~$\partial M$ has exactly one connected component\footnote{most likely, this condition is superfluous for the lemma}, a triangulation of~$\partial M$ is such that there are two edges $p$ and~$q$ forming a circle, and this circle is contractible into a point within~$M$, then the factor $(a_p-a_q)$ can be singled out in~$\mathbf I_M$.
\end{lemma}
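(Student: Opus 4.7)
The plan has two parts. First, I observe that any Grassmann polynomial $f$ in the variables $a_p,a_q$ (and others) has the form $A+Ba_p+Ca_q+Da_pa_q$, where $A,B,C,D$ do not involve $a_p$ or $a_q$; the requirement $f|_{a_p=a_q}=0$ forces $A=0$ and $C=-B$, whence $f=(B-Da_p)(a_p-a_q)$ is manifestly divisible by $(a_p-a_q)$. Hence it suffices to prove that $\mathbf{I}_M|_{a_p=a_q}=0$.

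For this I use the topological hypothesis. By the loop theorem, the null-homotopic circle $p\cup q\subset\partial M$ bounds a properly embedded disk $D\subset M$. By theorem~\ref{th:invD}, any interior triangulation compatible with the given boundary may be used, so I choose one containing $D$ as a simplicial subcomplex in its minimal form: a single new interior vertex $v$ and two new interior edges $iv$ and $jv$, so that $D$ is the union of two interior triangles $ivj$, one with $p$ and the other with $q$ as its third edge; here $i,j$ are the common endpoints of $p$ and $q$.

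Using the boundary Pachner move lemmas of section~\ref{sec:boundary_change}, I then adjust $\partial M$ if needed so that $p$ and $q$ jointly bound a two-triangle bigon on $\partial M$. Together with $D$, this bigon cobounds a small triangulated 3-ball $B\subset M$. By the gluing formula~\eqref{gluing_gf}, $\mathbf{I}_M$ factors as a Berezin convolution of $\mathbf{I}_{M\setminus B}$ with $\mathbf{I}_B$, in which the variables $a_p$ and $a_q$ do not appear in the Berezin integration, so it is enough to verify that $\mathbf{I}_B|_{a_p=a_q}=0$.

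The final and most delicate step is the computation for the specific 3-ball $B$. It involves a product of four tetrahedron functions~\eqref{fbf}, corresponding to the four tetrahedra adjacent to $D$ (two on each side), integrated over the Grassmann generators of the interior edges $iv$ and $jv$ (and any other inner edges introduced in the preparation). I expect the vanishing to arise from a symmetry of $B$ exchanging the two sides of $D$ and hence swapping $p\leftrightarrow q$; the main obstacle lies in the careful accounting of the signs produced by Grassmann anticommutation under this symmetry, so as to recover the required antisymmetry in $a_p$ and $a_q$ rather than mere symmetry, which would yield the wrong factor $(a_p+a_q)$.
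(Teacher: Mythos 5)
Your opening reduction is fine: an element of the Grassmann algebra is divisible by $(a_p-a_q)$ exactly when it vanishes under the substitution $a_p\mapsto a_q$, so it would indeed suffice to prove $\mathbf I_M|_{a_p=a_q}=0$. But the geometric core of your argument breaks down. The step ``adjust $\partial M$ if needed so that $p$ and $q$ jointly bound a two-triangle bigon on $\partial M$'' is impossible precisely in the interesting case: the edges $p$ and $q$ are fixed, boundary Pachner moves cannot change the isotopy class of the circle $p\cup q$ on the surface $\partial M$, and when that circle is essential in $\partial M$ (e.g.\ the meridian $a_5,a_6$ of the solid torus of subsection~\ref{subsec:st}, which is the motivating example) it bounds no disk in $\partial M$ at all, so no bigon and no ball $B$ with $\partial B=\mathrm{bigon}\cup D$ exist. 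What a compressing disk $D$ from Dehn's lemma actually gives you is a neighbourhood $D\times[-1,1]$ meeting $\partial M$ in an \emph{annulus}, and cutting along $D$ is a gluing along a disk that is only part of a boundary component; theorem~\ref{th:gluing_gf} and formula~\eqref{gluing_gf} apply only to gluing along an \emph{entire closed} boundary component $\Gamma$ (with its maximal tree of triangles, $\ge 4$ vertices, etc.), so invoking them here is unjustified --- this is exactly the kind of ``partial'' gluing the paper does not provide. Finally, even granting the geometry, your last step --- the vanishing of $\mathbf I_B|_{a_p=a_q}$ via a side-swapping symmetry with the correct signs --- is only stated as an expectation, not proved, so the argument is incomplete on its own terms. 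The unproved claim that $D$ can be realized as a subcomplex with a single interior vertex, while keeping the prescribed boundary triangulation, is a further loose end on which your final computation depends.

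For comparison, the paper's proof is much more economical and stays entirely inside the state-sum formalism: one contracts the circle $p\cup q$ into the single edge $p$, obtaining a (singular) object $M'$, and recovers $M$ by gluing back a ``pillow'' of two tetrahedra $a$ and $b$ ($a$ attached by two faces to the two triangles adjoining $p$, and $b$ to the two remaining faces of $a$). The state sum~\eqref{s} of this two-tetrahedron pillow alone is computed to be $\zeta_{ij}^2(a_p-a_q)$, where $i,j$ are the common endpoints of $p$ and $q$, and since the full state sum is a product over tetrahedra (integrated over inner edges), lemma~\ref{lemma:t_i} lets one conclude that this factor survives in $\mathbf I_M$. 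No compressing disk, cutting, or gluing theorem is needed. If you want to pursue your route, you would first have to formulate and prove a gluing statement for gluing along a proper subsurface of a boundary component, which is substantially beyond what the paper establishes.
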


\begin{proof}
Contract the circle of edges $p$ and~$q$ into the single edge~$p$. Manifold~$M$ will thus become singular in the neighborhood of~$p$; nevertheless, we can consider its state sum~\eqref{s} for this singular manifold~$M'$. To return back to~$M$, we can glue to~$M'$ two tetrahedra $a$ and~$b$ in such way that $a$~is glued by two of its faces to two triangles adjoining~$p$, while~$b$ --- to the two remaining faces of~$a$.

If now we calculate first the state sum just for the two tetrahedra $a$ and~$b$ glued together this way, we find that it is $\zeta_{ij}^2 (a_p-a_q)$, where $i$ and~$j$ are the ends of both $p$ and~$q$. To finish the proof, it remains to use lemma~\ref{lemma:t_i}.
\end{proof}

\subsection{Solid pretzel}
\label{subsec:pretzel}

Solid pretzel can be obtained, for instance, by gluing two solid tori of subsection~\ref{subsec:st} over one boundary triangle. Thus, the state sum for the solid pretzel is just the product of two state sums for tori --- \eqref{st} without the factor~$1/2$, with the three Grassmann variables at the edges forming the boundary of the mentioned triangle identified.

One can check, in the same way as in subsection~\ref{subsec:st}, that this state sum is also efficient enough to distinguish between the contractible circles in the boundary of solid torus and, say, its parallels (the parallels of the glued tori).

\subsection{$S^3$ without tubular neighborhoods of two unknots: unlinked and linked}

In the case of two unlinked unknots, this manifold is homeomorphic to the connected sum of two solid tori. Its generating function of invariants is, according to theorem~\ref{th:cs}, the product of two expressions~\eqref{st} for tori, but this time with no identification of variables. We can thus again single out the meridians of the mentioned tori in the same way as in subsections \ref{subsec:st} and~\ref{subsec:pretzel}.

On the other hand, $S^3$ without tubular neighborhoods of two \emph{linked} unknots is homeomorphic to $T^2 \times I$, where $T^2$ is the two-dimensional torus, and $I=[0,1]$. In this case, obviously, no special ``meridian'' can be indicated in any way. In particular, this is reflected in our generating function, which is thus different from the case of unlinked unknots. We do not write out here the quite cumbersome expression for this function.

\subsection{Lens spaces without tubular neighborhoods of unknots}

There exist also very interesting manifolds with toric boundary --- lens spaces without tubular neighborhoods of unknots --- where we were able to calculate at least some invariants --- components of our generating function. The results look very nontrivial and need further investigation. We refer the reader to~\cite[Subsection~6.2]{kkm} for some explicit formulas.

\section{Discussion}
\label{sec:discussion}

\subsection{Renormalization and chain complexes}

As we noted in subsection~\ref{subsec:renorm}, the ``na\"{\i}ve'' state-sum invariant~\eqref{s} turns in many cases into zero --- in other words, becomes infinitely small --- and needs a renormalization. It this paper, we performed this renormalization by means of introducing new variables, united into an algebraic (acyclic in many cases) complex. In physics, such new variables may correspond to new physical entities.

An interesting question is: can algebraic complexes be of use in other cases when a renormalization is needed in a physical theory?

\subsection{Less simple models}

What we have considered in this paper is a ``scalar model'' in the sense that scalar --- complex --- quantities were assigned to tetrahedra and vertices. There exist, however, models where elements of an associative algebra, e.g., matrices, play similar roles. Our next aim is to investigate such models, which can be called, due to the noncommutativity of matrix algebras, ``more quantum'' than the one considered in this paper.

One more intriguing area is to study such models over finite fields.

\subsection{Multidimensional generalizations}

An attractive feature of our theory is that it is not limited to three-dimensional manifolds. For instance, the generalization of (a solution to) pentagon equation onto four dimensions must correspond to the Pachner move $3\to 3$, and it does not make much difficulty to write such algebraic relations, again it terms of anticommuting variables, starting, e.g., from formulas already written in \cite{33} or~\cite{kiev}. We plan to present many such relations in our further works.

\begin {thebibliography}{99}

\bibitem{atiyah}
M. Atiyah, Topological quantum field theories, Publications Math\'ematiques de l'IH\'ES, \textbf{68} (1988), 175--186.

\bibitem{atiyah-book}
M. Atiyah, The geometry and physics of knots, Cambridge Univ. Press, Cambridge (1990).

\bibitem{B} F.A.~Berezin, Introduction to superanalysis. Mathematical Physics and Applied Mathematics, vol.~9, D.~Reidel Publishing Company, Dordrecht, 1987.

\bibitem{dkm}
J. Dubois, I.G. Korepanov, E.V. Martyushev, A finite-dimensional TQFT: invariant of framed knots in manifolds, arXiv:math/0605164v3 (2009).

\bibitem{kkm} 
R.M. Kashaev, I.G. Korepanov, E.V. Martyushev, A finite-dimensional TQFT for three-manifolds based on group $\mathrm{PSL}(2,\mathbb C)$ and cross-ratios, arXiv:0809.4239v1 (2008).

\bibitem{33}
I.G. Korepanov, Euclidean 4-simplices and invariants of four-dimensional manifolds: I.~Moves $3\to 3$, Theor. Math. Phys. \textbf{131}
(2002), 765--774.

\bibitem{kiev}
I.G. Korepanov, Pachner move $3\to 3$ and affine volume-preserving geometry in $\mathbb R^3$, SIGMA, vol.~1 (2005), paper~021, 7~pages.

\bibitem{tqft}
I.G. Korepanov, Geometric torsions and invariants of manifolds with a triangulated boundary, Theor. Math. Phys., vol.~158 (2009), 82--95.

\bibitem{tqft2}
I.G. Korepanov, Geometric torsions and an Atiyah-style topological field theory, Theor. Math. Phys., vol.~158 (2009), 344--354.

\bibitem{pachner_moves}
W.B.R. Lickorish, Simplicial moves on complexes and manifolds, Geom. Topol. Monographs \textbf{2} (1999), 299--320.

\bibitem{PL_and_Top}
E. Moise, Affine structures in 3-manifolds, V, Ann. of Math., \textbf{56} (1952), 96--114.

\bibitem{turaev}
V.G. Turaev, Introduction to combinatorial torsions, Boston: Birkh\"auser, 2000.

\end{thebibliography}

\end{document}